\newcommand{\dtcolornote}[3][]{}
\providecommand{\bd}{broker-dealer\xspace}
\providecommand{\smc}{smart-contract\xspace}
\providecommand{\idt}{Inter-dealer trade\xspace}
\providecommand{\fl}{first-leg\xspace}
\providecommand{\sel}{second-leg\xspace}
\providecommand{\jpm}{joint profit maximization\xspace} 
\providecommand{\bs}{balance-sheet\xspace}
\providecommand{\g}{$T$\xspace}
\providecommand{\m}{$M$\xspace}
\providecommand{\mm}{$MM$\xspace}
\providecommand{\rmm}{$RM$\xspace}
\definecolor{rune}{HTML}{4A6672}
\DeclareMathAlphabet{\mathpzc}{OT1}{pzc}{f}{it}
\theoremstyle{plain}
\definecolor{rune}{HTML}{4A6672}
\newtheorem{proposition}{Proposition}
\newtheorem*{proposition*}{Proposition 1}
\newtheorem*{proposition**}{Proposition 2}
\newtheorem*{proposition***}{Proposition 3}
\newtheorem*{proposition****}{Proposition 4}
\newtheorem*{proposition*****}{Proposition 5}
\newtheorem*{proposition******}{Proposition 6}
\newtheorem*{proposition*******}{Proposition 7}
\newtheorem{assumption}{Assumption}
\newtheorem{theorem}{Theorem}
\newtheorem*{corollary*}{Corollary 1}
\newtheorem*{corollary**}{Corollary 2}
\newtheorem*{corollary***}{Corollary 3}
\newtheorem{lemma}[theorem]{Lemma}
\newtheorem{definition}{Definition}
\newtheorem*{theorem*}{Theorem 1}
\newtheorem*{theorem**}{Theorem 2}
\newtheorem*{theorem***}{Theorem 3}
\newtheorem*{theorem****}{Theorem 4}
\newtheorem*{theorem*****}{Theorem 5}
\definecolor{main}{HTML}{000000}    
\definecolor{sub}{HTML}{FFFFFF}     
\newtcolorbox{boxH}{
    colback = sub, 
    colframe = main, 
    boxrule = 0pt, 
    leftrule = 6pt 
}
\providecommand{\keywords}[1]
{
  \small	
  \textbf{\textit{Keywords---}} #1
}
\let\cite\citeyear
\title{A Smart-Contract to Resolve Multiple Equilibrium in an Intermediated Repo Trade}
\author{Daniel Aronoff\thanks{Research Affiliate, MIT Department of Economics} and Robert Townsend\thanks{Elizabeth \& James Killian Professor of Economics, MIT Department of Economics} }
\date{April 18, 2026}
\providecommand{\bd}{broker-dealer\xspace}
\providecommand{\smc}{smart-contract\xspace}
\providecommand{\idt}{Inter-dealer trade\xspace}
\providecommand{\fl}{first-leg\xspace}
\providecommand{\sel}{second-leg\xspace}
\providecommand{\jpm}{joint profit maximization\xspace} 
\providecommand{\bs}{balance-sheet\xspace}
\providecommand{\g}{$T$\xspace}
\providecommand{\m}{$M$\xspace}
\providecommand{\mm}{$MM$\xspace}
\providecommand{\rmm}{$RM$\xspace}
\definecolor{rune}{HTML}{4A6672}
\DeclareMathOperator*{\argmax}{argmax}
\theoremstyle{plain}
\definecolor{rune}{HTML}{4A6672}
\definecolor{main}{HTML}{000000}    
\definecolor{sub}{HTML}{FFFFFF}     
\begin{document}

\maketitle
\thispagestyle{empty}
\begin{abstract}

We construct an empirically founded model of a repo trade intermediated by two broker-dealers and prove multiple equilibrium and the existence  of equilibrium at the joint profit maximizing volume of trade. We then present a smart contract that resolves multiple equilibrium by requiring each broker-dealer to report its client schedule and its minimum hurdle spread, and implementing a selection rule that filters out hurdle-infeasible outcomes. Whenever there exists an equilibrium that exceeds both hurdle spreads, the protocol selects the joint profit maximizing feasible trade and thereby avoids a collapse to no trade. The smart contract is a machine executed algorithm which eliminates the need for trust. Hardware and cryptography are used to prevent leakage of broker-dealer client trade schedules, and to enable privacy-protected auditing with zero-knowledge proofs of the integrity of computations. The outcome can be implemented by a myopic strategy where a broker-dealer truthfully reports its own variables without anticipating its counterparty’s reports. This minimizes cognitive and computational complexity, thereby making our smart contract suitable for real-world deployment.


\end{abstract}


\keywords{smart contracts, multiple equilibrium, privacy, market design, coordination}

\bigskip
\noindent\textit{JEL:} D47, D82, G23, L86, C72.\\
\noindent\textit{ACM CCS:} Security and privacy $\rightarrow$ Distributed systems security; Theory of computation $\rightarrow$ Algorithmic mechanism design; Applied computing $\rightarrow$ Electronic commerce; Theory of computation $\rightarrow$ Algorithmic game theory; Computer systems organization $\rightarrow$ Distributed architectures.\\
\noindent\textit{MSC 2020:} 91A80; 91B26; 68M14; 94A60; 91B54.

\pagebreak

\tableofcontents
\pagebreak

\newpage
\pagenumbering{arabic}
\onehalfspacing

\section{Introduction}
\label{sec:Introduction}

Unreliable information disclosure, limited commitment, and coordination failures are persistent obstacles to exchange in intermediated markets. In textbook mechanism design, these obstacles can often be mitigated by a central planner that collects messages, selectively discloses information, and assigns allocations. In practice, however, market participants are reluctant to delegate sensitive trading information to a trusted planner, and they are concerned about how that planner might use, leak, or be compelled to reveal the information. This paper shows how a coordination problem in an intermediated trade can be resolved by a self-executing mechanism that plays the role of the planner while minimizing the need for trust and preserving privacy.

We study a chain of exchange in which trade between the ultimate counterparties is intermediated by two broker-dealers. The motivating application is a repo transaction in the U.S.\ Treasury market. A repo is an exchange of cash for a security today with an agreement to reverse the exchange at a later date at a pre-specified repurchase price. In the environment we model, a cash lender and a cash borrower (the clients) each transact with their own broker-dealer, and the two broker-dealers transact with each other in the interdealer market. Each broker-dealer sets client-facing terms and an interdealer term, subject to feasibility and to internal constraints that reflect balance-sheet costs. The resulting trade volume is pinned down by the intersection of the clients' schedules and the interdealer terms that transmit those schedules across the chain.

The economic friction is that this transmission can admit multiple self-consistent outcomes. We show that the intermediated repo game can have a continuum of equilibria: distinct combinations of interdealer and client-facing terms that clear the chain and satisfy best responses. Multiplicity persists both when each broker-dealer observes the counterparty's client schedule and when such information is not observed (Theorems 2 and 5). A practical consequence is that equilibrium selection can be economically consequential. In particular, when broker-dealers face binding balance-sheet costs, they require a minimum return on the spread between what they pay one client and charge the other. A low-spread equilibrium may fail to satisfy this hurdle, in which case the trade is effectively curtailed despite the existence of other equilibria that exceed the hurdle spread of both \bd{s}. 

Post-financial crisis bank leverage regulation, such as the Supplementary Leverage Ratio, and  FASB accounting reforms increase the leverage impact of repo intermediation raise the shadow cost of balance-sheet usage and increase the minimum  spreads at which intermediation is profitable. Multiple equilibrium adds equilibrium-selection risk: when balance sheet must be allocated ex ante, dealers cannot condition capacity on which equilibrium will be selected.  Consequently, the ex-ante risk of infeasible equilibrium may cause banks to reduce capital allocations to their affiliate \bd{s}. We characterize the regulator’s objective as ensuring that trade occurs whenever there is a feasible equilibrium. Our contribution is to isolate and remove the equilibrium- selection component of the capacity shortfall.

To address the coordination problem created by multiplicity, we propose a smart-contract protocol that selects a particular allocation from the equilibrium set. The protocol receives from each broker-dealer (i) its client schedule and (ii) its minimum acceptable spread, and it computes the constrained joint-profit-maximizing trade over the set of hurdle-feasible allocations. The protocol does not generally maximize traded volume; instead, it commits the market to a feasible equilibrium-selection rule that selects the \bd \jpm volume whenever the hurdle-feasible set is nonempty, thereby ruling out a collapse to no trade driven by equilibrium-selection uncertainty. We provide an implementation result under minimal assumptions: truthful reporting of schedules and minimum spreads by \bd{s} is a myopic best response that constitutes a subgame-perfect equilibrium of the smart-contract game (Theorem 3). The mechanism can be used without requiring broker-dealers to form beliefs about how the counterparty will report. The protocol therefore resolves equilibrium selection while also reducing cognitive and strategic complexity for participants.

This paper makes three contributions. First, we develop an empirically disciplined model of intermediated repo trade in which two client-facing broker-dealers interact through an interdealer market, and we characterize how the chain structure can generate multiple equilibria. Second, we construct and characterize a constrained joint-profit-maximizing allocation and show that it can be implemented through a simple reporting game that selects a unique outcome from the equilibrium set, both under full information and under asymmetric information about clients' schedules. Third, we provide a trust-minimizing implementation blueprint: the selection rule can be executed as code, and the reporting and computation can be structured to protect the privacy of client schedules and to support verification of correct execution through privacy-preserving auditing technologies. In combination, these results connect a standard equilibrium-selection problem in intermediated markets to a concrete mechanism that is implementable in institutional settings.

The practical significance of the mechanism is tied to the post-crisis scarcity of dealer balance-sheet capacity in the U.S.\ Treasury and repo markets. Repo intermediation consumes balance sheet and is traditionally a low-margin matched-book activity. Since the 2008 financial crisis, regulatory reforms such as the Supplementary Leverage Ratio have raised the shadow cost of balance-sheet usage and strengthened incentives to ration internal capital toward higher-yield activities. In stressed episodes, constrained dealer capacity can impair liquidity provision and disrupt the transmission of monetary policy, as illustrated by the Treasury market turmoil of March 2020. In such environments, the difference between equilibria is not merely a theoretical curiosity: equilibrium selection can determine whether intermediated trade occurs at all when minimum-return hurdles bind.

From a regulatory standpoint, the first-best outcome would maximize intermediated repo volume subject to dealer participation under leverage constraints. Our protocol is a second-best infrastructure change: it does not relax leverage regulation or subsidize balance-sheet usage, but it increases effective intermediation capacity by eliminating equilibrium-selection risk and guaranteeing feasibility whenever feasible intermediation exists.\footnote{Notably, our \smc does not guarantee the selection of the maximin volume of feasible trade. This would be the preferred regulatory objective, but we have not found an incentive mechanism to implement it.} By coordinating broker-dealers on the constrained joint-profit-maximizing allocation, the proposed protocol increases the profitability of intermediation, which can make repo activity a more attractive use of scarce internal capital and thereby expand intermediation capacity at the margin.

\paragraph{Related literature.}
This paper builds on three related strands of literature that jointly inform our model of intermediated repo trade and the smart-contract implementation. The first concerns theoretical models of repo market intermediation. Duffie \cite{Duffie1996}, Huh and Infante \cite{HuhInfante2020}  and Gottardi et.al. \cite{GottardiMonnet2017} show how dealer balance-sheet constraints, collateral scarcity, and rehypothecation collectively determine repo rates and market liquidity. Their frameworks provide a  microfoundation for the interdealer structure and hurdle-rate constraint which are resonant with our model. The second strand addresses coordination failures and multiplicity of equilibria in financial markets. Morris and Shin \cite{MorrisShin1998}, Martin, Skeie, and von Thadden \cite{MartinSkeieThadden2014}, and Cole and Kehoe \cite{ColeKehoe2000} build models that resolve self-fulfilling runs or multiple equilibria through information structure or institutional design. We design a \smc that resolves multiple equilibrium and selects the equilibrium while being agnostic toward the information structure.  The final strand examines  how \smc{s} and blockchain technologies can implement financial market coordination. Cong and He \cite{CongHe2019}, Chiu and Koeppl \cite{ChiuKoeppl2019} and Townsend and Zhang \cite{Townsend2023} demonstrate that programmable contracts can replicate trust and enforcement functions traditionally performed by intermediaries, offering a theoretical basis for our algorithmic mechanism that selects the joint profit–maximizing equilibrium while preserving privacy and auditability.

\paragraph{Roadmap}
The remainder of the paper is organized as follows. Section~2 collects the empirical and theoretical elements that motivate the model. Section~3 presents the intermediated repo game and the broker-dealers' objectives. Section~4 defines equilibrium and establishes existence and multiplicity. Section~5 characterizes joint profit maximization and the target allocation. Section~6 extends the analysis to asymmetric information. Section~7 introduces balance-sheet costs and the minimum-spread restriction. Section~8 presents the smart-contract protocol and proves the implementation results, including integrity and privacy-preserving auditability. Section~9 concludes.

\section{Elements of The Model}
\label{sec:Elements_of_the_model}

\subsection{Motivation}

In an over-the-counter (OTC) and other financial markets, trade between buyers and sellers is intermediated by broker-dealers\footnote{In an over-the-counter (``OTC'') market agents engage in bilateral exchanges, as opposed to a centralized exchange where e.g. agents submit limit orders from which a single market clearing price is determined.}. Intermediaries trade with their buyer/seller clients and with each other. The largest trading market in the world, the US Treasuries repo market is intermediated. A repo trade involves two contracts entered into simultaneously whereby one counterparty initially purchases the financial asset (the ``first-leg'') and commits to repurchase the same financial asset at a later date (the ``second-leg'').\footnote{In a repo trade the volume of financial asset is the same at both legs. The money exchanged can differ.} The estimated daily volume of US Treasuries repo is \$12 Trillion (Hempel et.al. \cite{HempelKahnShephard2025}). The market is intermediated by broker-dealers who conduit Treasuries from repo lenders - typically hedge funds, insurance companies and pension funds, to repo lenders, typically money market funds (Kahn et.al. \cite{Kahn2021}).

The revealed preference of this form of market organization suggest it is useful. However, intermediated trade creates a potential coordination problem which can result in inefficiently low trading due to the presence of multiple equilibrium. Multiple equilibrium is caused by a negative externality from the client transaction volume entered into by broker-dealers, which determines the volume of trade they will desire to offset in the interdealer market (e.g. if a broker-dealer purchases $x$ units of the financial asset from its client, it will desire to sell $x$ units to another broker-dealer). A low volume outcome can occur when a broker-dealer anticipates that its interdealer counterparty will arrive with a low volume of money or financial asset to trade and reduces the volume of trade with its client. In reaction the counterparty adjusts downward its client trade volume and so forth, since neither one wishes to carry excess inventory. This can lead to a self-reinforcing low-trade equilibrium.

\subsection{Model elements}

In this section we state the key elements of our model and point to empirical theoretical studies that motivate and support our design choices.

\subsection*{A repo trade}
\label{subsec:repo_trade}

\begin{boxH}
\textbf{Elements}: 

A repo trade consists of 2 transactions involving the sale of a financial asset (which we denote $T$), entered into simultaneously,  that occur sequentially in time, the first-leg and the second-leg.
\[r = (p_{2} - p_{1})/p_{1}\]
$r$ is the repo rate, $p_{1}$ is unit the price of the financial asset at the first-leg and $p_{2}$ is the unit price at the second-leg. We normalize $p_{1}=1$.
\end{boxH}
When excess collateral is provided by the first-leg seller the financial asset can be expressed as $T = T_{asset} + h$, where $h$ is the ``haircut''. The decomposition is inessential to our model. 

\subsection*{Repo chains}
\label{subsec:Repo chains}

\begin{boxH}
\textbf{Elements}: 

2 Clients; ultimate repo lender \mm and ultimate repo borrower \rmm. 2 intermediary \bd{s}, $BD_{mm}$ and $BD_{rm}$. 

2 Traded objects; money \m and Treasuries securities \g.

2 Trading dates and contracts. First-leg, borrower sells \g to lender today. Second-leg, borrower repurchases \g from lender at a future date.
\end{boxH}

The setup is a chain of repo transactions composed as follows. There are 2 clients, a repo borrower who sells the financial asset, $T$, in exchange for money, $M$, at the first-leg and repurchases the at a later time called the second-leg. The repo borrower's supply function for the financial asset is $T = S_{r}$ and the repo lender's demand function for the financial asset is $T = D_{r}$.  There are 2 broker-dealers. Each client trades with a broker-dealer and each broker-dealer attempts to re-trade the financial object it acquired from its client with the counterparty broker-dealer (the "interdealer" trade); e.g at the first-leg the repo borrower's broker-dealer purchases the financial asset from its client and sells it to the counterparty broker-dealer. At the second-leg, that broker-dealer re-purchases the financial asset from the counterparty broker-dealer and re-sells it to its client. Figure \ref{fig:FirstSecond-Leg-Repo-Chain} depicts the repo chain formed by these trades. At the \fl $T$ flows left to right, from the repo borrower $RM$ - typically a hedge fund or a ``risk manager'' -  to repo lender $MM$ - typically a money market fund or ``cash manager'' (Pozsar \cite{Pozsar2014}). In between are the broker-dealers, and $M$ flows in the opposite direction. The flows reverse at the \sel.

\begin{figure}[H]
\center
{\footnotesize{
$MM \underset{\text{$BD_{h}$ sale of $T$ to $MM$}}{\underbrace{
   \begin{aligned}
        T & \longleftarrow\\
        \longrightarrow & M
    \end{aligned}
}} BD_{mm}\underset{\text{interdealer sale of $T$}}{\underbrace{
    \begin{aligned}
        T & \longleftarrow\\
        \longrightarrow & M
    \end{aligned}
}} BD_{rm}\underset{\text{$MM$ sale of $T$ to $BD_{j}$}}{\underbrace{
    \begin{aligned}
        T & \longleftarrow\\
        \longrightarrow & M
    \end{aligned}
    }} RM$
}}
\end{figure}

\begin{figure}[H]
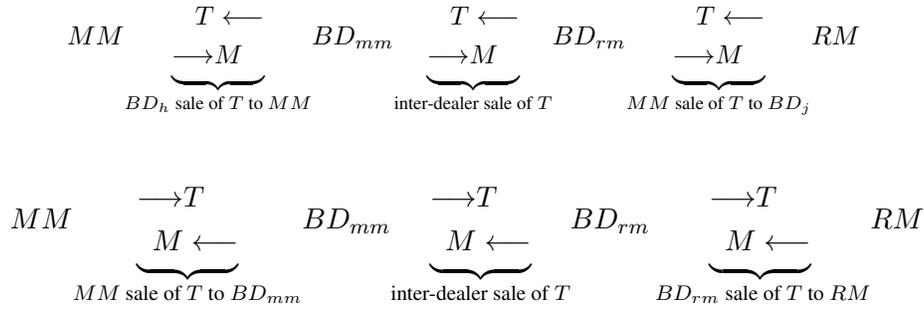

\center
{\small{
$MM \underset{\text{$MM$ sale of $T$ to $BD_{mm}$}}{\underbrace{
   \begin{aligned}
        \longrightarrow & T\\
        M &\longleftarrow
    \end{aligned}
}} BD_{mm}\underset{\text{interdealer sale of $T$}}{\underbrace{
    \begin{aligned}
        \longrightarrow & T\\
        M & \longleftarrow
    \end{aligned}
}}BD_{rm}\underset{\text{ $BD_{rm}$ sale of $T$ to $RM$}}{\underbrace{
    \begin{aligned}
        \longrightarrow & T\\
        M & \longleftarrow
    \end{aligned}
    }} RM$
\caption{First and Second-Leg Repo Chain}
\label{fig:FirstSecond-Leg-Repo-Chain}
}}
\end{figure}

Aronoff et.al. \cite{Aronoff2025OFR} decompose the US Treasuries repo network into chains and cycles of repo trades. The most prevalent configuration is the chain depicted in Figure \ref{fig:FirstSecond-Leg-Repo-Chain}.

\subsection*{Timing of repo trades}

\begin{boxH}
\textbf{Elements}:

Contracts for \fl and \sel transactions are entered into simultaneously at the \fl date.

Contracts between \bd{s} and its clients are committed to before interdealer contracts between \bd{s} are committed to.
\end{boxH}

Broker-dealers typically commit first to client-facing transactions and  subsequently offset these exposures through interdealer trades. Gardner and Huh \cite{GardnerHuh2024} document that ``DC-ID trades are instances in which customer trades are prearranged with offsetting interdealer trades,'' providing direct evidence that dealers arrange interdealer commitments only after securing client orders. Similarly, Huh and Infante \cite{HuhInfante2020} emphasize that ``after dealers receive client orders, dealers transact in the interdealer cash and repo market,'' underscoring the sequencing of commitments across markets. At the microstructure level, Macchiavelli and Pettit \cite{MacchiavelliPettit2019} explain that a broker-dealer that extends a loan to a hedge fund client ``typically rehypothecates some of these securities received as collateral in the repo markets,'' highlighting the practice of funding client trades ex post through wholesale interdealer borrowing. Consistent with these accounts, Hempel, Kahn, and Shephard \cite{HempelKahnShephard2025} show that large dealer-banks operate matched books with ``nearly offsetting repo and reverse repo positions,'' which arise precisely because client transactions are subsequently hedged through interdealer trades. Together, this evidence establishes that the sequence of commitments runs from client to dealer, then from dealer to interdealer, a timing structure that has important implications for market liquidity and risk transmission.

We model trading activity as taking place in two sequential steps. (1) clients and $BD$'s agree to trades. (2) after that $BD$'s trade together in a simultaneous move game. Figure \ref{fig:Repo-Bilateral-Timing-Current} displays the repo market segments and the timing of repo trades in our model. Broker-dealer intermediaries trade with each other in the interdealer market and with their clients in the other two markets (Kahn and Olsen \cite{Kahn2021}). Broker-dealers typically lend to hedge-funds in the bilateral uncleared market and borrow from  money market funds in the Tri-party market (Hempel et.al. \cite{Hempel2023} and Paddrik et.al. \cite{paddrik2021}).

\begin{figure}[H]
\begin{center}
\includegraphics[page=1,width=0.64\textwidth,height = 0.26
\textheight]{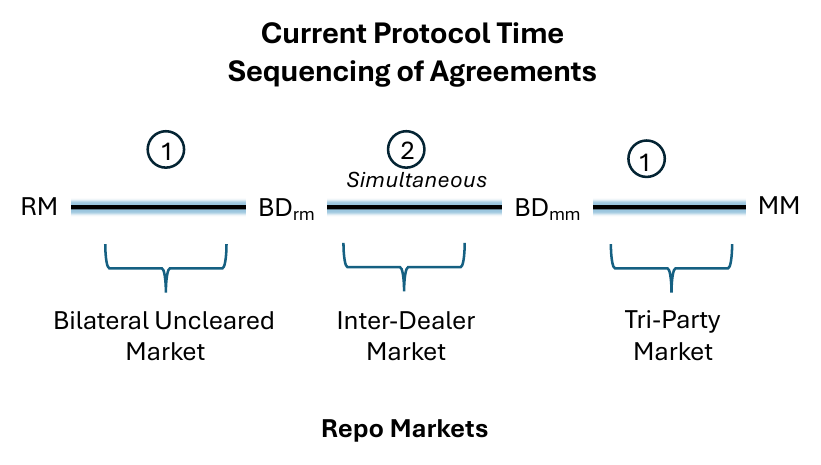}
\caption{\textit{\textbf{Timing under current protocol}. Sequential ordering by number. First, client trades are agreed to. Second, the interdealer trade is agreed to.}}
\label{fig:Repo-Bilateral-Timing-Current}
\end{center}
\end{figure}

\subsection*{Broker-dealer price-setting in the client market}

\begin{boxH}
\textbf{Elements}:
Broker-dealers are monopolist price-setters in the client markets.
\end{boxH}

We model the \bd as a price-setting monopolist in the client market, consistent with extensive empirical and theoretical research (Infante \cite{Infante2015Liquidity}, Avalos et al. \cite{Avalos2019RepoStress}, Eisenschmidt et al. \cite{Eisenschmidt2022SegmentedMarkets}, Duffie et al. \cite{Duffie2023DealerCapacity}, and Hempel et al. \cite{Hempel2024RepoIntermediation}). These studies provide evidence that broker-dealers often extract informational or relationship-based rents in client-facing trades. For example, Infante \cite{Infante2015Liquidity} models the dealer as a monopolist, illustrating how dealer pricing power can distort client outcomes. Avalos et al. \cite{Avalos2019RepoStress} note that U.S. repo markets currently ''rely heavily on [\bd affiliates of] four banks as marginal lenders'', implying that cash providers face limited counterparty options. Similarly, Eisenschmidt et al. \cite{Eisenschmidt2022SegmentedMarkets} find that ''dealers have market power over their customers, allowing them to charge OTC repo rates that differ from the interdealer rate,'' pointing to price discrimination across segments. A contributing factor enabling such behavior is the persistence of trading relationships: repeated interactions with hedge funds (Kruttli et al. \cite{Kruttli2022}, Financial Stability Board \cite{FSB2023}) and money market funds (Han et al. \cite{HAN2022}, Hempel et al. \cite{Hempel2023-MMF}) facilitate dealer market power and enable the exercise of monopolistic pricing in the client market.

\subsection*{Concavity of  money market fund demand and hedge fund supply}

\begin{boxH}
\textbf{Elements}:

The \mm demand function (for acquiring \g at the \fl) is denoted $D_{r_{mm},\theta_{mm}}$ where $r$ is the repo rate and $\theta_{mm}$ is a stochastic shock.

$D$ is concave increasing - and twice differentiable - in $r$.

$\theta_{mm}$ shifts $D$ monotonically.

The \rmm supply function (for selling \g at the \fl) is denoted $S_{r_{rm},\theta_{rm}}$ where $r$ is the repo rate and $\theta_{rm}$ is a stochastic shock.

$S$ is concave decreasing - and twice differentiable - in $r$.

$\theta_{rm}$ shifts $S$ monotonically.
\end{boxH}

A body of empirical and theoretical work supports the view that both hedge fund borrowing demand and money market fund (MMF) repo supply are concave in relevant regions. On the borrower side, He, Nagel, and Song \cite{HeNagelSong2022JFE} show that ``a simultaneous rise in reverse-repo rates charged by dealers reduces the attractiveness of a levered investment in Treasuries for (risk-averse) hedge funds,'' emphasizing that incremental demand falls disproportionately as funding costs increase. Relatedly, Banegas and Monin \cite{BanegasMonin2023FEDS} document that ``a 200 bp minimum haircut would reduce effective leverage from 56x to 25x,'' underscoring how small increases in margin requirements sharply reduce borrowing, consistent with a diminishing marginal willingness to lever. On the lender side, Huber \cite{Huber2023JFE} structurally estimates that MMFs exhibit ``aversion to portfolio concentration and preference for stable lending,'' leading to a supply schedule in which incremental willingness to lend to a given dealer diminishes with position size. Complementing this, Hempel and Monin \cite{HempelMonin2023FEDS} highlight that the ON RRP facility ``helps to set a floor on lending rates for MMFs,'' introducing a kink that flattens supply below the policy floor, while supply above the floor remains upward-sloping but with declining elasticity. Taken together, these findings provide both theoretical and empirical support for modeling hedge fund borrowing demand and MMF repo supply as concave functions.

\subsection*{Bargaining and surplus splitting in interdealer trades}

\begin{boxH}
\textbf{Elements}:

Broker-dealers split the surplus generated by their trade, where the surplus is the margin between the repo rate received from $RM$ and the repo rate paid to $MM$, multiplied by \g, the volume of trade between the \bd{s}.\footnote{$T$ is a monetary unit since the first-leg unit price is $p_{1} = 1$.}

\[Surplus\; = (r_{rm} - r_{mm})T\]
\end{boxH}

A literature models bilateral dealer-to-dealer transactions as the outcome of bargaining, with transaction prices reflecting a division of surplus rather than price-taking. In the canonical OTC search model of Duffie, G\^arleanu, and Pedersen \cite{DuffieGarleanuPedersen2005}, ``prices are set through a bargaining process that reflects each investor's or marketmaker's alternatives to immediate trade,'' and dealers ``capture part of the difference between the interdealer price and investors' reservation values.'' Building on this framework, Weill \cite{Weill2020NBER} shows that the generalized Nash bargaining solution yields prices that are ``a convex combination of the counterparty’s reservation value and the interdealer benchmark,'' making explicit the surplus-splitting logic of bilateral dealer trades. Empirical evidence corroborates these theoretical predictions: Hendershott, Li, Livdan, and Schürhoff \cite{HendershottLiLivdanSchurhoff2020} estimate that ``dealers' bargaining power on the buy and sell sides is large, at 98\% and 94\%,'' implying that dealers capture nearly all of the surplus in bilateral negotiations. Similarly, Feldhütter \cite{Feldhutter2012RFS} finds that dealers capture 97\% of trade surpluses in corporate bonds, while Pinter, Üslü, and Wijnandts \cite{PinterUsluWijnandts2025BIS} show that such bargaining dynamics extend to government bond markets as well. Taken together, these studies provide both theoretical and empirical support for modeling interdealer transactions as surplus-splitting bargains, in which the price between two broker-dealers divides the surplus generated by their respective client repo trades.

\subsection*{Inventory penalties}

Let $Q(r_{mm},r_{rm}) := \min\{D(r_{mm}),S(r_{rm})\}$ be the interdealer trade quantity and $r^{*} =$ the interest rate earned on unsold inventory (which we set = 0). Assume each broker-dealer incurs a penalty that depends only on its own unsold inventory:

\begin{boxH}
\textbf{Elements}:
\[gap(BD_{mm}) = G_{mm}\!\left(D(r_{mm})-Q(r_{mm},r_{rm})\right)\]
\[gap(BD_{rm}) = G_{rm}\!\left(S(r_{rm})-Q(r_{mm},r_{rm})\right),\]
where $G_{mm} = r_{mm} - r^{*}$; $G_{rm} = r_{rm} - r^{*}$, and 
$G_{mm},G_{rm}:\mathbb{R}_+\to\mathbb{R}_+$ satisfy $G_i(0)=0$ and are (weakly) increasing, and strictly increasing on $(0,\infty)$.
\end{boxH}

We model each broker--dealer as facing a cost of carrying an unmatched position when its client-side transaction cannot be fully offset in the interdealer market. This inventory-control motive is a core feature of dealer-market microstructure: Stoll  \cite{Stoll1978DealerServices} shows that dealers require compensation for providing immediacy while bearing inventory risk, and Ho et.al. \cite{HoStoll1981OptimalDealerPricing} formalize how optimal dealer quotes adjust endogenously to manage inventory under return uncertainty. When inventory must be financed, funding conditions interact with inventory exposure, so that tighter funding and higher margins raise the effective carrying cost and reduce liquidity provision (Brunnermeir and Pederson \cite{BrunnermeierPedersen2009MarketLiquidityFundingLiquidity}). In the specific context of repo intermediation, these inventory penalties arise even for matched-book dealers because client-facing repo trades are typically committed before offsetting interdealer transactions, exposing broker-dealers to interim cash or collateral imbalances that must be financed on balance sheet (Hemplel et.al. \cite{HempelKahnShephard2025}). We capture these effects in reduced form by allowing each broker--dealer to incur a penalty that is increasing in its own unsold inventory, scaled by the opportunity cost of funding relative to the outside return.

\subsection*{Minimum spread in repo balance-sheet allocation}
\label{subsec:Hurdlerates}

\begin{boxH}
\textbf{Elements}:

\[1/2(r_{rm} - r_{mm}) \geq \kappa\]
\end{boxH}

We assume that each broker--dealer requires a minimum spread between the repo rate charged to borrowers and the rate paid to lenders in order to allocate balance-sheet capacity to intermediation. This assumption is grounded in the post-crisis literature on balance-sheet constraints: Duffie \cite{Duffie1996} emphasizes that repo intermediation is a low-margin activity whose feasibility depends on the compensation dealers receive for deploying scarce balance sheet, while Anderson et.al. \cite{Andersen2019} show how funding and balance-sheet costs translate into required spreads even for low-risk, collateralized trades. Regulatory leverage constraints further raise these internal return thresholds: Duffie et.al. \cite{Duffie2023DealerCapacity} document that dealer balance-sheet scarcity materially limits Treasury and repo market intermediation, and Chabot et.al. \cite{Chabot2024} provide direct dealer-level evidence that intermediaries demand persistent positive spreads in repo markets as compensation for balance-sheet usage. Motivated by this evidence, we model a reduced-form hurdle rate that restricts intermediation to outcomes in which the equilibrium spread exceeds each broker--dealer’s internal minimum return requirement.



\section{A Model of Intermediated Repo Trade}
\label{sec:A Model of Intermediated Repo Trade}

In this section we model the trade between two \bd{s} depicted in Figure \ref{fig:FirstSecond-Leg-Repo-Chain}. $BD_{mm}$ has a client (or clients) $MM$ from which it has entered into a repo trade to borrow money in exchange for collateral. $BD_{rm}$ has a client (or clients) $RM$ from which it has entered into a reverse-repo trade to lend money in exchange for collateral. We use the model elements stated in Section \ref{sec:Elements_of_the_model} for trade chain structure, timing of trades, broker-dealer price-setting in the client market, concavity of client supply and demand functions and surplus-splitting between \bd{s}.\footnote{We impose a hurdle rate in Section \ref{subsec:minimum_spread}.}

\subsection{Broker-dealer objectives and information}

Our baseline model assumes common knowledge of client shocks. This enables us to suppress the $\theta$ variables. When each \bd observes the realization of its counterparty client's $\theta$, the analysis of equilibrium existence and multiplicity, which is our subject, uses only the concavity of client supply and demand functions. We relax the common knowledge restriction in Section \ref{sec:Equilibrium Under Asymmetric Information} and Appendix \ref{app:asymmetric_eqm}, where we show that the results we derive here for existence and multiplicity of equilibrium continue to hold when there is asymmetric information about counterparty client preferences and those preferences are subjected to shocks. In Appendix \ref{app:funding_liquidity} we relax the monopolistic price-setting assumption and allow \bd{s} to be market-makers who commit to trade minimum volumes with their clients.

\subsubsection*{The interdealer repo rate}
\label{subsec:The interdealer repo rate}

$BD_{mm}$ and $BD_{rm}$ negotiate a split of the intermediation surplus $(r_{rm} - r_{mm})T$. The surplus split is reflected in the interdealer repo rate $r_{bd}$. The share of the surplus allocated to $BD_{mm}$ is $(r_{bd} - r_{mm})T$, which is the spread between the rate at which $BD_{mm}$ borrows from $MM_{i_{i}}$ and lends to $BD_{rm}$. The share of the surplus allocated to $BD_{rm}$ is $(r_{rm} - r_{bd})T$, which is the spread between the rate at which $BD_{rm}$ borrows from $BD_{mm}$ and lends to $RM$. We are agnostic on the allocation of the surplus between $BD$'s or the bargaining process by which the allocation is determined. We posit an equal split for convenience, not necessity.\footnote{The only properties of $r_{bd}$ upon which our argument relies are that $\frac{\partial r_{bd}}{\partial r_{mm}} < 0$ and $\frac{\partial r_{bd}}{\partial r_{rm}} > 0$.}. The inter - dealer repo rate is
 
\begin{equation}
\label{eq:Inter - Dealer repo rate}
 r_{bd} = \frac{1}{2}(r_{rm} - r_{mm}) + r_{mm}  
 \end{equation}

\subsubsection*{Broker-dealer client pricing}
\label{subsec:brkoer-dealer-client-pricing}

Broker-dealers are price-setters in the client markets. For any volume of $T$ traded in the inter-dealer market, a \bd will set its client repo rate at the point the client demand or supply schedule intersects that volume. The reasoning is immediate. At each $T$ the $MM$ demand schedule lists the minimum repo rate at which it is willing to lend and the $RM$ supply schedule lists the maximum repo rate at which it is willing to borrow. From equation \ref{eq:Inter - Dealer repo rate} it can be seen that $BD_{mm}$ maximizes its profit by choosing the minimum $r_{mm}$ compatible with volume $T$ and $BD_{rm}$ maximizes its profit by choosing the maximum $r_{rm}$ compatible with volume $T$. 

\subsection{BD\textsubscript{mm}'s problem}
\label{subsec: $BD_{mm}$ problem_certainty}

$BD_{mm}$ sets its client repo rate $r_{mm}$ in the morning at $t_{1M}$. In order to form a rational expectation of the interdealer repo rate $r_{bd}$ and transaction volume that it will encounter afterwards, it must forecast $BD_{rm}$'s client repo rate (from which it can also forecast $BD_{rm}$'s client transaction volume) $r_{rm}$. equation \ref{eq: $BD_{mm}$ problem $r_{bd}$} is $BD_{mm}$'s problem, contingent on its forecast of $r_{rm}$.

\begin{equation}
\label{eq: $BD_{mm}$ problem $r_{bd}$}
\argmax_{r_{mm}}(r_{bd} - r_{mm})^{+}Q  - \underset{gap(BD_{mm})}{\underbrace{(D_{r_{mm}} - Q)^{+}r_{mm}}}
\end{equation}

Subject to $Q = \text{min}\{D_{r_{mm}}, S_{r_{rm}})\}$ i.e. the amount of the interdealer trade. $r_{mm} \in [0, r_{rm}]$ is the domain of $r_{mm}$\footnote{We assume $BD_{mm}$'s profit is non-negative, which requires $r_{mm}\leq r_{rm}$. We assume there is a lower bound repo rate at which $MM$'s demand for $T$ is zero. We normalize this rate to be $0$. This sets the bounds to the domain for $r_{mm}$.}.

We can substitute in equation \ref{eq:Inter - Dealer repo rate} and rewrite $BD_{ii}$'s problem as follows.

\begin{equation}
\label{eq:$BD_{mm}$ problem}
\argmax_{r_{mm}}\frac{1}{2}(r_{rm} - r_{mm})^{+}Q  - gap(BD_{mm})
\end{equation}

In its client market, $BD_{mm}$ conveys $T$ to $MM$ as collateral for a repo loan of $M$. $BD_{mm}$ carries the $M$ into the \idt. The first term of $BD_{mm}$'s problem is its profit from the interdealer repo trade. The second term, $gap(BD_{mm})$, represents $BD_{mm}$'s excess inventories; it is its loss incurred on borrowings of $M$ in its client market that $BD_{mm}$ is unable to trade with its $BD_{rm}$ counterparty.

To analyze $BD_{mm}$'s problem we partition it into two parts; (i) the case where $BD_{mm}$ carries excess inventories and (ii) the case where $BD_{mm}$ does not carry excess inventories. 

\subsubsection*{BD\textsubscript{mm} carries excess inventories of \textit{M} into the interdealer trade. gap\textsubscript{B,Q} \textgreater{}  0}
\label{subsec: $BD_{mm}$ carries excess inventories of $M$ into the interdealer trade. $gap(BD_{mm}) >0$}

Noting that $Q = S_{r_{rm}}$, $BD_{mm}$'s problem is

\begin{equation}
\argmax_{r_{mm}}\frac{1}{2}(r_{rm} - r_{mm})^{+}S_{r_{rm}} - gap(BD_{mm})
\end{equation}

The derivative of $BD_{mm}$'s problem is: $\underset{< 0}{\underbrace{-\frac{1}{2}S_{r_{mm}}}}\underset{< 0}{\underbrace{ - gap(BD_{mm})'}} < 0$\footnote{When $gap(BD_{mm})> 0$, then $\frac{\partial gap(BD_{mm})}{\partial r_{B,i}} := gap(BD_{mm})' = D'r_{mm} + (D_{r_{mm}} - Q) >0$, since $D^{'} > 0$.}. Therefore, $BD_{mm}$ will choose a corner solution, setting its client repo rate $r_{mm}$ to achieve $D_{r_{mm}} = S_{r_{rm}}$. This ensures $BD_{mm}$ will not carry any inventories it cannot sell into the \idt.

\subsubsection*{BD\textsubscript{mm} carries no excess inventories of \textit{M} into the \idt. gap\textsubscript{B,Q} = 0}

The domain of the problem is $r_{mm}\in [0, r_{rm}]$ such that $D_{r_{mm}} \leq S_{r_{rm}}$. It is a compact interval \footnote{Compactness is proved as follows. The derivative of $D_{r_{mm}}$ is strictly positive and continuous, therefore, by the Inverse Function Theorem, its inverse exists and is continuous (Rudin (1976) \cite{rudin} Theorem 9.24, p. 221.). The domain of the inverse function, $[0,S_{r_{rm}}]$, is compact. Therefore, the range of the inverse function, which is the domain of $BD_{mm}$'s problem, is a compact interval (Rudin (1976) \cite{rudin} Theorem 4.14, p. 89).}. $BD_{mm}$'s problem is

\begin{equation}
\label{eq: $BD_{mm}$ problem no excess inventory}
\argmax_{r_{mm}}\frac{1}{2}(r_{rm} - r_{mm})D_{r_{mm}} 
\end{equation}

$BD_{mm}$'s problem is a continuous concave function on the compact interval $r_{mm} \in [0, r_{rm}]$. The concavity reflects the tension between the value of increasing transaction amount $D_{r_{mm}}$ against the increased cost, $r_{mm}$, required to obtain it, which reduces $BD_{mm}$'s profit margin $\frac{1}{2}(r_{rm} - r_{mm})$\footnote{We simplify notation by setting $\frac{d D_{r_{mm}}}{d r_{mm}} := D^{'}$ and  $\frac{d^{2} D_{r_{mm}}}{(d r_{mm})^{2}} := D^{"}$. Similarly for $S_{r_{rm}}$.}.

\subsubsection*{Characterizing BD\textsubscript{mm}'s decision problem}

\begin{figure}[H]
\begin{center}
\includegraphics[page=1,width=0.55\textwidth,height = 0.25\textheight]{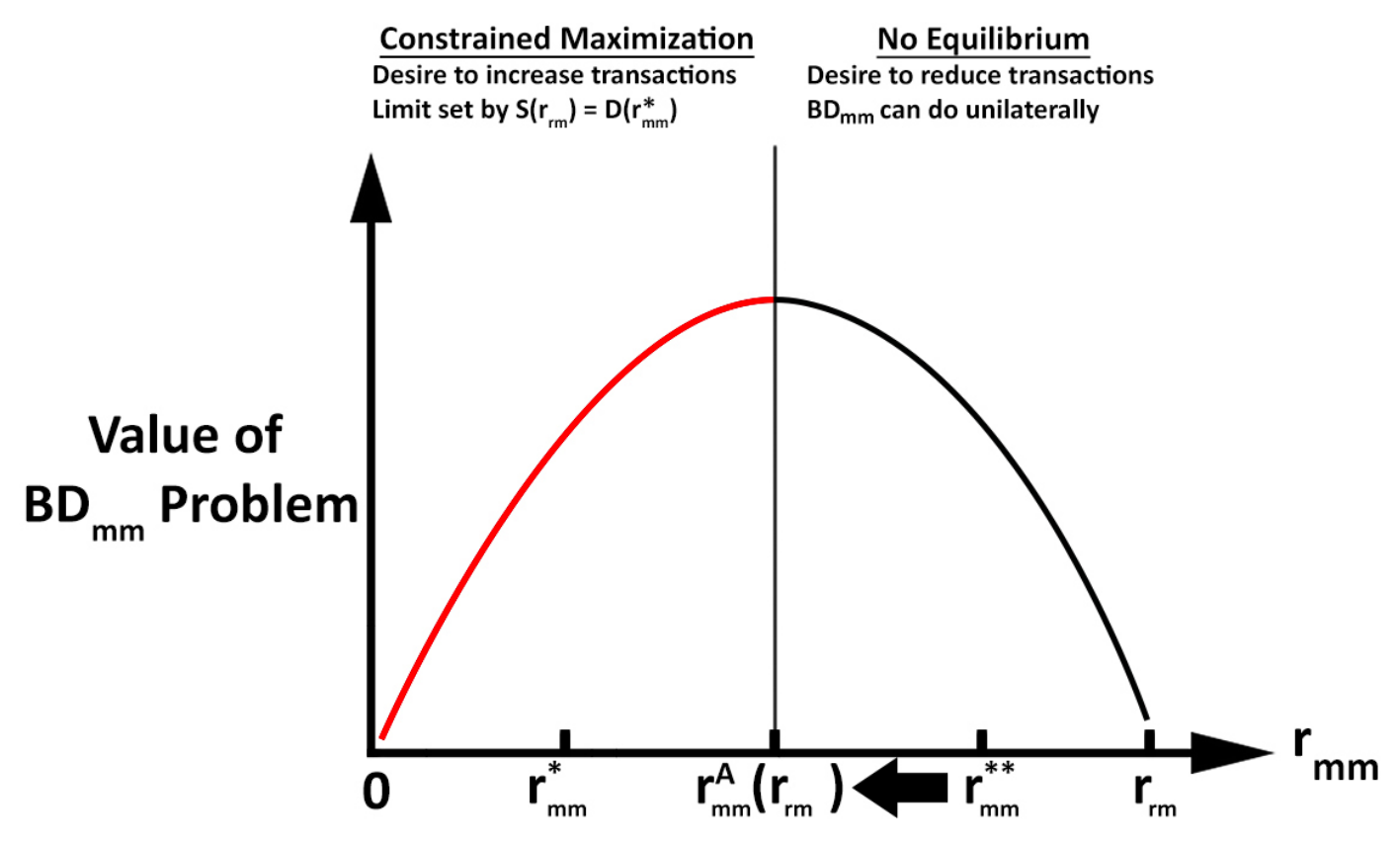}
\end{center}
\caption{$BD_{mm}$ Problem}
\label{fig:$BD_{mm}$ problem}
\end{figure}

Figure \ref{fig:$BD_{mm}$ problem} shows the curve defining the value of $BD_{mm}$'s problem at different repo rates for a given value of $r_{rm}$.

The value of $BD_{mm}$'s problem is zero at $r_{mm} = 0$ and at $r_{mm} = r_{rm}$. There are no transactions at the former and no marginal profit at the latter. The concavity of the problem ensures that it is single-peaked between these two endpoints. This peak is depicted in Figure \ref{fig:$BD_{mm}$ problem} as $r_{mm}^{A}(r_{rm})$. $BD_{mm}$ will choose the maximum repo rate $r_{mm}$ that (i) does not exceed $r_{mm}^{A}$ and (ii) does not cause $gap(BD_{mm}) > 0$. To see (i), note that $BD_{mm}$'s problem is negatively sloped to the right of $r_{mm}^{A}$ and it reduces inventory by reducing $r_{mm}$. Therefore, it will not reduce profit by reducing $r_{mm}$, since $gap_{r_{mm},Q}$ will shrink. (ii) reflects that $BD_{mm}$ will not carry excess inventory in equilibrium.

To gain further insight we evaluate $BD_{mm}$'s decision at repo rates on either side of the maximum. Suppose $r_{mm}^{*}$ in Figure \ref{fig:$BD_{mm}$ problem} is the repo rate at which $gap(BD_{mm}) = 0$. $BD_{mm}$ cannot increase $r_{mm}$, as doing so would cause $gap(BD_{rm}))$ to increase above zero. Since the value of $BD_{mm}$'s problem is increasing at $r_{mm}^{*}$, $BD_{mm}$ maximizes its profit by choosing $r_{mm}^{*}$, which we will call a constrained equilibrium repo rate for $BD_{mm}$\footnote{For $BD_{mm}$, any repo rate weakly below $r_{mm}^{A}(r_{rm})$ is a possible constrained outcome repo rate.}. By contrast, at $r_{mm}^{**}$ in Figure \ref{fig:$BD_{mm}$ problem} $BD_{mm}$'s problem is negatively sloped, so $BD_{mm}$ would reduce its repo rate (and its inventories) to attain the maximum value of its problem at $r_{mm}^{A}(r_{rm})$. We call this the unconstrained maximum repo rate for $BD_{mm}$.

\subsection{BD\textsubscript{rm}'s problem}
\label{subsec:$BD_{rm}$ problem_certainty}

$BD_{rm}$'s problem, in its timing and mathematical structure, is similar to $BD_{mm}$'s problem. equation \ref{eq: $BD_{rm}$ problem} is $BD_{rm}$'s problem, contingent on its forecast of $r_{mm}$.

\begin{equation}
\label{eq: $BD_{rm}$ problem}
\argmax_{r_{rm}}\frac{1}{2}(r_{rm} - r_{mm})^{+}Q - gap(BD_{rm})
\end{equation}

Subject to $Q = \text{min}\{D_{r_{mm}}, S_{r_{rm}}\}$. $r_{rm} \in [r_{mm}, r_{b}]$ is the domain of $r_{rm}$\footnote{$r_{b}$ is the upper bound of the domain of $r_{rm}$ because it is the bank lending rate at which the $RM$'s can borrow. The domain follows from the requirement that $BD_{rm}$'s profit is non-negative.}.

In its client market, $BD_{rm}$ conveys $M$ to $RM$ as  a repo loan secured by collateral of $T$. $BD_{rm}$ carries the $T$ into the \idt. The first argument of $BD_{rm}$'s problem is its profit from the interdealer repo trade. The second term, $gap(BD_{rm}))$, represents $BD_{rm}$'s excess inventories; is its loss incurred on borrowings of $M$ from a bank used to extend the loan to $RM$ in its client market, that it is unable to refinance in the \idt. 

To analyze $BD_{rm}$'s problem we partition it into two parts; (i) the case where $BD_{rm}$ carries excess inventories and (ii) the case where $BD_{rm}$ does not carry excess inventories. We appeal to the demonstration for $BD_{mm}$ in Section \ref{subsec: $BD_{mm}$ carries excess inventories of $M$ into the interdealer trade. $gap(BD_{mm}) >0$} as sufficient to show that $BD_{rm}$ carries no excess inventory in equilibrium. Proposition \ref{corr:no excess inventory in equilibrium} follows from the fact that neither $BD$ will carry excess inventory into the \idt. 

\subsubsection*{BD\textsubscript{rm} carries no excess inventories of \textit{T} into the interdealer trade. gap\textsubscript{S,Q} = 0}

The domain of the problem is $r_{rm}\in [r_{mm}, r_{b}]$ such that $S_{r_{rm}}\leq D_{r_{mm}}$. It is a compact interval\footnote{The proof of compactness is nearly identical to the proof for the case of $BD_{mm}$ above.}. $BD_{rm}$'s problem is 

\begin{equation}
\argmax_{r_{rm}}\frac{1}{2}(r_{rm} - r_{mm})S_{r_{rm}}
\end{equation}

$BD_{rm}$'s problem is a continuous concave function. The concavity reflects the tension between the value of increasing transaction amount $S(r_{rm})$ against the decreased return, $r_{rm}$, required to obtain it, which reduces profit margin $\frac{1}{2}(r_{rm} - r_{mm})$.  

\subsubsection*{Characterizing BD\textsubscript{rm}'s decision problem}

\begin{figure}[H]
\begin{center}
\includegraphics[page=1,width=0.55\textwidth,height = 0.25\textheight]{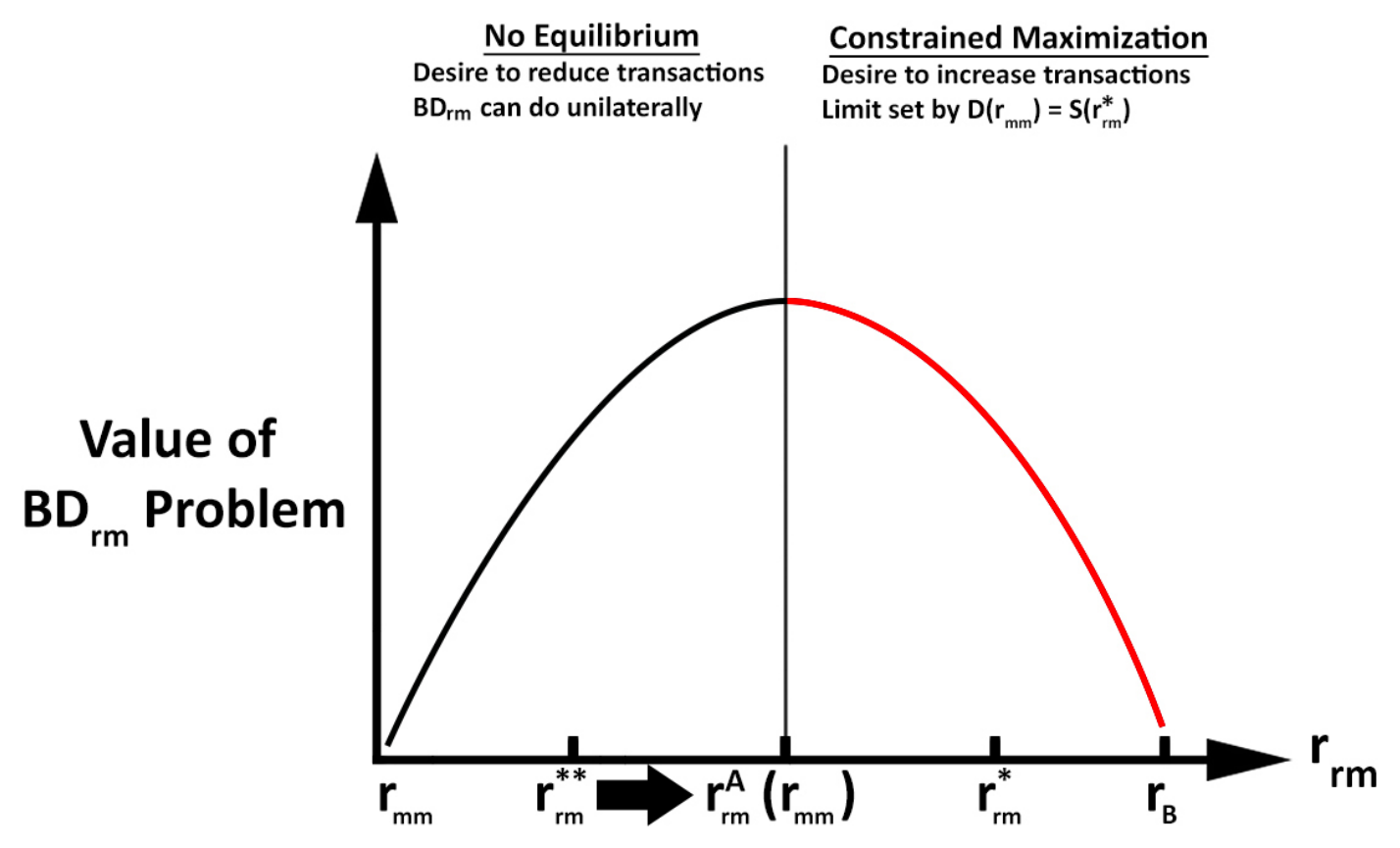}
\caption{$BD_{rm}$ Problem}
\label{fig:$BD_{rm}$ Problem}
\end{center}
\end{figure}

Figure \ref{fig:$BD_{rm}$ Problem} shows the curve defining the value of $BD_{rm}$'s problem at different repo rates for a given value of $r_{mm}$.

The value if $BD_{rm}$'s problem is zero at $r_{rm} = r_{mm}$ and at $r_{rm} = r_{b}$. The concavity of the problem ensures that it is single-peaked between these two endpoints. This peak is depicted in Figure \ref{fig:$BD_{rm}$ Problem} as $r_{rm}^{A}(r_{mm})$. $BD_{rm}$ will choose the maximum repo rate that (i) is not below $r_{rm}^{A}(r_{mm})$ and (ii) does not cause $gap(BD_{rm})) >0$. To see (i), note that $BD_{rm}$'s problem is positively sloped to the left of $r_{rm}^{A}(r_{mm})$ and it reduces inventory by increasing $r_{rm}$. Therefore, it it will increase profit by increasing $r_{rm}$, since $gap(BD_{rm}))$ will shrink. (ii) reflects that $BD_{mm}$ will not carry excess inventory in equilibrium.

To gain further insight we evaluate $BD_{rm}$'s decision at repo rates on either side of the maximum. Suppose $r_{rm}^{*}$ in Figure \ref{fig:$BD_{rm}$ Problem} is the repo rate at which $gap(BD_{mm}) = 0$. $BD_{rm}$ will not reduce $r_{rm}$, as doing so would cause $gap(BD_{rm}))$ to increase above zero. Since the value of $BD_{rm}$'s problem is decreasing at $r_{mm}^{*}$, $BD_{rm}$ maximizes its profit by choosing $r_{rm}^{*}$, which we call a constrained outcome repo rate for $BD_{rm}$\footnote{For $BD_{rm}$, any repo rate weakly above $r_{rm}^{A}(r_{mm})$ is a possible constrained equilibrium repo rate.}. On the other hand, at $r_{rm}^{**}$ in Figure \ref{fig:$BD_{rm}$ Problem} $BD_{rm}$'s problem is positively sloped, so $BD_{rm}$ would increase its repo rate to $r_{rm}^{A}(r_{mm})$. We call this the unconstrained maximum repo rate for $BD_{rm}$.

\subsection{No excess inventories}

The foregoing intuitive discussion can be formalized into a Proposition.

\begin{proposition*}[No excess inventory in \idt Equilibrium]
\label{prop:no excess inventory in equilibrium}
$D_{r_{mm}} = S_{r_{rm}}$ in equilibrium, i.e. neither $BD$ will carry excess inventories into the interdealer trade.
\end{proposition*}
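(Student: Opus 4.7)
The plan is a proof by contradiction: suppose there is an equilibrium $(r_{mm}, r_{rm})$ with $D_{r_{mm}} \neq S_{r_{rm}}$, and exhibit a strictly profitable unilateral deviation for one broker-dealer. Because $Q = \min\{D_{r_{mm}}, S_{r_{rm}}\}$, either $D_{r_{mm}} > S_{r_{rm}}$, in which case $gap(BD_{mm}) > 0$ and $BD_{mm}$ carries unsold money, or $S_{r_{rm}} > D_{r_{mm}}$, in which case $gap(BD_{rm}) > 0$ and $BD_{rm}$ carries unsold collateral. The two sub-cases require different techniques because $r_{mm}$ and $r_{rm}$ enter the payoffs asymmetrically.

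For the first sub-case I would reuse the derivative argument the paper already sketches for $BD_{mm}$ in the excess-inventory partition. Holding $r_{rm}$ fixed at its equilibrium value, $Q = S_{r_{rm}}$ in a neighborhood, so $BD_{mm}$'s payoff is $\tfrac{1}{2}(r_{rm}-r_{mm})\,S_{r_{rm}} - (D_{r_{mm}}-S_{r_{rm}})\,r_{mm}$. Its derivative in $r_{mm}$ equals $-\tfrac{1}{2}S_{r_{rm}} - D'\,r_{mm} - (D_{r_{mm}}-S_{r_{rm}})$, which is strictly negative since $D' > 0$, $r_{mm} > 0$, and $S_{r_{rm}} > 0$. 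Hence $BD_{mm}$ strictly improves by lowering $r_{mm}$, contradicting equilibrium.

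For the second sub-case a direct derivative check for $BD_{rm}$ is inconclusive, because increasing $r_{rm}$ enlarges the margin but also raises the per-unit inventory penalty; so I would use a discrete comparison rather than a first-order argument. Let $r_{rm}^{\star}$ be the unique rate with $S_{r_{rm}^{\star}} = D_{r_{mm}}$; by strict monotonicity and continuity of $S$ it exists and satisfies $r_{rm}^{\star} > r_{rm}$. Holding $r_{mm}$ fixed and computing directly,
\begin{equation*}
\pi_{rm}(r_{rm}^{\star}) - \pi_{rm}(r_{rm}) \;=\; \tfrac{1}{2}(r_{rm}^{\star}-r_{rm})\,D_{r_{mm}} \;+\; (S_{r_{rm}}-D_{r_{mm}})\,r_{rm},
\end{equation*}
both terms of which are strictly positive. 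Hence $BD_{rm}$ strictly prefers $r_{rm}^{\star}$ to $r_{rm}$ and the candidate is not an equilibrium.

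The main obstacle is confirming that the comparison rate $r_{rm}^{\star}$ lies inside the feasible domain $[r_{mm}, r_b]$ for the second sub-case. The inequality $r_{rm}^{\star} > r_{rm} \geq r_{mm}$ handles the lower bound; for the upper bound I would note that if $S_{r_b} \leq D_{r_{mm}}$ then the intermediate value theorem places $r_{rm}^{\star}$ in $[r_{rm}, r_b]$, while if $S_{r_b} > D_{r_{mm}}$ then evaluating the same comparison at $r_{rm}' = r_b$ still yields a strict improvement over $r_{rm}$ because both terms in the difference remain strictly positive. The two cases together close the contradiction and give $D_{r_{mm}} = S_{r_{rm}}$ in any equilibrium.
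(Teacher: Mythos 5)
Your proof is correct and follows the same overall skeleton as the paper's: show that $BD_{mm}$'s best response rules out $D_{r_{mm}}>S_{r_{rm}}$, show that $BD_{rm}$'s best response rules out $S_{r_{rm}}>D_{r_{mm}}$, and intersect the two conditions at a Nash equilibrium. The $BD_{mm}$ half is essentially identical to the paper's (a local perturbation: lowering $r_{mm}$ raises the spread on the fixed quantity $Q=S_{r_{rm}}$ and unambiguously shrinks the penalty, since $gap(BD_{mm})'=D'r_{mm}+(D_{r_{mm}}-Q)>0$). Where you genuinely diverge is the $BD_{rm}$ half. The paper simply asserts the mirror-image local argument — ``increasing $r_{rm}$ slightly raises the spread and weakly reduces unsold inventory'' — which is airtight only if the penalty is a function of the inventory quantity alone; under the specification actually used in the dealers' objectives, the penalty is $(S_{r_{rm}}-Q)^{+}r_{rm}$, whose derivative $S'r_{rm}+(S_{r_{rm}}-Q)$ has ambiguous sign because $S'<0$. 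Your discrete comparison to the inventory-clearing rate $r_{rm}^{\star}$ (where the penalty vanishes outright) sidesteps this ambiguity entirely, and your check that $r_{rm}^{\star}$ — or, failing that, $r_b$ — lies in the feasible domain closes the argument cleanly. So your route is slightly longer but more robust to the exact form of the inventory penalty. One shared blemish, which the paper's proof also has: the strict-improvement claim in your second sub-case degenerates when $r_{rm}=0$ and $D_{r_{mm}}=0$ simultaneously (both terms of your payoff difference vanish), so strictly speaking both arguments only establish the result away from that corner; a one-line remark excluding $r_{rm}=r_{mm}=0$ from the excess-inventory case, or invoking the normalization that positive unsold inventory at a positive funding rate is what the penalty is meant to capture, would tidy this up.
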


\begin{proof}
See Appendix \ref{app:excess_inventories}
\end{proof}

\subsection{Application to a final-sale market}

The model presented in this section and discussed below can be applied to a final-sale market, such as the secondary market for US Treasuries. The translation is as follows. $MM$ becomes the ultimate buyer. $RM$ becomes the ultimate seller. $r_{mm}$ becomes the unit price $p_{mm}$ at which $MM$ purchases $T$ from $BD_{mm}$. $r_{rm}$ becomes the unit price $p_{rm}$ at which $RM$ sells $T$ to $BD_{rm}$. $r_{b}$ becomes $p_{bd}$ becomes the unit price at which $BD_{mm}$ purchases $T$ from $BD_{rm}$. It is left as an exercise for the reader to confirm that the analysis above applies to a final-sale market.


\section{Multiple Equilibrium}
\label{sec:Multiple Equilibirum}

In this section we prove there are multiple pure strategies equilibrium client repo rates chosen by \bd{s}. We first state definitions and concepts that are used in the proof of existence and multiplicity of  equilibrium and then we state the proofs. 

\subsection{Unique balance and the definition of equilibrium}

Restrictions are placed on the set of feasible equilibrium client repo rate pairs $BD$ by the non-negative profit requirement and  Proposition \ref{corr:no excess inventory in equilibrium}, which states that neither $BD$ will carry excess inventories into the \idt in equilibrium. Possible equilibria are restricted to repo rate pairs $\{r_{mm}, r_{rm}\}$ such that (i) $r_{mm} \leq r_{rm}$ and (ii) $D_{r_{mm}} = S_{r_{rm}}$. We call the satisfaction of (i) and (ii) "balanced trade". Figure \ref{fig:repo rates_balanced trade surface} displays the balanced trade surface (or manifold). The surface is downward sloping due to the fact that $D' > 0$ and $S' <0$. The intuition from Figure \ref{fig:repo rates_balanced trade surface} is that, when one repo rate is moved in one direction, up (down), the other repo rate must be moved in the opposite direction, down(up), to maintain balanced trade. 

In Figure \ref{fig:repo rates_balanced trade surface}, $\hat{r}$ denotes the identical repo rate for $BD_{mm}$ and $BD_{rm}$ at which the amount of inter-dealer trade is maximized\footnote{The proof that trade is maximized at a common repo rate is as follows. We know from (i) that $r_{mm}\leq r_{rm}$. Suppose the inequality is strict. Let $r_{rm} - r_{mm} = 2\epsilon$. Since $D_{r_{mm}}$ and $S_{r_{rm}}$ are continuous functions respectively increasing and decreasing in their repo rates, there is some $\delta > 0$ such that $D_{r_{mm} + \epsilon} - D_{r_{mm}} \leq \delta$ and $S_{r_{rm} - \epsilon} - S_{r_{rm}} \leq \delta$. Therefore, there must be some $\gamma < \epsilon$ such that $D_{r_{mm} + \gamma} = S_{r_{rm} - \epsilon}$. The volume of trade is increased and balanced and  $r_{mm}+ \gamma\leq r_{rm} - \epsilon$.}. At any point on the curve that is below and to the right of $\hat{r}$, $r_{mm} > r_{rm}$. Both $BD$'s would earn negative profit at any such repo rate pair. Therefore, equilibrium repo rate pairs occur only on the segment of the curve in Figure \ref{fig:repo rates_balanced trade surface} lying to the left and above $\hat{r}$.

\begin{definition}[Unique Balance]
\label{def:ub}
For each $r_{mm}(r_{rm})$ there is a unique $r_{rm}(r_{mm})$ for which $D_{r_{mm}} =  S_{r_{rm}}$.  We call this property unique balance.
\end{definition}

\begin{figure}[H]
\begin{center}
\includegraphics[page=1,width=0.45\textwidth,height = 0.25\textheight]{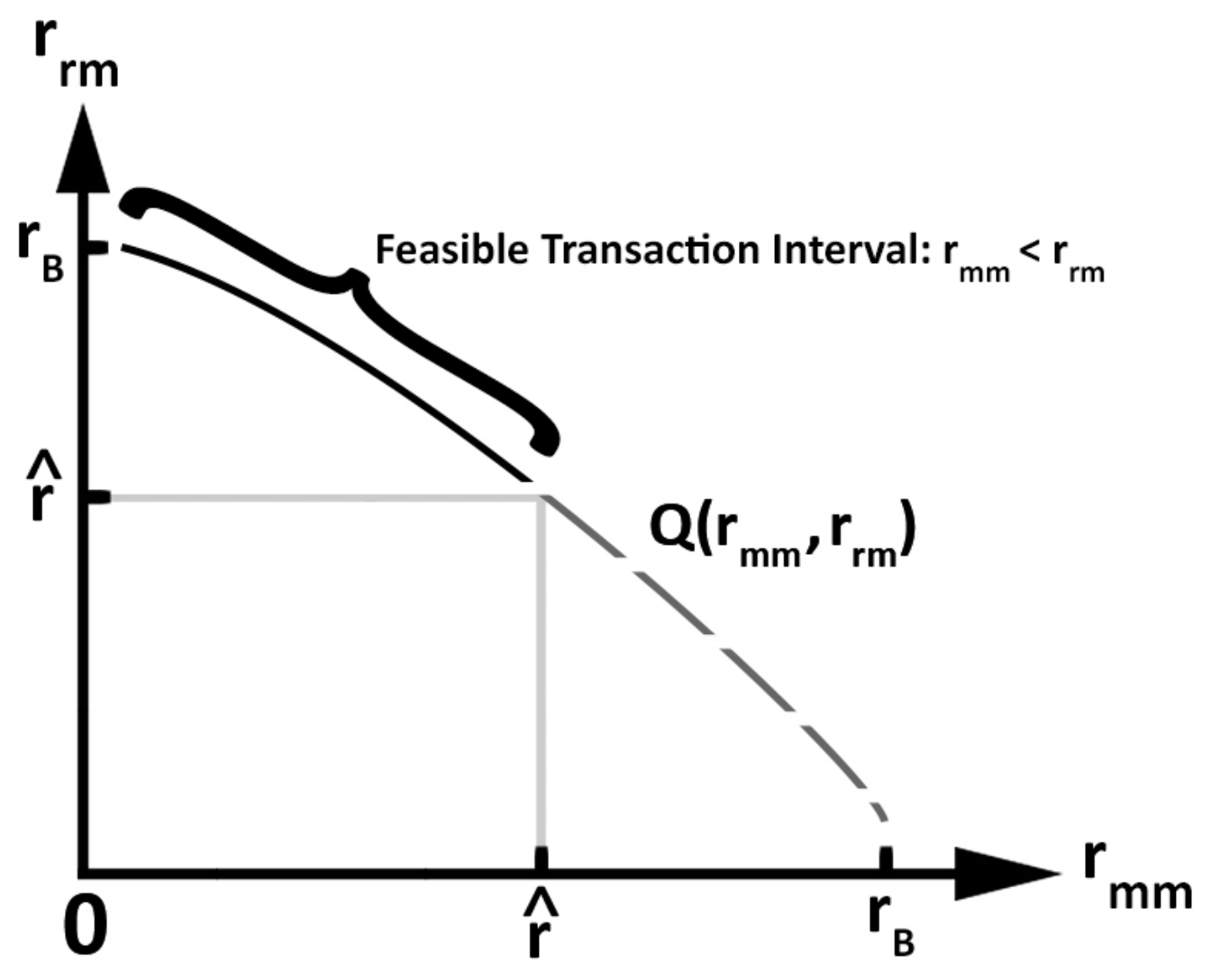}
\end{center}
\caption{Balanced Inter-dealer Trade Surface}
\label{fig:repo rates_balanced trade surface}
\end{figure}

Definition \ref{def:eqm_pk} states the equilibrium concept that we use.

\begin{definition}[\idt Equilibrium under perfect knowledge]
\label{def:eqm_pk}
A repo rate pair $\{r_{mm}, r_{rm}\}$ is an equilibrium if the following two conditions hold;\\
(i) $r_{mm}$ solves $BD_{mm}$'s problem, given its knowledge of the realizations of $\{\theta_{b},\theta_{s} \}$ and $BD_{rm}$'s choice of $r_{rm}$.\\
(ii) $r_{rm}$ solves $BD_{rm}$'s problem, given its knowledge of the realization of $\{\theta_{b}, \theta_{s} \}$ and $BD_{mm}$'s choice of $r_{mm}$. 
\end{definition}

\noindent\textbf{Peak maps.} For each $r_{rm}$, let $r^{A}_{mm}(r_{rm})$ denote the unique maximizer of $BD_{mm}$'s no-inventory objective given $r_{rm}$ (the peak labeled $r_A(r_{rm})$ in Figure \ref{fig:$BD_{mm}$ problem}). For each $r_{mm}$, let $r^{A}_{rm}(r_{mm})$ denote the unique maximizer of $BD_{rm}$'s no-inventory objective given $r_{mm}$ (the peak labeled $r_A(r_{mm})$ in Figure \ref{fig:$BD_{rm}$ Problem}).

\begin{proposition**}[Nash equilibrium characterization]
\label{prop:NE_characterization}
Consider a repo-rate pair $(r_{mm},r_{rm})$ satisfying balanced trade:
\[r_{mm} \le r_{rm}\qquad\text{and}\qquad D(r_{mm}) = S(r_{rm}).\]
Then $(r_{mm},r_{rm})$ is a (pure-strategy) Nash equilibrium in client repo rates (Definition \ref{def:eqm_pk}) if and only if
\[r_{mm} \le r^{A}_{mm}(r_{rm})\qquad\text{and}\qquad r_{rm} \ge r^{A}_{rm}(r_{mm}).\]
Equivalently, along the balanced-trade manifold, each dealer must lie weakly on the inventory-reducing side of its no-inventory optimum.
\end{proposition**}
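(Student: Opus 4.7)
The plan is to prove the two directions separately, leveraging the piecewise characterization of each dealer's objective established in Sections 3.2--3.3 (a smooth concave no-inventory branch on one side of the balanced-trade curve, and a strictly dominated excess-inventory branch on the other). The logic is symmetric in the two dealers, so I would state the argument for $BD_{mm}$ in full and assert that $BD_{rm}$'s case follows by an analogous argument with inequalities reversed.

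\textbf{Necessity.} First I would show that if $(r_{mm},r_{rm})$ is an equilibrium then $r_{mm}\le r^{A}_{mm}(r_{rm})$. Fix $r_{rm}$ and suppose instead that $r_{mm}>r^{A}_{mm}(r_{rm})$. By Proposition~1, we have $D(r_{mm})=S(r_{rm})$, so $BD_{mm}$'s objective equals the no-inventory expression $\tfrac12(r_{rm}-r_{mm})D(r_{mm})$ at this point. Consider a small downward deviation to $r_{mm}'<r_{mm}$. Since $D'>0$, we have $D(r_{mm}')<S(r_{rm})$, hence $Q=D(r_{mm}')$ and $gap(BD_{mm})=0$; $BD_{mm}$'s payoff after the deviation is exactly the no-inventory objective evaluated at $r_{mm}'$. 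Because that objective is strictly concave with unique maximizer $r^{A}_{mm}(r_{rm})$, it is strictly decreasing on $(r^{A}_{mm}(r_{rm}),r_{rm}]$, so $r_{mm}'$ chosen in a left neighborhood of $r_{mm}$ strictly increases profit, contradicting equilibrium. The analogous argument using $S'<0$ and the strict concavity of $BD_{rm}$'s no-inventory objective yields $r_{rm}\ge r^{A}_{rm}(r_{mm})$.

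\textbf{Sufficiency.} Next I would show that whenever balanced trade holds together with $r_{mm}\le r^{A}_{mm}(r_{rm})$ and $r_{rm}\ge r^{A}_{rm}(r_{mm})$, no unilateral deviation is profitable. For $BD_{mm}$ at fixed $r_{rm}$, split the deviation space at the balanced-trade rate. Any upward deviation $r_{mm}'>r_{mm}$ gives $D(r_{mm}')>S(r_{rm})=Q$, so $gap(BD_{mm})>0$; by the argument in Section~3.2 the derivative of the objective on this branch equals $-\tfrac12 S(r_{rm})-gap(BD_{mm})'<0$, so the payoff is strictly lower than the balanced-trade payoff. Any downward deviation $r_{mm}'<r_{mm}$ lands on the no-inventory branch, whose value is $\tfrac12(r_{rm}-r_{mm}')D(r_{mm}')$; since $r_{mm}\le r^{A}_{mm}(r_{rm})$ and this function is strictly increasing on $[0,r^{A}_{mm}(r_{rm})]$ by concavity, moving left weakly decreases profit. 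Thus $r_{mm}$ is a best response. The symmetric argument for $BD_{rm}$, using the strict positivity of the derivative of the $gap(BD_{rm})$-branch and the concavity of $\tfrac12(r_{rm}'-r_{mm})S(r_{rm}')$, completes the proof.

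\textbf{Main obstacle.} The one delicate point is the kink at the balanced-trade rate, where the objective transitions between the no-inventory branch and the excess-inventory branch. The argument hinges on showing that both one-sided derivatives push strictly away from any deviation when the hypotheses hold, and on using the unique-balance property (Definition~\ref{def:ub}) to move along a well-defined one-dimensional manifold while varying a single dealer's rate. Once that bookkeeping is in place, the rest is just strict concavity of each dealer's no-inventory objective combined with the sign analysis from Sections~3.2--3.3.
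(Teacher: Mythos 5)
Your proposal is correct and follows essentially the same route as the paper's proof: dismiss quantity-increasing deviations via the inventory penalty (Proposition~1), then use strict concavity of each dealer's no-inventory objective to show that quantity-reducing deviations are unprofitable exactly when the rate sits weakly on the inventory-reducing side of the peak. You merely spell out the two directions and the one-sided derivative at the kink more explicitly than the paper does (and note that balanced trade is a hypothesis of the statement, not a consequence of Proposition~1, and that unilateral deviations need not stay on the balanced-trade manifold — neither point affects your argument).
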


\begin{proof}
Fix a balanced pair $(r_{mm},r_{rm})$ with $D(r_{mm})=S(r_{rm})$. By Proposition 1 and the inventory-penalty structure, any unilateral deviation that increases a dealer's client-side quantity creates unsold inventory (since the counterparty's rate is fixed) and is not profitable. Hence it suffices to check deviations that reduce quantity:

$BD_{mm}$ lowering $r_{mm}$ and $BD_{rm}$ raising $r_{rm}$.

In the no-inventory region, $BD_{mm}$'s payoff as a function of $r_{mm}$ (holding $r_{rm}$ fixed) is strictly concave with unique maximizer $r^{A}_{mm}(r_{rm})$. Therefore $BD_{mm}$ has no profitable quantity-reducing deviation if and only if $r_{mm}$ lies weakly to the left of the peak, i.e.\ $r_{mm}\le r^{A}_{mm}(r_{rm})$. Similarly, $BD_{rm}$'s no-inventory payoff as a function of $r_{rm}$ (holding $r_{mm}$ fixed) is strictly concave with unique maximizer $r^{A}_{rm}(r_{mm})$, so $BD_{rm}$ has no profitable quantity-reducing deviation if and only if $r_{rm}$ lies weakly to the right of the peak, i.e.\ $r_{rm}\ge r^{A}_{rm}(r_{mm})$. Combining these yields the stated equivalence.
\end{proof}

\subsection{Concepts antecedent to the proof of multiple equilibrium}

In the section we present definitions and propositions that are used in the proof of multiple equilibrium.

\subsubsection*{Constrained equilibrium}

\begin{definition}[Constrained equilibrium]\label{def:constrained_equilibrium}
A balanced Nash equilibrium $(r_{mm},r_{rm})$ is called unconstrained if both dealers attain their no-inventory optima,
\[r_{mm} = r^{A}_{mm}(r_{rm})\qquad\text{and}\qquad r_{rm} = r^{A}_{rm}(r_{mm}).\]
Otherwise it is called constrained. Equivalently, by Proposition\ref{prop:NE_characterization}, a constrained equilibrium is a balanced pair satisfying
\[r_{mm} \le r^{A}_{mm}(r_{rm}),\quad r_{rm} \ge r^{A}_{rm}(r_{mm}),\quad\text{with at least one inequality strict.}\]
\end{definition}

A $BD$ is constrained if, in equilibrium, the derivative of its problem is not zero. An outcome where each $BD$ attains the maximum of its unconstrained problem is an "unconstrained equilibrium". It is immediate that zero trade is an equilibrium. 

\subsubsection*{Notional complementarity}

\begin{proposition***}[Notional complementarity]
\label{prop: Notional complementarity}
In the $BD$ problems, an increase in $r_{rm}$ increases the unconstrained maximum repo rate of $BD_{mm}$'s problem (i.e. shifts $r_{mm}^{A}$ in Figure \ref{fig:$BD_{mm}$ problem} to the right). A decrease in $r_{mm}$ reduces the unconstrained maximum repo rate of $BD_{rm}$'s problem (i.e. shifts $r_{rm}^{A}$ in Figure \ref{fig:$BD_{rm}$ Problem} to the left).
\end{proposition***}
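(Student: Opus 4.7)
The plan is to prove both comparative-statics claims by applying the implicit function theorem to the first-order conditions that define the interior maxima $r^{A}_{mm}(r_{rm})$ and $r^{A}_{rm}(r_{mm})$ of the no-inventory sub-problems. Since these sub-problems have already been shown to be strictly concave in the own repo rate on their respective compact domains, with value zero at both endpoints, the peaks are interior and uniquely characterized by the FOC. This makes the implicit-function approach directly applicable and keeps the argument local, which is all the statement requires.

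First, for $BD_{mm}$ I would write the FOC of $\tfrac12(r_{rm}-r_{mm})D(r_{mm})$ with respect to $r_{mm}$ and package it as
\[
F(r_{mm},r_{rm}) \;:=\; (r_{rm}-r_{mm})\,D'(r_{mm}) - D(r_{mm}) \;=\; 0 .
\]
A quick sign check gives $\partial F/\partial r_{rm}=D'(r_{mm})>0$ (from $D$ strictly increasing) and $\partial F/\partial r_{mm}=-2D'(r_{mm})+(r_{rm}-r_{mm})D''(r_{mm})<0$, where the sign follows from $D'>0$, $D''\le 0$ (concavity), and $r_{rm}\ge r_{mm}$ on the relevant domain. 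The implicit function theorem then yields
\[
\frac{d r^{A}_{mm}}{d r_{rm}} \;=\; \frac{D'(r_{mm})}{2D'(r_{mm})-(r_{rm}-r_{mm})D''(r_{mm})} \;>\; 0 ,
\]
which is exactly the claim that raising $r_{rm}$ shifts $BD_{mm}$'s unconstrained peak to the right.

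For $BD_{rm}$ I would repeat the argument symmetrically. Write the FOC of $\tfrac12(r_{rm}-r_{mm})S(r_{rm})$ as
\[
G(r_{mm},r_{rm}) \;:=\; S(r_{rm}) + (r_{rm}-r_{mm})\,S'(r_{rm}) \;=\; 0 .
\]
Here $\partial G/\partial r_{mm} = -S'(r_{rm})>0$ (since $S'<0$) and $\partial G/\partial r_{rm} = 2S'(r_{rm})+(r_{rm}-r_{mm})S''(r_{rm})<0$ (using $S'<0$, $S''\le 0$, $r_{rm}\ge r_{mm}$). The implicit function theorem then gives $d r^{A}_{rm}/d r_{mm}>0$, which, contrapositively, says that decreasing $r_{mm}$ shifts $r^{A}_{rm}$ to the left.

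The only genuine subtlety, and the step I would be most careful about, is verifying the signs of the denominators $\partial F/\partial r_{mm}$ and $\partial G/\partial r_{rm}$ \emph{everywhere on the relevant domain}, not just at the peak. Concavity of $D$ and $S$ in the twice-differentiable sense (as assumed in Section~\ref{sec:Elements_of_the_model}) handles this cleanly: $D''\le 0$ and $S''\le 0$, together with $r_{rm}\ge r_{mm}$ on the balanced-trade manifold, make both denominators strictly negative, so the implicit function theorem applies and the monotone dependence $r^{A}_{mm}(\cdot)$ and $r^{A}_{rm}(\cdot)$ is well-defined. No further assumption beyond what the paper has already stated is needed.
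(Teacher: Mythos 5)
Your proposal is correct and follows essentially the same route as the paper's proof: the identical first-order conditions $F(r_{mm},r_{rm})=(r_{rm}-r_{mm})D'(r_{mm})-D(r_{mm})=0$ and $G=S(r_{rm})+(r_{rm}-r_{mm})S'(r_{rm})=0$, the same implicit-function-theorem computation, and the same sign analysis using $D'>0$, $S'<0$, and concavity. No gaps.
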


\begin{proof}
For BD$_{mm}$ in the no-inventory region, the objective is
$\phi(r_{mm};r_{rm})=\tfrac12(r_{rm}-r_{mm})D(r_{mm})$, strictly concave in $r_{mm}$, so the maximizer
$r_{mm}^A(r_{rm})$ is unique and interior. The FOC is
\[
F(r_{mm},r_{rm}) := (r_{rm}-r_{mm})D'(r_{mm})-D(r_{mm})=0
\quad\text{at } r_{mm}=r_{mm}^A(r_{rm}).
\]
By the Implicit Function Theorem,
\[
\frac{d r_{mm}^A}{d r_{rm}}
= -\frac{\partial F/\partial r_{rm}}{\partial F/\partial r_{mm}}
= -\frac{D'(r_{mm}^A)}{(r_{rm}-r_{mm}^A)D''(r_{mm}^A)-2D'(r_{mm}^A)}.
\]
Since $D'>0$ and $D''<0$, the numerator is positive and the denominator is negative, hence
$\frac{d r_{mm}^A}{d r_{rm}}>0$.

For BD$_{rm}$ in the no-inventory region, the objective is
$\psi(r_{rm};r_{mm})=\tfrac12(r_{rm}-r_{mm})S(r_{rm})$, strictly concave in $r_{rm}$, so the maximizer
$r_{rm}^A(r_{mm})$ is unique and interior. The FOC is
\[
G(r_{rm},r_{mm}):=S(r_{rm})+(r_{rm}-r_{mm})S'(r_{rm})=0
\quad\text{at } r_{rm}=r_{rm}^A(r_{mm}).
\]
IFT gives
\[
\frac{d r_{rm}^A}{d r_{mm}}
= -\frac{\partial G/\partial r_{mm}}{\partial G/\partial r_{rm}}
= -\frac{-S'(r_{rm}^A)}{2S'(r_{rm}^A)+(r_{rm}^A-r_{mm})S''(r_{rm}^A)}.
\]
Because $S'<0$ and $S''<0$, the numerator is positive and the denominator is negative, so
$\frac{d r_{rm}^A}{d r_{mm}}>0$.
\end{proof}

We call the result of this proposition "notional" complementarity because the solution to the unconstrained $BD$ problems react like strategic compliments. An increase in $BD_{rm}$'s repo rate increases the repo rate spread and induces $BD_{mm}$ to desire to increase its volume of trade by increasing its repo rate. Similarly, a reduction in $BD_{mm}$'s repo rate increases the repo rate spread and induces $BD_{mm}$ to desire to increase its volume of trade by reducing its repo rate. However, the friction imposed by the zero inventory constraint prevents these adjustments from taking place. We state a definition derived from notional complementary that is used in the proof of equilibrium.

\begin{definition}{concavity/complementarity}
From the strict concavity of the $BD$ problems and the positive slope of the implicit functions in notional complementarity, for each $r_{mm}(r_{rm})$ there is a unique  $r_{rm}^{A}(r_{mm}^{A})$, i.e. unique peaks to the $BD$ maximization problems. We call this property ''concavity/ complementarity''.
\end{definition}

\subsection{Proof of existence and multiplicity of equilibrium}

We make use of the following relationships which are derived from the analysis of the $BD$ problems in Section \ref{sec:A Model of Intermediated Repo Trade}.\footnote{Although we have shown that zero trade is an equilibrium, we provide another proof of existence of equilibrium and use the machinery employed in that proof to prove multiplicity. }

(i) When $r_{mm}$ is decreased to reduce inventory of $M$, $BD_{rm}$'s unconstrained optimal repo rate is pushed down, but it must do the opposite and increase $r_{rm}$ in order to reduce its inventory of $T$ to match the reduced volume of $M$. Hence, $r_{rm}$ will increase rightward distance (or reduce leftward distance)  relative to the peak, $r_{rm}^{A}$, of $BD_{rm}$'s problem.

(ii)  When $r_{rm}$ is increased to reduce inventory of $T$, $BD_{mm}$'s unconstrained optimal repo rate is pushed down, but it must do the opposite, and increase $r_{mm}$ in order to reduce its inventory of $M$ to match the reduced volume of $T$. Hence, $r_{mm}$ will increase leftward distance (or reduce rightward distance) from the peak, $r_{mm}^{A}$, of $BD_{mm}$'s problem.

\begin{lemma}[Unique balance existence]\label{lem:unique-balance}
Assume $D:[0,\hat r]\to\mathbb{R}_+$ is continuous and strictly increasing, and
$S:[\hat r,r_b]\to\mathbb{R}_+$ is continuous and strictly decreasing. Assume further that
\[D(0)=0,\qquad S(r_b)=0,\qquad \text{and}\qquad D(\hat r)=S(\hat r)=:T_{\max}.\]
Then for every $r_{mm}\in[0,\hat r]$ there exists a unique $r_{rm}\in[\hat r,r_b]$ such that
$D(r_{mm})=S(r_{rm}).$ Equivalently, the map $B:[0,\hat r]\to[\hat r,r_b]$ defined by
$B(r_{mm}) := S^{-1}\!\bigl(D(r_{mm})\bigr)$
is well-defined (and unique).
\end{lemma}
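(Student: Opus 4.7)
The plan is to prove the lemma in two short steps---existence by the Intermediate Value Theorem and uniqueness by strict monotonicity of $S$---after a brief preliminary check that the image of $D$ lies in the target-value range over which $S$ is defined. The compactness of the intervals and the boundary conditions $D(0)=0$, $D(\hat r)=S(\hat r)=T_{\max}$, $S(r_b)=0$ do all the work; no fixed-point machinery or concavity of the schedules is needed here.

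For existence, I would fix an arbitrary $r_{mm}\in[0,\hat r]$ and set $T^{\star}:=D(r_{mm})$. Continuity and strict increase of $D$ together with $D(0)=0$ and $D(\hat r)=T_{\max}$ imply $D([0,\hat r])=[0,T_{\max}]$, so $T^{\star}\in[0,T_{\max}]$. Next, since $S$ is continuous on $[\hat r,r_b]$ with $S(\hat r)=T_{\max}$ and $S(r_b)=0$, the Intermediate Value Theorem applied to $S$ produces some $r_{rm}\in[\hat r,r_b]$ with $S(r_{rm})=T^{\star}=D(r_{mm})$, which is the desired balanced counterpart.

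For uniqueness, I would invoke strict decrease of $S$ on $[\hat r,r_b]$: any two solutions $r_{rm},r_{rm}'\in[\hat r,r_b]$ of $S(\cdot)=T^{\star}$ must coincide, for otherwise strict monotonicity would force $S(r_{rm})\neq S(r_{rm}')$. To conclude that $B:=S^{-1}\circ D$ is well-defined, I would observe that $S$ restricted to $[\hat r,r_b]$ is a continuous strictly monotone bijection onto $[0,T_{\max}]$ (surjectivity by IVT, injectivity by strict monotonicity), so its inverse $S^{-1}:[0,T_{\max}]\to[\hat r,r_b]$ exists. Since $D$ maps $[0,\hat r]$ into $[0,T_{\max}]$, the composition $B(r_{mm})=S^{-1}(D(r_{mm}))$ is well-defined, and the pair $(r_{mm},B(r_{mm}))$ realizes the unique balance $D(r_{mm})=S(B(r_{mm}))$.

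There is essentially no hard part: the only point requiring any care is the endpoint bookkeeping that pins $D([0,\hat r])$ and $S([\hat r,r_b])$ to the common interval $[0,T_{\max}]$, without which neither the IVT step nor the composition $S^{-1}\circ D$ would type-check. Everything else is a direct application of continuity plus strict monotonicity on closed intervals, so the proof should be quite short.
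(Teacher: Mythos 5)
Your proposal is correct and follows essentially the same route as the paper's proof: pin $D(r_{mm})$ into $[0,T_{\max}]$ via the boundary conditions, obtain existence from the continuity and surjectivity of $S$ onto $[0,T_{\max}]$ (the IVT step you make explicit is what underlies the paper's invertibility claim), and get uniqueness from strict monotonicity of $S$. No substantive differences.
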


\begin{proof}
Fix $r_{mm}\in[0,\hat r]$ and set $q:=D(r_{mm})\in[0,T_{\max}]$ (since $D$ is continuous, strictly increasing, and $D(0)=0$, $D(\hat r)=T_{\max}$). Because $S:[\hat r,r_b]\to[T_{\max},0]$ is continuous and strictly decreasing, it is invertible on $[0,T_{\max}]$ with inverse $S^{-1}:[0,T_{\max}]\to[\hat r,r_b]$.
Define $r_{rm}:=S^{-1}(q)$. Then $S(r_{rm})=q=D(r_{mm})$. Uniqueness follows from strict monotonicity of $S$.
\end{proof}


\begin{theorem*}[Existence of Inter-Dealer Pure Strategy Equilibrium Client repo rate Pairs]
\label{thm:Existence of equilibrium repo rate pairs}
There is a mapping $r_{mm} \mapsto \{r_{mm}, r_{rm}\}$ such that any $r_{mm}\in [0,\hat{r}]$ maps to a repo rate pair $\{r_{mm}', r_{rm}'\}$ that is on the Feasible Transaction Interval of the Balanced Trade Surface in Figure~\ref{fig:repo rates_balanced trade surface} and is an equilibrium.\footnote{An equivalent theorem proves the mapping $r_{rm} \mapsto \{r_{mm}, r_{rm}\}$.} 
\end{theorem*}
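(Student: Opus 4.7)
The plan is to exploit Proposition 2 together with Lemma 1 to reduce existence to a one-dimensional monotone comparison on the balanced-trade manifold. Given any input $r_{mm}\in[0,\hat r]$, I will parameterize balanced pairs as $(t,B(t))$ for $t\in[0,\hat r]$, where $B(t):=S^{-1}(D(t))$, and exhibit a single $t^{*}\in[0,r_{mm}]$ at which both Nash inequalities of Proposition 2 hold; the pair $(t^{*},B(t^{*}))$ is then the equilibrium assigned to $r_{mm}$.

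First, I would define two scalar gap functions $\phi(t):=r^{A}_{mm}(B(t))-t$ and $\psi(t):=B(t)-r^{A}_{rm}(t)$, so that by Proposition 2 the pair $(t,B(t))$ is a Nash equilibrium if and only if $\phi(t)\ge 0$ and $\psi(t)\ge 0$. Continuity of $\phi$ and $\psi$ follows because $B$ is continuous by Lemma 1, and the peak maps $r^{A}_{mm}(\cdot)$ and $r^{A}_{rm}(\cdot)$ are continuously differentiable by the implicit-function computation already carried out in Proposition 3. Second, I would check the left boundary $t=0$: the assumption $D(0)=0$ gives $B(0)=r_b$; strict concavity of the $BD_{mm}$ objective, together with the fact that it vanishes at both endpoints $r_{mm}=0$ and $r_{mm}=r_b$, forces the peak to be strictly interior, so $r^{A}_{mm}(r_b)>0$ and hence $\phi(0)>0$. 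The symmetric argument for $BD_{rm}$ at $r_{mm}=0$ gives $r^{A}_{rm}(0)<r_b$ and $\psi(0)>0$, so the trivial pair $(0,r_b)$ is already a Nash equilibrium.

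Third, I would establish strict monotonicity. Along the manifold $B$ is strictly decreasing in $t$, so as $t$ rises $r^{A}_{mm}(B(t))$ strictly falls by the notional complementarity of Proposition 3 while $t$ itself rises; combining these, $\phi$ is strictly decreasing, and by the symmetric argument so is $\psi$. Hence the equilibrium set $E:=\{t\in[0,\hat r]:\phi(t)\ge 0,\ \psi(t)\ge 0\}$ is a closed left-anchored subinterval $[0,\bar t]$ with $\bar t:=\min\{\phi^{-1}(0),\psi^{-1}(0)\}$ (with unattained roots replaced by $\hat r$). I then define the mapping $r_{mm}\mapsto(t^{*},B(t^{*}))$ by $t^{*}:=\min(r_{mm},\bar t)$; since $t^{*}\in E$, Proposition 2 certifies that $(t^{*},B(t^{*}))$ is a Nash equilibrium, and $t^{*}\le\bar t\le\hat r$ ensures $t^{*}\le B(t^{*})$, placing the pair in the Feasible Transaction Interval.

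The main obstacle I expect is the interiority step in the boundary check: asserting $r^{A}_{mm}(r_b)>0$ and $r^{A}_{rm}(0)<r_b$ requires not merely strict concavity but positive one-sided slopes of the $BD$ objectives at the corners of their domains. These follow from $D'>0$ and $S'<0$ combined with $D(0)=0$ and $S(r_b)=0$, but they have to be stated explicitly; once they are in hand, the remainder of the proof is routine from continuity of $\phi,\psi$, strict monotonicity along the manifold, and the equivalence supplied by Proposition 2.
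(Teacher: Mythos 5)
Your proposal is correct, but it takes a genuinely different route from the paper's proof. The paper proves existence constructively via a multi-step adjustment algorithm: starting from an arbitrary $r_{mm}$, it rebalances to the manifold, moves $BD_{mm}$ leftward to its peak if it is on the wrong side, rebalances, then moves $BD_{rm}$ rightward to its peak if needed and rebalances again, arguing at each stage (via notional complementarity) that the previously adjusted dealer stays on the constrained side of its peak. You instead parameterize the balanced manifold by $t\mapsto(t,B(t))$, reduce the Nash conditions of Proposition~2 to the scalar inequalities $\phi(t)\ge 0$ and $\psi(t)\ge 0$, verify strict positivity at the zero-trade endpoint (the objective vanishes at both corners and is positive in between, so the peak is interior --- the same device the paper itself uses in the proof of Theorem~2), and then use strict monotonicity of $\phi$ and $\psi$ along the manifold (a direct consequence of $B$ being strictly decreasing and the peak maps being increasing, per Proposition~3) to conclude the equilibrium set is a left-anchored interval $[0,\bar t]$. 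Capping the input at $\bar t$ then yields the required mapping. Your argument is tighter than the paper's: the algorithmic proof leaves the reader to verify informally that each rebalancing step preserves the constrained-side property, whereas your monotone gap functions make that global and explicit. It also buys more --- the interval structure $E=[0,\bar t]$ with $\bar t>0$ immediately delivers the continuum of equilibria that the paper proves separately in Theorem~2 via a local $\varepsilon$-argument near the zero-trade point. Your one flagged worry (interiority of the peaks at the corner) is resolved by less than you fear: vanishing at both endpoints plus positivity at an interior point plus strict concavity already forces an interior maximizer, with no need to compute one-sided slopes.
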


\begin{proof}
We use an algorithm to find an equilibrium.

\noindent\textbf{Step 1.} Fix an arbitrary client money-market repo rate $r_{mm}\in[0,\hat{r}]$. Interpret this as the initial quote $r_{mm}$ of dealer $BD_{r_{mm}}$. We will construct a pair $\{r_{mm}', r_{rm}'\}$ and thereby a mapping $r_{mm}\mapsto\{r_{mm}', r_{rm}'\}$ with the desired properties.

\medskip

\noindent\textbf{Step 2.} Choose a repo-market client rate $r_{rm}$ such that money-market demand and repo-market supply balance,
\[D_{r_{mm}} = S_{r_{rm}}.\]
By the unique balance property, this condition defines a one-to-one mapping between $r_{mm}$ and $r_{rm}$. At this point the pair $(r_{mm}, r_{rm})$ lies on the Balanced Trade Surface, as depicted in Figure~\ref{fig:repo rates_balanced trade surface}. Steps 2–6 will place $r_{mm}$ weakly to the left of its unconstrained maximum (see Figure~\ref{fig:seq of $r_{mm}$ steps}).

\medskip

\noindent\textbf{Step 3.} If $r_{mm}$ is above the repo rate that maximizes $BD_{1i}$’s objective, we proceed to adjust $r_{mm}$ as in Step~4. If instead $r_{mm}$ is already on the constrained side of its maximizer (i.e., weakly to the left of the peak in Figure~\ref{fig:seq of $r_{mm}$ steps}), then no adjustment to $r_{mm}$ is needed and we may skip directly to Step~6.

\medskip

\noindent\textbf{Step 4.} When $r_{mm}$ is above its unconstrained maximizer, choose $r_{mm}$ again so that
\[r_{mm} = r_{mm}^{A}(r_{rm}),\]
where $r_{mm}^A(r_{rm})$ denotes the argmax of $BD_{r_{mm}}$’s problem given $r_{rm}$. By concavity and complementarity, this map is one-to-one. Economically, this moves $BD_{r_{mm}}$’s client repo rate leftward to the peak of its payoff function. Since the money-market demand function $D_{r_{mm}}$ is increasing in $r_{mm}$, this leftward move reduces \m, the inventory carried by $BD_{r_{mm}}$.

\medskip

\noindent\textbf{Step 5.} Given the new $r_{mm}$ from Step~4, choose $r_{rm}$ again so that
\[D_{r_{mm}} = S_{r_{rm}}.\]
By the unique balance property, this mapping from $r_{mm}$ to $r_{rm}$ is again one-to-one. Because Step~4 reduced $BD_{r_{mm}}$’s inventory, we must reduce $BD_rm$’s inventory correspondingly. This requires increasing $r_{RM_j}$, since the repo-market supply function $S_{r_{rm}}$ is decreasing in $r_{rm}$. By the notional complementarity property, this increase in $r_{rm}$ raises $r_{mm}^{A}$ while $r_{mm}$ itself remains fixed. Hence $r_{mm}$ remains in the constrained outcome region of $BD_{r_{mm}}$’s problem, on the left-hand side of the peak shown in Figure~\ref{fig:seq of $r_{mm}$ steps}, while the balance condition keeps the pair on the Balanced Trade Surface.

\medskip

\noindent\textbf{Step 6.} If $r_{mm}$ was already in a constrained outcome position at Step 2, then Steps~3–5 are not applied, and $r_{mm}$ is left unchanged. In either case, after Step~6 we have a pair $(r_{mm}, r_{rm})$ on the Balanced Trade Surface with $BD_{r_{mm}}$ in its constrained region (weakly to the left of its unconstrained maximizer).

\medskip

\noindent\textbf{Step 7.} We now adjust $r_{rm}$ while keeping $r_{mm}$ in the constrained region. Steps~7–10 place $r_{rm}$ weakly to the right of its unconstrained maximum (see Figure~\ref{fig:seq of $r_{rm}$ steps}). If $r_{rm}$ is below the repo rate that maximizes $BD_{rm}$’s problem, we proceed to Step~8. If instead $r_{rm}$ is already in the constrained maximization region (weakly to the right of its peak), we skip directly to Step~10.

\medskip

\noindent\textbf{Step 8.} When $r_{rm}$ is below its unconstrained maximizer, set
\[r_{rm} = r_{rm}^{A}(r_{mm}),\]
where $r_{rm}^{A}(r_{mm})$ is the argmax of $BD_{rm}$’s objective for given $r_{mm}$. By concavity and complementarity, this map is one-to-one. This step moves $BD_{rm}$’s client repo rate rightward to the peak of its problem. Since the supply function $S(\cdot)$ is decreasing in $r_{rm}$, raising $r_{rm}$ reduces $T$, the inventory carried by $BD_{rm}$. By the notional complementarity property, this increase in $r_{rm}$ raises $r_{mm}^{A}$ while $r_{mm}$ remains fixed. Thus $r_{mm}$ is on the left-hand side of its peak with an excess inventory of $M$, while $r_{rm}$ is at the unconstrained peak of $BD_{rm}$’s problem.

\medskip

\noindent\textbf{Step 9.} Given the updated $r_{rm}$ from Step~8, choose $r_{mm}$ such that
\[D_{r_{mm}} = S_{r_{rm}}.\]
This balance condition again defines a one-to-one mapping by uniqueness. To offset the inventory reduction of $BD_{rm}$ induced in Step~8, we decrease $r_{mm}$ so as to reduce $BD_{r_{mm}}$’s inventory. By the notional complementarity property, this decrease in $r_{mm}$ shifts $r_{rm}^{A}$ leftward in $BD_{rm}$’s problem while leaving $r_{rm}$ fixed. Consequently, $r_{rm}$ remains in the constrained maximization region of $BD_{rm}$ (on the right-hand side of the peak in Figure~\ref{fig:seq of $r_{rm}$ steps}), and $r_{mm}$—already on the left-hand side of its peak—moves further away from its unconstrained maximizer. At this point, both $BD_{r_{mm}}$ and $BD_{rm}$ are in their constrained optimization regions, and neither carries excess inventory. The pair $\{r_{mm}, r_{rm}\}$ is therefore an equilibrium point on the Balanced Trade Surface, and by the inventory constraints it lies on the Feasible Transaction Interval.

\medskip

\noindent\textbf{Step 10.} If, in Step~7, $r_{rm}$ was already in a constrained maximization position, then Steps~8–9 are not needed. The existing pair $\{r_{mm}, r_{rm}\}$ already has both dealers in their constrained regions with balanced inventories, so it is an equilibrium, again lying on the Feasible Transaction Interval of the Balanced Trade Surface.

\medskip

Finally, denote by $r_{mm}'$ and $r_{rm}'$ the equilibrium client repo rates obtained at the end of this procedure for a given initial $r_{mm}$. The construction above defines a mapping $r_{mm} \mapsto \{r_{mm}', r_{rm}'\}$ from $[0,\hat{r}]$ into equilibrium repo rate pairs on the Feasible Transaction Interval of the Balanced Trade Surface, which proves the theorem.
\end{proof}

\begin{figure}[H]
\begin{center}
\includegraphics[page=1,width=0.55\textwidth,height = 0.25
\textheight]{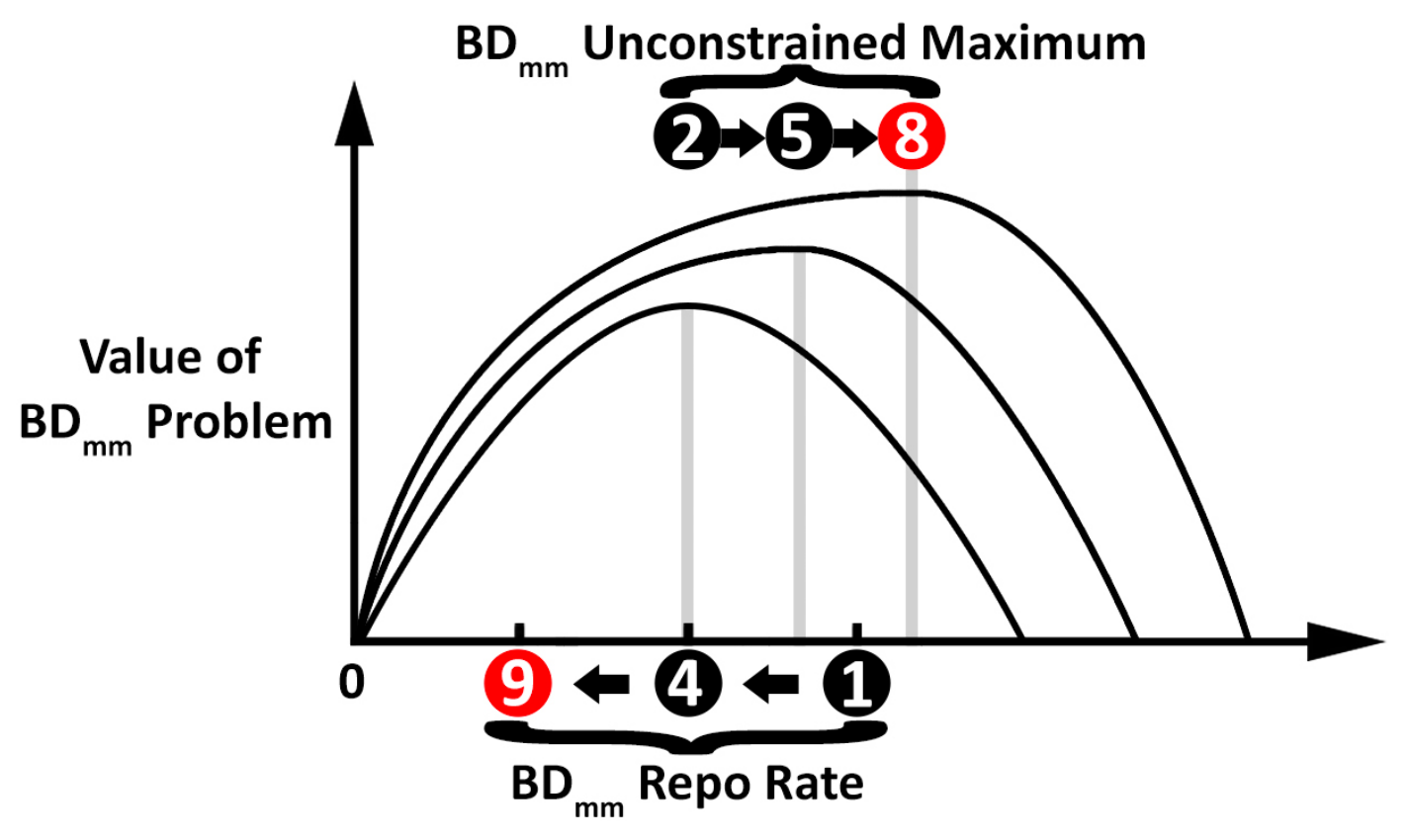}
\end{center}
\caption{Sequence of $r_{mm}$ steps}
\label{fig:seq of $r_{mm}$ steps}
\end{figure}

Figure \ref{fig:seq of $r_{mm}$ steps} depicts the proof sequence of $BD_{mm}$ argmax (notional) repo rates (top). They are identified by gray vertical lines and circled numbers at the top. It also depicts the proof sequence of actual repo rates (bottom). They ae identified by marks in the horizontal axis and circled numbers at the bottom. In the sequence of proof steps $BD_{mm}$'s argmax repo rate progressively increases and its actual repo rate progressively decreases. $BD_{mm}$ is moved to its argmax repo rate in step 4 and stays on the LHS of its argmax repo rate in all subsequent proof steps.               

\begin{figure}[H]
\begin{center}
{\includegraphics[page=1,width=0.55\textwidth,height = 0.25
\textheight]{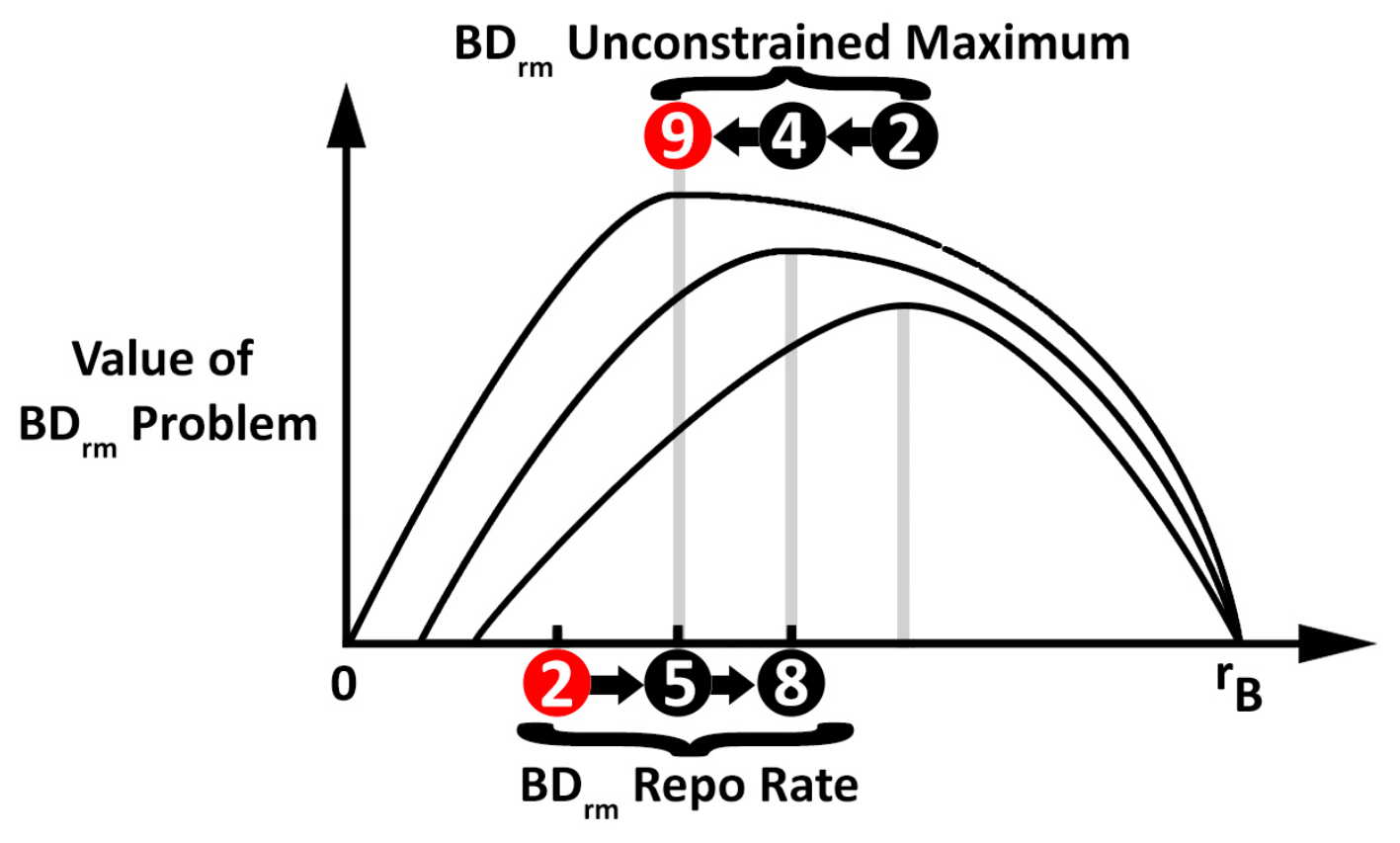}}
\end{center}
\caption{Sequence of $r_{rm}$ steps}
\label{fig:seq of $r_{rm}$ steps}
\end{figure}

Figure \ref{fig:seq of $r_{rm}$ steps} depicts the proof sequence of $BD_{rm}$ argmax (notional) repo rates (top). They are identified by gray vertical lines and circled numbers at the top. It also depicts the proof sequence of actual repo rates (bottom). They ae identified by marks in the horizontal axis and circled numbers at the bottom. In the sequence of proof steps $BD_{rm}$'s argmax repo rate progressively decreases and its actual repo rate progressively increases. $BD_{rm}$ is moved to its argmax repo rate in step 8 and stays on the RHS of its argmax repo rate in all subsequent proof steps.

Our next theorem proves there is a continuum of equilibrium. 

\begin{theorem**}[Multiplicity of Inter-Dealer Equilibrium Repo-Rate Pairs]\label{thm:mult}
There are multiple equilibrium repo-rate pairs $\{r_{mm},r_{rm}\}$.
In fact, there exists $\varepsilon>0$ such that for every $r_{mm}\in(0,\varepsilon]$ the balanced pair
\[
\bigl(r_{mm},\,B(r_{mm})\bigr)i
\quad\text{with}\quad
D(r_{mm})=S\!\bigl(B(r_{mm})\bigr)
\]
is a (pure-strategy) equilibrium. Hence the set of equilibria is a continuum.
\end{theorem**}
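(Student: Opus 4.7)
The plan is to reduce the claim to a simple continuity argument on top of the Nash characterization. By Proposition~2, a balanced pair $(r_{mm},r_{rm})$ on the feasible transaction interval is an equilibrium if and only if
\[ r_{mm} \le r^{A}_{mm}(r_{rm}) \qquad\text{and}\qquad r_{rm} \ge r^{A}_{rm}(r_{mm}). \]
Combined with Lemma~1, which supplies a continuous balance map $B:[0,\hat r]\to[\hat r,r_b]$ with $B(r_{mm})=S^{-1}(D(r_{mm}))$, it suffices to exhibit a nondegenerate subinterval of $r_{mm}$ on which both inequalities hold simultaneously when $r_{rm}=B(r_{mm})$. My strategy is to verify that these inequalities hold \emph{strictly} at the boundary point $r_{mm}=0$ (where $B(0)=r_b$), and then propagate the strict inequalities to a right-neighborhood of $0$ by continuity of $B$, $r^{A}_{mm}$, and $r^{A}_{rm}$.

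First I would check the boundary pair $(0,r_b)$. Since $D(0)=0$ and $D'(0)>0$, the derivative of $BD_{mm}$'s no-inventory objective $\tfrac{1}{2}(r_b-r_{mm})D(r_{mm})$ at $r_{mm}=0$ equals $\tfrac{1}{2}r_b D'(0)>0$, so by strict concavity (Proposition~3) the peak satisfies $r^{A}_{mm}(r_b)>0$. Symmetrically, since $S(r_b)=0$ and $S'(r_b)<0$, the derivative of $BD_{rm}$'s no-inventory objective $\tfrac{1}{2}(r_{rm}-0)S(r_{rm})$ at $r_{rm}=r_b$ equals $\tfrac{1}{2}r_b S'(r_b)<0$, so its peak satisfies $r^{A}_{rm}(0)<r_b$. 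Consequently,
\[ 0 < r^{A}_{mm}(r_b) \qquad\text{and}\qquad r^{A}_{rm}(0) < r_b, \]
so both Proposition~2 inequalities hold with strict slack at $(0,B(0))=(0,r_b)$.

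Next I would invoke continuity to extend this slack. By the Implicit Function Theorem argument already used in the proof of Proposition~3, the maps $r_{rm}\mapsto r^{A}_{mm}(r_{rm})$ and $r_{mm}\mapsto r^{A}_{rm}(r_{mm})$ are continuous on their domains, and $B$ is continuous by Lemma~1. Therefore the two functions
\[ \Phi_1(r_{mm}) := r^{A}_{mm}\!\bigl(B(r_{mm})\bigr) - r_{mm}, \qquad \Phi_2(r_{mm}) := B(r_{mm}) - r^{A}_{rm}(r_{mm}) \]
are continuous, and both satisfy $\Phi_i(0)>0$ by the preceding paragraph. Hence there exists $\varepsilon>0$ such that $\Phi_1(r_{mm})\ge 0$ and $\Phi_2(r_{mm})\ge 0$ for all $r_{mm}\in[0,\varepsilon]$. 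Every such $r_{mm}\in(0,\varepsilon]$ yields a balanced pair $(r_{mm},B(r_{mm}))$ satisfying Proposition~2, hence an equilibrium with strictly positive trade volume $D(r_{mm})>0$. Since $r_{mm}\mapsto (r_{mm},B(r_{mm}))$ is injective (by strict monotonicity of $D$ and $S$), the equilibrium set contains the continuum $\{(r_{mm},B(r_{mm})):r_{mm}\in(0,\varepsilon]\}$, proving the theorem.

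The main obstacle I anticipate is purely a bookkeeping one: making sure that the \emph{strict} inequalities at the boundary are preserved under the passage from the notional peak maps to the composed maps $\Phi_1,\Phi_2$, and that the chosen $\varepsilon$ is small enough that the pair remains within the feasible transaction interval (i.e.\ the segment of the balanced-trade surface to the left of and above $\hat r$). Both concerns reduce to continuity plus the strict sign conditions $\Phi_1(0),\Phi_2(0)>0$ established above, so no genuinely new ingredient beyond Proposition~2, Proposition~3, and Lemma~1 is required.
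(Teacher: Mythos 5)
Your proposal is correct and follows essentially the same route as the paper's proof: establish strict slack in the Proposition~2 inequalities at the zero-trade boundary point $(0,B(0))=(0,r_b)$, propagate it to a right-neighborhood of $0$ by continuity of $B$ and the peak maps, and conclude a continuum via injectivity of $r_{mm}\mapsto(r_{mm},B(r_{mm}))$. The only cosmetic difference is that you certify interiority of the peaks by signing the endpoint derivatives, whereas the paper notes the objective vanishes at both endpoints and is positive in between; both arguments are valid.
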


\begin{proof}
Let $B(r_{mm})$ be the unique-balance map defined by $D(r_{mm})=S(B(r_{mm}))$ (Lemma~1). Consider the
balanced zero-trade point $(0,B(0))=(0,r_b)$. At $r_{rm}=r_b$, BD$_{mm}$'s no-inventory objective $\phi(r_{mm};r_b)=\tfrac12(r_b-r_{mm})D(r_{mm})$ is strictly concave on $[0,r_b]$, satisfies
$\phi(0;r_b)=\phi(r_b;r_b)=0$, and is positive at some interior point, hence its unique maximizer $r_{mm}^A(r_b)$ lies in $(0,r_b)$. Thus $0<r_{mm}^A(B(0))$.
Similarly, at $r_{mm}=0$, BD$_{rm}$'s no-inventory objective
$\psi(r_{rm};0)=\tfrac12 r_{rm} S(r_{rm})$ is strictly concave on $[0,r_b]$ with $\psi(0;0)=\psi(r_b;0)=0$ and positive somewhere, so its unique maximizer $r_{rm}^A(0)$ lies in $(0,r_b)$, hence $B(0)=r_b>r_{rm}^A(0)$.

By continuity of $B(\cdot)$ and of the peak maps $r_{mm}^A(\cdot), r_{rm}^A(\cdot)$, there exists $\varepsilon>0$
such that for all $r_{mm}\in(0,\varepsilon]$,
\[r_{mm} < r_{mm}^A(B(r_{mm}))\quad\text{and}\quad B(r_{mm}) > r_{rm}^A(r_{mm}).\]

For such $r_{mm}$ we have
\[r_{mm} < r^{A}_{mm}(B(r_{mm})) \quad\text{and}\quad B(r_{mm}) > r^{A}_{rm}(r_{mm}).\]
Since $(r_{mm},B(r_{mm}))$ is balanced by construction, Proposition \ref{prop:NE_characterization} implies that $(r_{mm},B(r_{mm}))$ is a (pure-strategy) Nash equilibrium. The strict inequalities imply it is constrained in the sense of Definition \ref{def:constrained_equilibrium}. Finally, if $r_{mm}^1\ne r_{mm}^2$, then $D(r_{mm}^1)\ne D(r_{mm}^2)$ since $D$ is strictly increasing, so $B(r_{mm}^1)\ne B(r_{mm}^2)$ since $S$ is strictly monotone. Thus the equilibria form a continuum.
\end{proof}

\subsubsection{Funding liquidity}

It is remarked by market participants that a feature of the relationship formed between a \bd and its client is that the \bd will guarantee to transact a certain volume. In Appendix \ref{app:funding_liquidity} we extend our model to the case where a \bd commits to trade a fixed volume with its client. Funding commitments induce a feasible trade surface of client repo rates that can result in excess inventories. We prove that the main results of the analysis in the text continue to hold for the case of funding liquidity. 

\begin{proposition}[Existence of equilibrium with client funding commitments]\label{prop:A1_fixed} There is at least one Nash equilibrium in client repo rates $(r_{mm},r_{rm})$ when one or both broker-dealers is subject to a funding commitment (Definition~\ref{def:funding_commitment}).
\end{proposition}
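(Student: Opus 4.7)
The plan is to establish existence by a Kakutani fixed-point argument applied to the best-response correspondences of $BD_{mm}$ and $BD_{rm}$, defined on the compact strategy rectangle induced by the funding commitments of Definition~\ref{def:funding_commitment}. Once one verifies that each broker-dealer's payoff is jointly continuous in the two client rates and quasi-concave in its own rate, the conclusion follows by standard arguments. The novelty relative to Theorem~1 is that the funding commitment rules out equilibrium by simple demand-supply balancing (excess inventory is now possible), so the argument must deal with the payoff across the kink at the inventory-switching boundary.

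First, I would encode the funding commitments as a lower bound $\underline{r}_{mm}:=D^{-1}(V_{mm})$ on $BD_{mm}$'s client rate and an upper bound $\overline{r}_{rm}:=S^{-1}(V_{rm})$ on $BD_{rm}$'s client rate, where $V_{mm}$ and $V_{rm}$ denote the committed minimum volumes; strict monotonicity of $D$ and $S$ makes both well-defined. The joint strategy space is then the compact, convex rectangle $[\underline{r}_{mm},r_{b}]\times[0,\overline{r}_{rm}]$ intersected with the non-negative-profit region $\{r_{mm}\le r_{rm}\}$. Second, I would check joint continuity of each payoff in $(r_{mm},r_{rm})$; this is immediate because $D$, $S$, the inventory penalties $G_{mm},G_{rm}$, and the operators $\min\{\cdot,\cdot\}$ and $(\cdot)^{+}$ are all continuous.

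The crux is step three: quasi-concavity of each payoff in its own argument. Fix $r_{rm}$, define the transition rate $r^{B}_{mm}(r_{rm}):=D^{-1}\!\bigl(S(r_{rm})\bigr)$ at which $BD_{mm}$'s inventory switches from zero to positive, and examine $BD_{mm}$'s payoff as a function of $r_{mm}$. On $[\underline{r}_{mm},r^{B}_{mm}(r_{rm})]$ the payoff coincides with the strictly concave no-inventory objective analyzed in Section~\ref{subsec: $BD_{mm}$ problem_certainty}; on $[r^{B}_{mm}(r_{rm}),r_{rm}]$ the derivative is strictly negative by the computation used in that same section. A short calculation at the kink shows the left-hand derivative exceeds the right-hand derivative by $\tfrac{1}{2}(r_{rm}+r^{B}_{mm})D'(r^{B}_{mm})>0$, so the payoff is continuous and quasi-concave in $r_{mm}$ with a unique peak lying either in the interior of the concave arm or at the kink. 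A symmetric argument handles $BD_{rm}$. Berge's theorem of the maximum then yields non-empty, convex-valued, upper hemicontinuous best-response correspondences, and Kakutani's fixed point theorem delivers a pure-strategy Nash equilibrium.

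The main obstacle I anticipate is the interaction between the funding floor, the non-negative-profit constraint $r_{mm}\le r_{rm}$, and the kink at the inventory-switching boundary: when a commitment binds, the best response may sit at the boundary $\underline{r}_{mm}$ (or $\overline{r}_{rm}$) over a whole interval of counterparty rates and then detach, and upper hemicontinuity must be preserved both at these corner plateaus and at the interior kink. Verifying these details is routine once the quasi-concavity argument above is in hand, but it is the step that most needs care; the remainder of the proof is a standard application of Kakutani.
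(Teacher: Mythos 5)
Your proposal is correct and follows essentially the same route as the paper's proof in Appendix~\ref{app:funding_liquidity}: compact convex strategy intervals induced by the commitments, joint continuity of payoffs, quasi-concavity in one's own rate via the split into the concave no-inventory arm and the strictly decreasing excess-inventory arm, then Berge's maximum theorem and Kakutani. One small caution: the paper keeps the joint strategy space as the product rectangle $R_{mm}\times R_{rm}$ (letting the $(\cdot)^{+}$ in the payoff handle negative spreads) rather than intersecting with $\{r_{mm}\le r_{rm}\}$, since that intersection is a joint constraint that would destroy the product structure Kakutani requires; your explicit kink-derivative computation is a nice but inessential addition.
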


\begin{proof}
See Appendix \ref{app:funding_liquidity}    
\end{proof}



\section{Joint Profit Maximization}
\label{sec:Joint Profit Maximization}

In this section we demonstrate that \bd joint profit maximization is an equilibrium. This results from the surplus allocation rule whereby each \bd receives a fixed percentage of the surplus generated by the spread between the borrower and lender repo rates. Any deviation from the joint profit maximization point $\{r_{rm}, r_{mm}\}$ will reduce profit for both \bd{s}.

\begin{figure}[H]
\begin{center}
\includegraphics[page=1,width=0.5\textwidth,height = 0.3
\textheight]{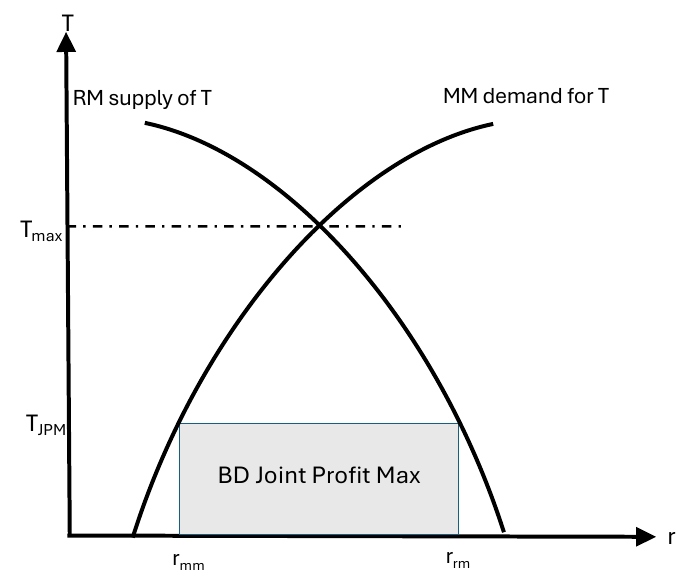}
\caption{$BD$ Joint Profit Maximization}
\label{fig:BD Joint Profit Maximization}
\end{center}
\end{figure}

From Figure \ref{fig:BD Joint Profit Maximization} it can be seen that joint $BD$ profit at any volume of trade $T$ is the area of the rectangle with height $T$ and sides $r_{mm} = D^{-1}(T)$ and $r_{rm} = S^{-1}(T)$. As $T$ increases, the height of the rectangle rises and the width falls. At some volume of trade the area will reach its maximum. To analyze $BD$ welfare, we conceive of a planner who  controls repo rates (and thereby the volume of trade), which is the same instrument the $BD$'s control. Since $BD$'s are risk neutral and their profit is their utility, $BD$ utility is transferable. Therefore, maximizing $BD$ welfare is equivalent to maximizing joint $BD$ profits, which is represented by the area of the rectangle in Figure \ref{fig:BD Joint Profit Maximization}. The planner solves the following problem

\begin{equation}
\label{eq:JPmaxpbm}
\argmax_{r_{mm}, r_{rm}, T}(r_{rm} - r_{mm})T \; \text{subject to}\; D(r_{mm},\theta_{b}) = T \; \text{and}\; S(r_{rm},\theta_{s}) = T
\end{equation}

Noting that $r_{mm} = D^{-1}(T)$ and $r_{rm} =  S^{-1}(T)$, we rewrite the problem in terms of $T$.

\begin{equation}
\argmax_{T}(S^{-1}(T) - D^{-1}(T))T  
\end{equation}

The FOC is

\begin{equation}
\label{eq:JPM_FOC}
f(T) = \underset{repo\; rate\; spread}{\underbrace{S^{-1}(T) - D^{-1}(T})} + T\underset{slope\; divergence}{\underbrace{\left\{\frac{d S^{-1}(T)}{d T} - \frac{d D^{-1}(T)}{d T}\right\}}} = 0
\end{equation}

$f(T)$ is continuous C2 function, since $RM$ and $D$ are C2 functions. $D^{-1}$ and $S^{-1}$ are, respectively, decreasing and increasing convex functions, since $D$ and $RM$ are, respectively, monotone increasing and monotone decreasing concave functions. These observations imply the following relationships.

$\frac{d S^{-1}(T)}{d T} < 0$ and $\frac{d D^{-1}(T)}{d T} > 0 \implies \frac{d S^{-1}(T)}{d T} - \frac{d D^{-1}(T)}{d T} < 0$

$D^{-1}(T_{max}) = S^{-1}(T_{max})$ (by definition).

$T = 0 \implies r_{mm} = 0$ and $r_{rm} = r_{B} \implies S^{-1}(0) - D^{-1}(0)= r_{B} > 0$

These facts enable us to state a proposition concerning $BD$ joint profit maximization.

\begin{proposition****}[$BD$ Joint Profit Maximization Repo Rates -  Existence]
\label{prop: $BD$ Joint Profit Maximization Repo Rates -  Existence}
There is a  repo rate pair which attains $BD$ joint profit maximization. 
\end{proposition****}

\begin{proof}
See Appendix \ref{app:Proof_of_JPM_Equilibrium}
\end{proof}

We can prove that joint profit maximization is an equilibrium.

\begin{corollary*}[$BD$ Joint Profit Maximization Equilibrium]
\label{corr: $BD$ Joint Profit Maximization Repo Rates-Eqm}
There is a joint profit maximizing equilibrium.
\end{corollary*}

\begin{proof}
See Appendix \ref{app:Proof_of_JPM_Equilibrium}
\end{proof}

\subsection{Minimum spread}
\label{subsec:minimum_spread}

Section \ref{subsec:Hurdlerates} cited empirical evidence for the use by banks of a minimum spread in allocating balance-sheet to repo intermediation.
Suppose each $BD$ faces a hurdle marginal profit rate $\kappa \geq \kappa(BD)$ that it must achieve to enter into an inter-dealer trade. WLOG let $BD_{rm}$ face the higher minimum spread. 


\begin{corollary**}[Joint Profit Maximization Equilibrium with a minimum spread]
\label{corr:HurdleRateJPM}
 When a minimum spread $1/2(r_{rm} - r_{mm}) \geq \kappa$ is imposed, there is a \jpm equilibrium.   
\end{corollary**}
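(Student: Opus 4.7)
The plan is to show that the JPM subject to the hurdle $r_{rm}-r_{mm}\geq 2\kappa$ remains a Nash equilibrium, by leveraging Corollary~1 together with notional complementarity (Proposition~\ref{prop: Notional complementarity}). First, I would formalize the constrained problem as the maximization of $(r_{rm}-r_{mm})T$ over the balanced-trade manifold intersected with $\{r_{rm}-r_{mm}\geq 2\kappa\}$. Provided the feasible set is nonempty---which holds as long as the maximal attainable spread on the manifold (realized near zero trade, where $r_{mm}\to 0$ and $r_{rm}\to r_{b}$) weakly exceeds $2\kappa$---existence of a constrained maximizer follows from continuity of the objective on a compact feasible set.

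I would then split into two cases based on whether the hurdle binds at the unconstrained JPM $(r_{mm}^{*},r_{rm}^{*})$. If $r_{rm}^{*}-r_{mm}^{*}\geq 2\kappa$, Corollary~1 immediately supplies an equilibrium that also satisfies the hurdle and is by construction \jpm. If instead $r_{rm}^{*}-r_{mm}^{*}<2\kappa$, the constrained maximizer $(r_{mm}^{**},r_{rm}^{**})$ must lie on the boundary $r_{rm}^{**}-r_{mm}^{**}=2\kappa$, since the unconstrained objective is increasing on $[0,T^{*}]$ along the balanced manifold. Along that manifold, enlarging the spread requires lowering $r_{mm}$ and raising $r_{rm}$ (since $D'>0$ and $S'<0$), so $r_{mm}^{**}<r_{mm}^{*}$ and $r_{rm}^{**}>r_{rm}^{*}$.

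The central step is to verify that $(r_{mm}^{**},r_{rm}^{**})$ satisfies the Nash equilibrium characterization in Proposition~\ref{prop:NE_characterization}. Corollary~1 delivers $r_{mm}^{*}\leq r_{mm}^{A}(r_{rm}^{*})$ and $r_{rm}^{*}\geq r_{rm}^{A}(r_{mm}^{*})$, and notional complementarity states that $r_{mm}^{A}(\cdot)$ is increasing in $r_{rm}$ and $r_{rm}^{A}(\cdot)$ is increasing in $r_{mm}$. Chaining yields
\[r_{mm}^{**}<r_{mm}^{*}\leq r_{mm}^{A}(r_{rm}^{*})\leq r_{mm}^{A}(r_{rm}^{**}),\qquad r_{rm}^{**}>r_{rm}^{*}\geq r_{rm}^{A}(r_{mm}^{*})\geq r_{rm}^{A}(r_{mm}^{**}),\]
so the pair is a constrained Nash equilibrium in the sense of Definition~\ref{def:constrained_equilibrium}. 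I would then argue that the hurdle introduces no new profitable deviations: deviations preserving $r_{rm}-r_{mm}\geq 2\kappa$ are ruled out by the check above, while deviations that depress the spread below $2\kappa$ trigger non-participation by a \bd and collapse the joint payoff to zero, strictly worse than the positive constrained-JPM payoff $\kappa T^{**}$ per \bd.

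The main obstacle will be confirming directionality cleanly---that moving from the unconstrained JPM along the balanced manifold to the hurdle boundary preserves the Proposition~\ref{prop:NE_characterization} inequalities. Notional complementarity is the lever that makes this work: as we reduce $r_{mm}$ and raise $r_{rm}$ from the unconstrained JPM, the peaks $r_{mm}^{A}$ and $r_{rm}^{A}$ shift in precisely the directions that keep $r_{mm}^{**}$ on the constrained (LHS) side of its peak and $r_{rm}^{**}$ on the constrained (RHS) side of its peak, so the inventory-based Nash conditions strengthen rather than fail as the hurdle tightens.
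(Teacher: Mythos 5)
Your proposal is correct and follows the same structure as the paper's proof: restrict to the hurdle-feasible set of balanced volumes, obtain a maximizer by continuity and compactness, and then verify that the resulting balanced rate pair is a constrained outcome and hence an equilibrium. The paper's own proof simply asserts the last step ``as in Corollary~1''; your chain of inequalities through the unconstrained \jpm point, using Proposition~2's characterization together with the monotonicity of the peak maps from notional complementarity, is exactly the verification that assertion requires, so your version is the same argument made explicit.
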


\begin{proof}
Restrict the feasible set to balanced volumes $T$ such that $\tfrac12(S^{-1}(T)-D^{-1}(T))\ge \kappa$. Continuity of $S^{-1},D^{-1}$ implies the restricted objective is continuous on a compact set, so a maximizer exists (as in Proposition~5). The corresponding balanced rate pair is a constrained outcome and hence an
equilibrium (as in Corollary~1).
\end{proof}

We refer to this as the ``constrained \jpm'' equilibrium.

\section{Equilibrium Under Asymmetric Information}
\label{sec:Equilibrium Under Asymmetric Information}

Our model can be extended to the case where the client shock of one broker is unobserved by the other broker. Each $BD$ chooses its repo rate after observing its own client shock, but does not know the shock realization of its counterparty's client. In Appendix \ref{app:asymmetric_eqm} we prove there are multiple equilibria when this information pattern holds. We define an asymmetric equilibrium as follows.

\begin{definition}[Equilibrium under Asymmetric Information]

In a model where each $BD$ sets its repo rate after observing its own client shock, but not its counterparty's client shock, and the distribution of clients shocks is common knowledge, a pair of strategies $S_{1}, S_{2}$ is an equilibrium if the following holds;

$S_{1}:\theta_{mm} \implies r_{mm}$, and 

$S_{2}:\theta_{rm} \implies r_{rm}$

such that $S_{1}$ and $S_{2}$ are equilibrium strategies.
\end{definition}

Appendix \ref{app:asymmetric_eqm} contains proofs of the existence and multiplicity of equilibrium under asymmetric information. 

\begin{theorem***}[Existence of Asymmetric Equilibrium Strategies]
\label{thm: Existence of Asymmetric Equilibrium Strategies} 
There is an inventory $T$ such that the strategies of setting repo rates to achieve $T$ volume of transaction is an equilibrium. That is, there exists $T$ such that each dealer sets its client rate to induce client-side volume $T$ for every
realized type.


\end{theorem***}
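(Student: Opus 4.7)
The plan is to show that for any candidate volume $T$ in a suitable range, a pair of strategies $\sigma_{mm}(\theta_{mm})$ and $\sigma_{rm}(\theta_{rm})$ that delivers client-side volume exactly $T$ for every realized type forms an equilibrium. The key observation is that if each dealer's strategy targets a type-independent volume, then the counterparty's type uncertainty is irrelevant for inventory: each dealer knows with certainty the quantity the other will bring to the interdealer market.

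First, I would construct the strategies. Since $D(\cdot,\theta_{mm})$ is continuous and strictly increasing in $r_{mm}$ for each $\theta_{mm}$, and $S(\cdot,\theta_{rm})$ is continuous and strictly decreasing in $r_{rm}$ for each $\theta_{rm}$, there exists a unique $r_{mm}^{T}(\theta_{mm})$ with $D(r_{mm}^{T}(\theta_{mm}),\theta_{mm}) = T$ and a unique $r_{rm}^{T}(\theta_{rm})$ with $S(r_{rm}^{T}(\theta_{rm}),\theta_{rm}) = T$, at least for $T$ in the intersection of the ranges of $D(\cdot,\theta_{mm})$ and $S(\cdot,\theta_{rm})$ over the type support. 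Define $\sigma_{mm}(\theta_{mm}) := r_{mm}^{T}(\theta_{mm})$ and $\sigma_{rm}(\theta_{rm}) := r_{rm}^{T}(\theta_{rm})$.

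Second, I would show no profitable deviation by reducing each dealer's decision to the deterministic analysis of Section~3. If $BD_{rm}$ follows $\sigma_{rm}$, then regardless of $\theta_{rm}$ its client-side supply equals $T$, so $BD_{mm}$ faces a counterparty quantity known to equal $T$ and its problem coincides pointwise (in $\theta_{mm}$) with the full-information problem in which $S(r_{rm})$ is replaced by the constant $T$. By Proposition~1, carrying excess inventory is strictly dominated, ruling out deviations that raise $D(r_{mm},\theta_{mm})$ above $T$. Deviations that reduce $D(r_{mm},\theta_{mm})$ below $T$ are ruled out provided $r_{mm}^{T}(\theta_{mm})$ lies weakly to the left of the no-inventory peak $r_{mm}^{A}(r_{rm}^{T}(\theta_{rm}),\theta_{mm})$, because the no-inventory objective is strictly concave and increasing up to that peak. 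A symmetric argument applies to $BD_{rm}$ with the peak inequality reversed.

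Third, I would establish that an admissible $T$ exists by a smallness/continuity argument analogous to Theorem~2. At $T=0$ the profile is trivially an equilibrium. For small $T>0$, the induced rate $r_{mm}^{T}(\theta_{mm})$ is near $0$ and $r_{rm}^{T}(\theta_{rm})$ is near $r_b$ for every type, placing both dealers strictly on the inventory-reducing side of their peaks. By continuity of $D$, $S$, and of the peak maps in both $r$ and $\theta$, together with compactness of the type support, there exists $\varepsilon>0$ such that for every $T\in(0,\varepsilon]$ the peak inequalities $r_{mm}^{T}(\theta_{mm}) \le r_{mm}^{A}(r_{rm}^{T}(\theta_{rm}),\theta_{mm})$ and $r_{rm}^{T}(\theta_{rm}) \ge r_{rm}^{A}(r_{mm}^{T}(\theta_{mm}),\theta_{rm})$ hold uniformly in $(\theta_{mm},\theta_{rm})$.

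The hard part will be precisely this uniformity over type realizations: the peak thresholds shift with $\theta$, so $T$ must be chosen so that the constrained-region inequalities hold simultaneously for every pair $(\theta_{mm},\theta_{rm})$. Compactness of the type support plus continuity of the relevant maps turns this into a standard uniform argument, but it is the only step that goes beyond the essentially mechanical reduction to the deterministic model that the common-volume construction achieves.
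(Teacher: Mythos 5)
Your proposal is correct, but it reaches the conclusion by a different route than the paper. The paper (Appendix D) imposes a multiplicative type-shock structure $D(r,\theta_{mm})=\theta_{mm}\bar D(r)$, $S(r,\theta_{rm})=\theta_{rm}\bar S(r)$, which buys two things: the unconstrained peak maps become independent of a dealer's own type (Lemma 3), and the target rates $r_{mm}(T,\theta_{mm})$, $r_{rm}(T,\theta_{rm})$ are monotone in types (Lemma 2). Combined with monotonicity of the peak maps, the worst-case comparison in the dominant-response condition (Lemma 4) collapses to a single check at the minimal-type profile, and the theorem is then stated conditionally: any volume $T^{\ast}$ that is a constrained complete-information equilibrium at the minimal types yields a $T^{\ast}$-targeting Bayesian equilibrium. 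You instead keep the peaks type-dependent and obtain the uniform peak inequalities by a smallness-plus-compactness argument anchored at the corner $T=0$, mirroring the proof of Theorem 2. Your version is more elementary and needs no multiplicative structure, but it only certifies small target volumes, whereas the paper's construction links every constrained minimal-type equilibrium volume to an asymmetric-information equilibrium, which is what its multiplicity theorem then exploits (though your $\varepsilon$-interval would also deliver a continuum). One point worth making explicit in your step two: $BD_{mm}$'s payoff depends on the random counterparty rate $r_{rm}^{T}(\theta_{rm})$ even though the counterparty quantity is deterministic, so the relevant objective is an expectation over $\theta_{rm}$ of single-peaked functions; your uniform (worst-case) peak inequality handles this because the expectation is increasing wherever every realization's objective is increasing, but as written you pass to the deterministic problem without noting that the reduction is in quantity only, not in the rate.
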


\begin{proof}
See Appendix \ref{app:existence_mult_assymetric}
\end{proof}

\begin{theorem****}[Multiplicity of Asymmetric Equilibrium Strategies]
There are multiple equilibrium strategies under asymmetric information involving inventory volumes of $T$, i.e. That is, there exist $T\neq T'$ such that both the $T$-targeting and the $T'$-targeting profiles are equilibria.
\end{theorem****}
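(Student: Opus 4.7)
The plan is to mimic the continuum argument used in Theorem~2, replacing pointwise best responses by best responses against the counterparty's \emph{expected} behavior induced by a target inventory. Let $T$ index the type-contingent strategies
\[
r_{mm}(\theta_{mm}\mid T) := D^{-1}(T;\theta_{mm}),\qquad r_{rm}(\theta_{rm}\mid T) := S^{-1}(T;\theta_{rm}),
\]
so that each dealer, for every realized own type, sets its client rate to induce client-side volume $T$. Theorem~3 gives at least one value $T^{\ast}$ for which the pair $(r_{mm}(\cdot\mid T^{\ast}),r_{rm}(\cdot\mid T^{\ast}))$ is an equilibrium. The goal is to produce a whole open interval of admissible targets near zero.

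First I would fix a candidate target $T$ and rewrite each dealer's expected payoff as a function of its own client rate $r$ after observing its own shock. Given the counterparty targets $T$, the realized interdealer volume equals $\min\{D(r;\theta_{mm}),T\}$ for $BD_{mm}$ and $\min\{T,S(r;\theta_{rm})\}$ for $BD_{rm}$, since the counterparty's random client rate always delivers volume exactly $T$ (up to the counterparty's realized type which has already been translated into a volume by its strategy). By the no-excess-inventory argument of Proposition~1, the ex-post optimal response conditional on own type $\theta$ satisfies the balance condition $D(r;\theta)=T$ (respectively $S(r;\theta)=T$) whenever the corresponding unconstrained peak $r^{A}_{mm}(T;\theta)$, $r^{A}_{rm}(T;\theta)$ lies on the inventory-reducing side. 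Thus the target-$T$ profile is an equilibrium iff, for almost every realization of the own type, the balance rate lies on the constrained side of the peak, exactly the Proposition~2 condition applied type-by-type.

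Next I would invoke the small-$T$ analysis of Theorem~2. At $T=0$ each dealer's payoff function, type-by-type, has a strictly interior peak (because $\phi(0;\cdot)=\phi(r_{b};\cdot)=0$ with positive interior values, and similarly for $\psi$). Hence at $T=0$ the balance condition puts $r_{mm}$ strictly below $r^{A}_{mm}$ and $r_{rm}$ strictly above $r^{A}_{rm}$ for every realized type. Continuity of $D,S,D^{-1},S^{-1}$ jointly in $(r,\theta)$, plus continuity of the peak maps in both the target and the type (by the same Implicit Function Theorem argument used in Proposition~3), gives an $\varepsilon>0$ such that the strict constrained inequality holds uniformly in $\theta$ (using compactness of the shock support or integrability of the derivatives, depending on which assumption the paper makes) for every $T\in(0,\varepsilon]$. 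Thus the type-by-type best-response condition is met, so each $T\in(0,\varepsilon]$ supports an equilibrium.

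Finally I would check distinctness: if $T\neq T'$, then strict monotonicity of $D$ in $r$ and continuity in $\theta$ imply that the strategies $r_{mm}(\cdot\mid T)$ and $r_{mm}(\cdot\mid T')$ differ on a set of positive measure of $\theta_{mm}$, so the induced profiles are genuinely distinct equilibria, proving the continuum statement and a fortiori the existence of $T\neq T'$ in the theorem. The main obstacle I anticipate is the uniformity in $\theta$ step: strict concavity of each dealer's objective must be preserved when integrating over the counterparty's shock, and the ``constrained side'' condition must hold not just in expectation but pointwise in own type. This is where the common-knowledge-of-the-distribution assumption together with continuity and compactness does the real work; once uniformity is established, the rest is a direct translation of the Theorem~2 argument.
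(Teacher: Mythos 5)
Your proposal is correct in substance, but it routes the argument differently from the paper. The paper's proof of this theorem is essentially a two-line corollary: it assumes (from Theorem~2) that the complete-information game at the \emph{minimal-type profile} already has two (indeed a continuum of) constrained equilibrium volumes $T_1\neq T_2$, applies Theorem~3 to each $T_k$ to conclude that each $T_k$-targeting profile is a Bayesian Nash equilibrium, and then gets distinctness from strict monotonicity of $\bar D^{-1}(T/\theta_{mm})$ in $T$. All of the uniformity-in-types work that you identify as "the main obstacle" is discharged upstream, in Lemmas~2--4, by the multiplicative-shock structure (Assumption~2): peaks are type-invariant, $r_{mm}(T,\theta_{mm})$ is decreasing in $\theta_{mm}$ and $r_{rm}(T,\theta_{rm})$ is increasing in $\theta_{rm}$, so the worst-case comparison in the dominant-response condition occurs exactly at the minimal-type profile, reducing the Bayesian check to the single complete-information check. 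You instead re-run the small-$T$ argument of Theorem~2 directly inside the Bayesian game and obtain uniformity over $\theta$ from compactness of the type space and joint continuity; this works (at $T=0$ every balance rate sits at $0$ or $r_b$ while every peak is strictly interior, so the strict inequalities hold uniformly and persist for $T\in(0,\varepsilon]$), and it has the advantage of being self-contained rather than conditional on a hypothesis about the minimal-type game. What it costs you is that you do not exploit the multiplicative structure, so you must verify the constrained-side condition not only pointwise in your \emph{own} type but against the \emph{worst-case counterparty type} (the condition in the paper's Lemma~4 is $r_{mm}(T,\theta_{mm})\le\min_{\theta_{rm}}r^{A}_{mm}(r_{rm}(T,\theta_{rm}))$, not merely a condition on the expected objective); your sketch gestures at this but should state it explicitly, since for general $T$ the expectation being single-peaked is weaker than every type-conditional objective being increasing up to the balance boundary. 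Near $T=0$ the distinction is harmless, so your continuum conclusion stands.
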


\begin{proof}
See Appendix \ref{app:existence_mult_assymetric}
\end{proof}

The proof is intricate, but the key insight is that $BD$'s can coordinate on inventories. There are multiple values of $T$ such that it is an equilibrium for $BD_{mm}$ to set $r_{mm}$ to convey $T$ to its client and for $BD_{rm}$ to set $r_{rm}$ to receive $T$ from its client, even though neither $BD$ knows the counterparty repo rates required to achieve that outcome.

\section{Balance-Sheet Costs and Trade Restrictions}
\label{sec:Balance_Sheet_Costs}

In this section we evaluate the effect of balance-sheet cost incurred by \bd{s} in restricting the volume of repo trade they can intermediate. The constraint arises from the interaction of accounting rules and bank capital regulations. A \bd engages in a matched-book trade where it passes on at each leg the money and collateral provided by the clients, retaining only its profit. A \bd does not need to hold an inventory of money or collateral between the \fl and \sel. Nevertheless, GAAP accounting rules require the \bd to increase its recorded assets by the full first-leg sale price of the \g (since repurchase agreements are treated as secured borrowings rather than sales under GAAP) (Bowman et.al. \cite{BowmanHuhInfante2024}). At the same time, bank capital regulations – notably the supplementary leverage ratio (SLR) – limit the volume of assets a bank can record on its \bs, as the leverage ratio serves as a non-risk-based constraint on overall leverage (\citep{BIS2014}). This creates a shadow price on the allocation of \bs assets – even where there is no underlying inventory.\footnote{See Aronoff et al. \cite{AronoffrepoBS2025} for a detailed explanation of repo accounting rules and capital regulations and an analysis of their interaction.} Denote $c$ the marginal shadow cost of \bs. This equates to $c = p_{1}$, where $p_{1}$ is the \fl unit price of \g. Let $p_{1} = p_{i}^{M} - h$ for all trades, where $p^{M}_{1}$ is the market price of \g at the \fl date and $h$ is the haircut.\footnote{This assumption reflects empirical U.S. Treasury repo market practice, where most such repo transactions are executed with very low or zero haircuts (Hempel et.al. \cite{Hempel2023}).} Each \bd makes one purchase of \g, so its shadow price is $c$ per unit of \g. We assume the bank allocates capacity to its \bd affiliate prior to the commencement of trading, in which case the allocation decision is based on the ex-ante expected profit of trading. Equation \ref{eq:capital-hurdle} is the hurdle rate of return for each $BD_{i}$, expressed as a constraint on the client repo rate spread.

\begin{equation}
\label{eq:capital-hurdle}
\mathbb{E}\!\left[\frac{1}{2}(r_{rm} - r_{mm})\right] \geq \kappa_{i}
\end{equation}

The feasible range of the spread is $[0, r_{B}]$ (Figure \ref{fig:repo rates_balanced trade surface}), which extends below the hurdle rate. An ex-ante distribution over multiple equilibrium repo rate pairs may yield an expectation that is below the hurdle rate, even though there are equilibria that exceed the hurdle. For example, suppose there are two equilibrium repo rate pairs $\{\{12,6\},\{8,6\}\}$, $c = 2.5$ and an equal ex-ante probability of each equilibrium. The ex-ante expected unit profit is 2, which is below the shadow price, even as one of the equilibria has a spread of 3. In this case no trade will occur even though there exist equilibria that satisfy both dealers’ hurdle constraints and would support positive intermediation. 

The no-trade outcome is therefore driven by equilibrium-selection risk under ex ante balance-sheet allocation, rather than by the absence of hurdle-feasible intermediation opportunities. From a regulatory perspective, this is the key coordination problem: the market can fail to intermediate not because all feasible trades are eliminated by regulation, but because dealers cannot commit ex ante to select a hurdle-feasible equilibrium from the set characterized in Theorem 2.



\section{A Smart Contract That Resolves Multiple Equilibrium}
\label{sec:SmartContract}

In this section, we describe a mechanism that resolves multiple equilibrium by committing the market to a hurdle-feasible equilibrium-selection rule. The mechanism coordinates all four participants to select a trade that satisfies both broker-dealers’ minimum spread requirements, thereby preventing the collapse to no trade that can arise when equilibrium-selection risk interacts with binding balance-sheet constraints. There are three additional attractive features of the \smc.

\begin{itemize}
\item There is no leakage of client schedules to the counterparty $BD$, or to any other party.

\item Truthful reporting to the \smc by each $BD$ of its client schedule and minimum spread is a dominant strategy for the former and a weakly dominant strategy for the latter. This minimizes the computational and knowledge burden imposed on the $BD$'s to implement their strategies. 

\item The \smc achieves these results without requiring trust between agents or the \smc. It does so by use of cryptographic methods and computer hardware that protect privacy while enabling the \smc to be audited.\end{itemize}

Finally, the smart contract delivers a limited but policy-relevant guarantee. If there exists at least one equilibrium that satisfies both broker-dealers’ hurdle spreads, then the protocol selects a hurdle-feasible trade and implements positive intermediation. This guarantee is not assured in the absence of commitment to an equilibrium-selection rule, and it is precisely the channel through which the protocol increases effective intermediation capacity under binding leverage constraints.

\subsection{Smart contract preliminaries}

The \bd{s} split surplus and are price-setters in their respective client markets. We modify equation \ref{eq:JPmaxpbm} to express the planner's objective function, which is to maximize constrained joint profit.

\begin{equation}\label{eq:sc_objective}
T^{\star} \in \arg\max_{T \in \mathcal{T}(\kappa)} \big(S^{-1}(T)-D^{-1}(T)\big)T,
\qquad
\mathcal{T}(\kappa) := \left\{T \in [0,\bar T] : \tfrac12\big(S^{-1}(T)-D^{-1}(T)\big) \ge \underline{\kappa} \right\}.
\tag{12}
\end{equation}

Here $\underline{\kappa} := \max\{\kappa_{mm},\kappa_{rm}\}$ is the binding hurdle that determines feasibility. By construction, if $\mathcal{T}(\kappa)$ is nonempty the smart contract selects a hurdle-feasible trade; if $\mathcal{T}(\kappa)$ is empty the protocol aborts and returns deposits, consistent with Step~5 of the protocol. Corollary \ref{corr:HurdleRateJPM} showed that such a \jpm configuration of trades exists. Our \smc achieves this outcome without imposing any informational requirements on any of the agents. Not only can it handle asymmetric information on client schedules, it achieves \jpm when the \bd{s} are ignorant of any feature of the counterparty's client supply or demand schedule for $T$. Notably, agents interacting with a \smc do not rely on trust in a individual or an organization, but rather they rely on the correct execution of the code.

\begin{definition}
A smart contract is a self-executing digital agreement with the terms of the agreement directly written into lines of code, stored on a single server or a blockchain. When predefined conditions are met, the contract automatically executes the corresponding actions, which can include moving objects between accounts.
\end{definition}

\textbf{State space}: Agents agree to trades on a  grid of \g-volumes $\mathcal{T} = \{T_{1},..,. T_{i},...,T_{I}\}$.  

\textbf{First-leg and second-leg} The \smc generates two things; (i) the \jpm tuple of repo rates and trade volume $\{r_{rm,*}, r_{mm,*}, T^{*}\}$ and (ii) the first and second-leg client prices, $\{p_{1,rm},p_{1,mm}$ and $\{p_{2,rm},p_{2,mm}\}$. This implies an \fl spread of $\frac{1}{2}(p_{1,rm} - p_{1,mm})T$ and similarly for the \sel. 

\textbf{Messages and objects} The \smc receives messages from \bd{s} containing client trade schedules and minimum spreads and sends \fl and \sel contracts to agents.  When traded objects \m and \g are appended to blockchains and held in agent wallets, the \smc operates an escrow that receives and sends \fl objects with agents. When objects are held as accounts at banks, the \fl escrow is held by the agent's bank. The banks message the \smc the receipt of objects in escrow and the \smc messages the banks with instructions for moving objects between banks and agent accounts.

\textbf{No collusion.} We place one restriction on agent behavior. We assume that client broker pairs trade at the repo rate reported to the \smc. This prohibits e.g. $BD_{mm}$ from reporting at volume $T$ an $r'_{mm}$ which is above the $r_{mm}$ on the client schedule. This could be implemented with a side payment where $BD_{mm}$ obtains from $MM$ a more favorable surplus split on $(r'_{mm} - r_{mm})T$ than it can obtain from $BD_{rm}$ (the area denoted by A in Figure \ref{fig:feasible_trade}).

\textbf{Client schedules} Each \bd{'s} client relationship pins down a reservation schedule mapping repo rates into quantities, which we take as fixed during the execution of the \smc. This schedule-taking is standard in dealer-market microstructure models that treat investors’ reservation values (or order-flow distributions) as primitives when characterizing dealers’ quote setting and intermediation.\footnote{See Duffie et.al. \cite{DuffieGarleanuPedersen2005} for a canonical OTC dealer-market model in which intermediaries bargain/quote to investors with heterogeneous reservation values which represents investor behavior via reduced-form demand objects.}

\subsection{Smart-contract protocol}
\label{subsec:smart_contract_protocol}

The \smc protocol works in the following sequential order.

\begin{enumerate}

    \item Each client- \bd pair agrees to a trade schedule $\{r_{i},T_{i}\}_{i=1}^{T_{max}}$ repo-rate: security pairs it is willing to trade, one for each value of $T$ (the \fl and \sel prices are agreed to and embedded in the repo rates).  
    
    \item  
    \begin{itemize}
    \item  Each $BD$ reports  the \smc its client trade schedule, signed by itself and its client. 
    \item The \smc selects one $BD$ to send its minimum spread of return (the ``first-mover''). 
    \item Each client sends its deposit to an escrow controlled by the \smc.\footnote{If the objects are tokenized, the escrow can be an account in another \smc, such as a stablecoin. If the objects are held in accounts at financial institutions, the escrow can be inside the institution. See Aronoff \cite{Aronoffnetwrap} for an explanation of how escrows can operate in this context.}
    \end{itemize}

    $\triangleright$ (Default) If a client fails to send its required object the contract is terminated and the client's deposit is distributed to the non-defaulting agents.

    $\triangleright$ The protocol is compatible with any formula for distributing the deposit.
    
    \item The \smc sends the first-mover's minimum spread to the other $BD$ (the ``second-mover'').

    \item The second-mover either accepts the first-mover's minimum spread, or reports a new minimum spread above the first-mover's minimum spread.

    $\triangleright$ The final minimum spread is the ``spread-constraint'' used by the \smc.
    
    \item the \smc computes the spread for each trade volume and discards any $T_{i}$ for which the spread is below the spread-constraint. 

    $\triangleright$ If all $\mathcal{T}$ are discarded the trade is aborted and deposits are returned.
    
    \item The \smc compares the profit at each remaining trade volume, $(r_{rm,i} - r_{mm,i})T_{i}$ and selects the trade volume $T^{*}$ that achieves \jpm.

    \item The \fl and \sel contracts between each client and its $BD$ automatically fix at the prices associated with $T^{*}$ from each client trade schedule, which construct $(r_{rm,*},T^{*})$ and $(r_{mm,*},T^{*})$ respectively. The interdealer repo rate is fixed at $r_{bd} = \frac{1}{2}(r_{rm,*} + r_{mm,*})$. 

    $\triangleright$ The \smc reports the contract terms to each agent, which is binding. 

    \item (Execution step): The \smc causes objects to be sent to each agent in accordance with the first-leg contracts.

    $\triangleright$  Any excess deposit is refunded to the clients.
        
\end{enumerate}

\subsection{Equilibrium in the smart contract game}
Here we prove existence and uniqueness of the constrained \jpm equilibrium under the \smc.

\subsubsection*{Deposits and margin}

The \smc is a sequence of steps. There is an interval of time between the commitment to contracts (Step 1) and the execution of trade (Step 5). During this gap a client might decide to defect. $MM$ could find a higher return and $RM$ could find a lower borrowing cost in other markets due to shocks that occur in the time interval.\footnote{A counter-argument is that the time interval may be so small as to practically eliminate the possibility of a new outside option appearing. In that case no deposit would be necessary.} The first instantiation of the deposit is designed to prevent this from occurring. 

\textbf{Client compliance at the \fl} is assured (or becomes irrelevant) if the maximum volume of objects are placed into escrow (``max deposit''). $RM$ sends $T_{I}$ into escrow and $MM$ sends the maximum \fl price, \m = $p_{1,mm,i}T_{i}$ on its client schedule. In that case there is no subgame where a \bd is required to send an object to meet its inter-dealer obligation because there are sufficient committed objects to complete any trade. \footnote{An initial deposit is required to ensure execution at the \fl because the legal penalty for defaulting on the contract before execution under e.g. the SIFMA master repurchase agreement (1996) paragraph 11 ``Default and Remedies'', is typically the counterparties lost profit, to the extent it cannot be mitigated. This is likely to be a small amount when the contract is negotiated in a competitive market. See e.g. the SIFMA master repurchase agreement (1996) paragraph 11 ``Events of Default'' \url{https://www.sifma.org/documents/master-repurchase-agreement-mra-2/}.}

\textbf{Compliance of all agents at the \sel} is assured by methods already in place in the repo market, namely haircuts and margin. The former is a subsidy on the \fl price to over-collateralize the loan made by $MM$ (in the form of the \fl purchase). Margin is an additional escrow payment made periodically when the time lag between the two legs is beyond one day. The basic idea of margin is to place the counterparties in a position where each is unaffected by a default by the other. For the current margin formula in the cleared US Treasuries repo market see \cite{ficc_gsd_clearing_fund}.

\subsubsection*{Reporting client schedules}

Here we state the proposition that $BD$'s will truthfully report their client schedules and we provide an informal proof. Appendix \ref{app:truthful_reporting} contains a rigorous proof. 

\begin{proposition*****}[Truthful reporting of client schedules]
\label{prop:client-scheds-smc}
Assuming commitments that bind $MM$ and $RM$ are in place, it is a dominant strategy for each $BD$ to truthfully report client schedules in the simultaneous-move game of the \smc protocol (Step 2).
\end{proposition*****}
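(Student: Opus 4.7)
The plan is to show that for any (unobserved) counterparty report, a \bd's realized profit under truthful reporting weakly exceeds its profit under any admissible deviation, so that truthful reporting is a (weakly) dominant action. The argument rests on two structural features of the protocol: the client-signature requirement of Step~2, which restricts admissible reports to schedules the client will certify, and the no-collusion assumption, which rules out side payments between a \bd and its client.

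First I would characterize the feasible deviations. For $BD_{mm}$, a reported schedule $r'_{mm}(\cdot)$ must be co-signed by $MM$. Because $D(\cdot)$ is strictly increasing, $MM$ will lend at volume $T$ only if $r'_{mm}(T)\ge D^{-1}(T)=:r^{\mathrm{true}}_{mm}(T)$; any understated rate is rejected at the signature stage. The no-collusion assumption eliminates the sole reason $BD_{mm}$ would strictly overstate — paying $MM$ a premium in exchange for a kickback — so the relevant deviations satisfy $r'_{mm}(T)\ge r^{\mathrm{true}}_{mm}(T)$ at every grid point. Symmetrically, any feasible deviation by $BD_{rm}$ satisfies $r'_{rm}(T)\le S^{-1}(T)=r^{\mathrm{true}}_{rm}(T)$. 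In either case the reported spread $r_{rm}(T)-r_{mm}(T)$ weakly shrinks at every $T$.

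Next I would compare the smart-contract output under truthful versus deviation reports, holding the counterparty's report fixed. Let $F_{\mathrm{truth}}$ and $F_{\mathrm{dev}}$ denote the hurdle-feasible sets $\mathcal{T}(\kappa)$ from equation~(12) under the two reports. Since the reported spread weakly shrinks under deviation, $F_{\mathrm{dev}}\subseteq F_{\mathrm{truth}}$. Step~8 locks the \fl and \sel contracts at the reported rates, so $BD_{mm}$'s realized profit at the selected volume is half of the reported spread times that volume. Combining the set inclusion with the pointwise rate inequality yields
\[
\max_{T\in F_{\mathrm{truth}}}\bigl(r_{rm}(T)-r^{\mathrm{true}}_{mm}(T)\bigr)T \;\ge\; \max_{T\in F_{\mathrm{dev}}}\bigl(r_{rm}(T)-r^{\mathrm{true}}_{mm}(T)\bigr)T \;\ge\; \max_{T\in F_{\mathrm{dev}}}\bigl(r_{rm}(T)-r'_{mm}(T)\bigr)T,
\]
which is twice $BD_{mm}$'s realized profit under truthful versus deviation reports, respectively. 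When $F_{\mathrm{dev}}$ is empty but $F_{\mathrm{truth}}$ is not, Step~5 aborts only the deviation branch and truthful strictly dominates. The argument for $BD_{rm}$ is symmetric.

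The main delicacy will be pinning down the profit comparison so that it reflects the reported (not true) client rates — this is what the no-collusion assumption and the binding-contract provision of Step~8 are for — and verifying that the hurdle-feasible set inclusion is preserved when $\underline{\kappa}$ binds. Once these two points are secured, the rest of the argument is a purely monotone comparison and requires none of the concavity or equilibrium-existence machinery from Sections~4–5.
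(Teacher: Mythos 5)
Your proposal is correct and follows essentially the same route as the paper's rigorous proof (Appendix E): the client-signature/rationality constraint restricts admissible deviations to one side of the true reservation schedule, and then a pointwise spread comparison combined with inclusion of the hurdle-feasible sets yields that the maximized (reported) surplus — and hence each dealer's half-share, since Step~8 executes at reported rates and no-collusion blocks recapture via side payments — is weakly lower under any deviation. The chain of max inequalities you write out is exactly the paper's "set inclusion plus pointwise comparison" step made explicit, so there is no substantive difference.
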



\begin{proof}[Proof sketch]
 We refer to the surplus-splitting rule for allocating fees $1/2(r_{rm} - r_{mm})T$, Figure \ref{fig:feasible_trade} below and the assumption of no collusion between \bd{s} and their clients. We first address $BD_{mm}$. There are two cases to consider; At some volume $T$, $BD_{mm}$ reports a client repo rate above the rate on the $MM$ client schedule, or $BD_{mm}$ reports a client repo rate below the rate on the $MM$ client schedule.   
\vskip5pt

\textbf{Report repo rate higher than the $MM$ schedule at some $T$:} In this case $BD_{mm}$ reports a client repo rate $r'_{mm}$ that is above the repo rate $r_{mm}$ on its client $MM$ schedule for $T$. The $MM$ will agree to deliver $T$ since its demand is increasing in $T$. The surplus splitting rule implies that $BD_{mm}$'s profit will be 1/2(A +B) when it reports the client schedule repo rate $r_{mm}$, and it will drop to 1/2(B) when it reports the higher repo rate $r'_{mm}$. In general, $BD_{mm}$'s profit from reporting a higher client repo rate will be lower \textit{for any client repo rate reported by $BD_{rm}$}. This proves that $BD_{mm}$ will never report a client repo rate above its client schedule at any $T$.

\textbf{Report repo rate lower than the client schedule at $T$:} This is ruled out by 'no -collusion'. The client is unwilling to fund the trade at $r''_{mm}$. so it will not report a trade at that volume. 

The conclusion is that it is a dominant strategy for $BD_{mm}$ to report the repo rate on its client schedule for all values of $T$. The same reasoning applies to $BD_{rm}$. 
\end{proof}

\begin{figure}[H]
\begin{center}
\includegraphics[page=1,width=0.5\textwidth,height = 0.3
\textheight]{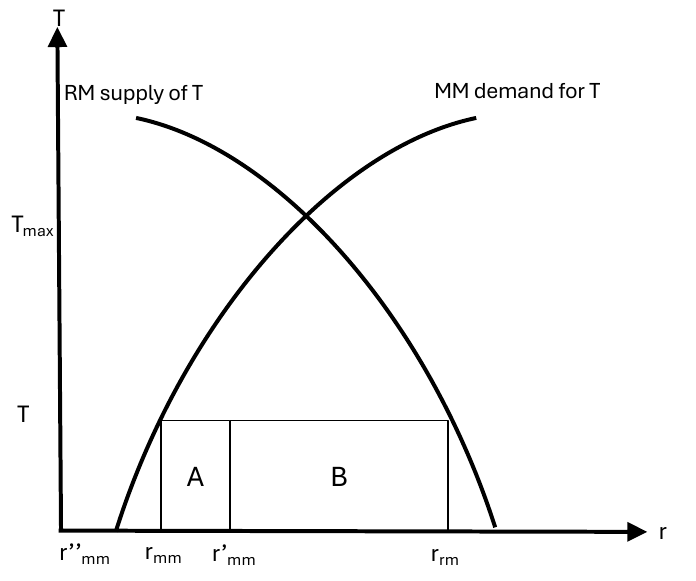}
\caption{Feasible Trade Graph}
\label{fig:feasible_trade}
\end{center}
\end{figure}

\subsubsection*{Reporting minimum spreads}

\begin{proposition******} (Reporting minimum spreads)
Let the dealers' true minimum per-dealer spreads be $\kappa_{mm}$ and $\kappa_{rm}$, and define $H:=\max\{\kappa_{mm},\kappa_{rm}\}$ and $L:=\min\{\kappa_{mm},\kappa_{rm}\}$. In Steps~3--4, the second mover selects the final spread-constraint $\hat\kappa$, subject to $\hat\kappa\ge$ the first
mover's report. Given $\hat\kappa$, the smart contract restricts attention to volumes $T$ satisfying $\tfrac12\!\left(r_{rm}(T)-r_{mm}(T)\right)\ge \hat\kappa$ and selects $T^*$ maximizing $\left(r_{rm}(T)-r_{mm}(T)\right)T$ on this restricted set.
Dealer $i\in\{mm,rm\}$ receives a negative payoff if
$\tfrac12\!\left(r_{rm}(T^*)-r_{mm}(T^*)\right)<\kappa_i$, and otherwise receives its profit share from the selected trade. Then truthful reporting is a weakly dominant strategy, and every equilibrium induces
$\hat\kappa=H$.\footnote{We assume a negative payoff will cause the \bd to exit.}
\end{proposition******}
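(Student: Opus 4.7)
The plan is to prove the claim by backward induction on the two-stage protocol in Steps~3--4, leveraging the structural monotonicity of the filter imposed by $\hat\kappa$. Before either best response, I would record a monotonicity observation: the hurdle-feasible set $\mathcal{T}(\hat\kappa) := \{T \in [0,\bar T] : \tfrac12(r_{rm}(T)-r_{mm}(T))\ge \hat\kappa\}$ is weakly decreasing in $\hat\kappa$ in the set-inclusion order, so the maximized gross surplus $V(\hat\kappa) := \max_{T\in\mathcal{T}(\hat\kappa)}(r_{rm}(T)-r_{mm}(T))T$ (equivalently, each dealer's non-negative share) is weakly decreasing in $\hat\kappa$. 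Because each dealer's payoff at the selected $T^\star(\hat\kappa)$ is non-negative precisely when $\hat\kappa \ge \kappa_i$ (the selected spread is, by construction, at least $\hat\kappa$), both dealers want $\hat\kappa$ as small as possible subject to $\hat\kappa \ge \kappa_i$.

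Next I would analyze the second mover. Denote their true hurdle $\kappa_2$ and the first mover's report $\tilde\kappa_1$. Their admissible choice set is $[\tilde\kappa_1,\infty)$. Any $\hat\kappa < \kappa_2$ leaves open the possibility that $\tfrac12(r_{rm}(T^\star)-r_{mm}(T^\star)) \in [\hat\kappa,\kappa_2)$, triggering the negative-payoff case; any $\hat\kappa \ge \kappa_2$ precludes this. Among admissible choices that yield non-negative payoff, the monotonicity of $V$ implies the smallest such $\hat\kappa$ is optimal, namely $\hat\kappa^\star(\tilde\kappa_1) = \max\{\tilde\kappa_1,\kappa_2\}$. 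This is exactly the truthful report in the sense that the second mover contributes its own $\kappa_2$ whenever that improves on $\tilde\kappa_1$ and accepts $\tilde\kappa_1$ otherwise.

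For the first mover, with true hurdle $\kappa_1$ and anticipating $\hat\kappa^\star$, I would compare three deviation classes. Over-reporting $\tilde\kappa_1 > \kappa_1$ gives $\hat\kappa = \max\{\tilde\kappa_1,\kappa_2\} \ge \tilde\kappa_1 > \kappa_1$, so the first mover still earns a non-negative share but from a weakly smaller feasible set, so payoff weakly falls by monotonicity of $V$. Truthful $\tilde\kappa_1 = \kappa_1$ yields $\hat\kappa = \max\{\kappa_1,\kappa_2\} = H$, delivering a non-negative share over the largest feasible set compatible with both participation constraints. Under-reporting $\tilde\kappa_1 < \kappa_1$ yields $\hat\kappa = \max\{\tilde\kappa_1,\kappa_2\}$: if $\kappa_2 \ge \kappa_1$ this coincides with truthful (tie); if $\kappa_2 < \kappa_1$, then $\hat\kappa < \kappa_1$ and the first mover receives the negative payoff. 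Since $\kappa_2$ is the other dealer's private parameter and not under the first mover's control, truthful is weakly dominant (strictly better than any under-report on the event $\kappa_2 < \kappa_1$, and weakly better than any over-report everywhere). Combining both best responses yields $\hat\kappa = \max\{\kappa_{mm},\kappa_{rm}\} = H$ in every equilibrium of the Step~3--4 subgame.

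The main obstacle I expect is the boundary analysis at empty feasible sets and the degeneracy when the spread function is flat. When $\mathcal{T}(H)$ is empty, Step~5 aborts and deposits are refunded, so I would need to define the off-path payoffs consistently (in particular, to verify that under-reporting cannot profitably engineer a hurdle-infeasible selected trade that spuriously delivers zero rather than a negative payoff). I would also need to argue that if there are multiple maximizers of $(r_{rm}-r_{mm})T$ on $\mathcal{T}(H)$, the protocol's tie-breaking in Step~6 does not create a side channel that breaks weak dominance; a standard lexicographic tie-break, or an appeal to strict quasi-concavity of the gross-surplus objective on the balanced-trade manifold established in Section~5, closes this gap.
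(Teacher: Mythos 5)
Your proposal is correct and follows essentially the same backward-induction route as the paper's proof: analyze the second mover's choice of $\hat\kappa=\max\{\tilde\kappa_1,\kappa_2\}$, then show the first mover can neither profitably over-report (shrinks the feasible set) nor under-report (risks a hurdle-infeasible selection when the counterparty's true hurdle is lower), yielding $\hat\kappa=H$ in every equilibrium. Your explicit monotonicity observation on $V(\hat\kappa)$ and the caveats about empty feasible sets and tie-breaking are slightly more careful than the paper's version but do not change the argument.
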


\begin{proof}
Let true minimum spreads be $\kappa_{mm},\kappa_{rm}$ and define $H:=\max\{\kappa_{mm},\kappa_{rm}\}$, $L:=\min\{\kappa_{mm},\kappa_{rm}\}$.
The protocol sets the final spread-constraint $\hat\kappa$ equal to the second mover's report, with the constraint $\hat\kappa\ge$ (first mover's report). If the selected trade violates dealer $i$'s true minimum, $u_i=-\infty$; otherwise $u_i$ equals its profit share from the selected trade.

\emph{Second mover.} If the second mover's true minimum is $H$, any $\hat\kappa<H$ admits the possibility that the smart contract selects an infeasible trade for that dealer, yielding a negative payoff, so reporting at least $H$ is optimal.
Reporting $>H$ only shrinks the feasible set and weakly lowers profit, so the second mover weakly prefers $\hat\kappa=H$. If instead the second mover's true minimum is $L$, accepting $\hat\kappa=L$ is feasible and any report $>\max\{L,\text{(first report)}\}$
can only remove potentially profitable feasible trades, so accepting is weakly optimal.

\emph{First mover.} If the first mover's true minimum is $H$, reporting $<H$ risks that the second mover's true minimum is $L$ and accepts, resulting in $\hat\kappa<H$ and a possible $-\infty$ outcome; hence the first mover will not report below $H$. Reporting $>H$ can only reduce profit by restricting feasible trades. Thus truthful reporting is weakly dominant for the first mover.

Therefore, in equilibrium the final spread-constraint satisfies $\hat\kappa=H$ and truthful reporting is a weakly dominant strategy
for each dealer in this sequential reporting game.
\end{proof}

\begin{figure}[H]
\begin{center}
\includegraphics[page=1,width=0.5\textwidth,height = 0.3
\textheight]{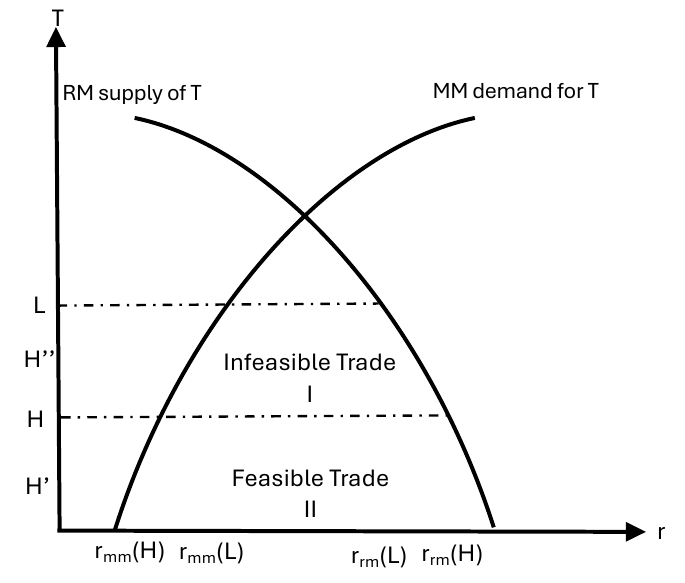}
\caption{Spread Constraint}
\label{fig:spread_constraint}
\end{center}
\end{figure}

A notable feature of these equilibrium results is that they can be implemented by myopic strategies where a $BD$'s reports its own client schedule and minimum spread, without need to consider any competitive response by its counterparty. Theorem 3 formalizes this implementability: in equilibrium each broker-dealer's report depends only on its own client schedule and its own hurdle, not on conjectures about the counterparty's reporting behavior.

\begin{theorem*****}[Truthful reporting equilibrium in the smart-contract game]
\label{thm:truthful_reporting_sc}
The reporting stage (Steps~2--4) admits a subgame-perfect equilibrium in which each broker-dealer reports its true client schedule in Step 2 and reports its true minimum half-spread in Steps~3--4. In this equilibrium, the final spread constraint satisfies
\[\hat\kappa = \max\{\kappa_{mm},\kappa_{rm}\},\]
and the smart contract implements the constrained joint-profit-maximizing trade selected by its objective (equation (12)) subject to the feasibility restriction induced by $\hat\kappa$.
\end{theorem*****}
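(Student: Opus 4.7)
The plan is to establish the theorem by backward induction on the three reporting steps (Step 2 schedule reports; Step 3 first-mover spread report; Step 4 second-mover spread report), combining the two dominance results already in hand (Propositions 5 and 6) into a single subgame-perfect profile and then verifying that the resulting trade coincides with the constrained joint-profit-maximizing allocation defined by (12).

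First I would fix an arbitrary profile of schedule reports from Step 2 and treat Steps 3--4 as the continuation game. Because Step 4 is a one-shot best response to the first mover's report against a mechanism that aborts whenever the selected trade violates the second mover's true hurdle, Proposition 6 directly yields that reporting one's true minimum half-spread is weakly dominant for the second mover. Rolling back to Step 3, the first mover anticipates this weakly dominant response and, by the same argument in Proposition 6, finds truthful reporting weakly dominant as well. Hence in every subgame-perfect continuation the induced final constraint satisfies $\hat\kappa = \max\{\kappa_{mm},\kappa_{rm}\}$.

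Next I would turn to Step 2. Here the game is simultaneous in schedule reports, but the continuation value for each broker-dealer is determined by the Step 3--4 subgame together with the smart contract's selection rule. The key observation is that Proposition 5 establishes schedule-truthfulness as a dominant strategy argument that does not depend on the counterparty's report or on the spread-constraint that will subsequently be chosen: for any reported counterparty schedule and any admissible $\hat\kappa$, misreporting either strictly shrinks the dealer's surplus share at any $T$ that remains feasible (over-reporting a client rate) or is ruled out by the no-collusion assumption (under-reporting a client rate). Thus truthful schedule reporting is a best response in Step 2 to any continuation play that is itself a subgame-perfect continuation, and in particular to the truthful continuation established in the previous paragraph.

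Concatenating the three steps gives the claimed subgame-perfect profile. Finally I would verify the implementation claim: given that both schedules are reported truthfully and $\hat\kappa = \max\{\kappa_{mm},\kappa_{rm}\}$, the feasible set on which the smart contract optimizes in Step 6 is exactly $\mathcal{T}(\kappa)$ from (12), and Step 7 selects the maximizer of $(r_{rm}-r_{mm})T$ on that set, which by Corollary 2 is nonempty whenever a hurdle-feasible equilibrium exists. The main delicate point is that the Step 2 argument uses only dominance and not a fixed-point construction, so I must be careful to state that Proposition 5's dominance holds uniformly across all continuation spread-constraints $\hat\kappa$ the mechanism might subsequently impose; this uniformity is what permits the decoupling of Step 2 from the Steps 3--4 subgame and is the place where a careful but routine argument is needed to close the backward-induction loop.
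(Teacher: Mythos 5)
Your proposal is correct and takes essentially the same approach as the paper, which likewise just combines the dominance of truthful schedule reporting (the client-schedules proposition) with the weak dominance of truthful spread reporting in Steps 3--4 and then invokes the deterministic selection rule. The uniformity issue you flag at the end is the one substantive point, and the paper handles it in Appendix~\ref{app:truthful_reporting}, where schedule-truthfulness is proved dominant against \emph{any} exogenous admissibility constraint, including whatever spread filter $\hat\kappa$ the continuation imposes.
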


\begin{proof}
By Proposition 6, truthful reporting of client schedules is a dominant strategy in Step~2. By Proposition 7, truthful reporting of minimum spreads is a weakly dominant strategy in Steps~3--4 and every equilibrium of that subgame induces $\hat\kappa=\max\{\kappa_{mm},\kappa_{rm}\}$. Combining these facts yields a subgame-perfect equilibrium of Steps~2--4 with truthful reports. The implemented trade then follows directly from the smart contract's deterministic selection rule given the reports.
\end{proof}

\subsection{Integrity, privacy and auditability}
\label{subsec:Privacy_auditability}

This section formalizes the cryptographic guarantees delivered by the \smc described above. Two broker–dealers, $BD_{r_{rm}}$ and $BD_{r_{mm}}$, privately report schedules of trades. The \smc computes and delivers to each BD the single trade $\{r_{rm,*}, r_{mm,*},T^{*}\}$ that maximizes joint profit under the pre‑programmed objective in (equation \ref{eq:sc_objective}), while revealing nothing else about either schedule. Correctness is publicly auditable via a succinct zero‑knowledge proof (zk‑SNARK/NIZK), and liveness/fairness are enforced by hash‑timelocks with cancellation on timeout. These guarantees are obtained in either of two implementations: (i) a generic hardware protected trusted‑execution environment (TEE) and (ii) a public‑chain smart contract that verifies zero‑knowledge proofs on‑chain. In both cases the computation is performed on a server with a TEE. Figure \ref{fig:TEE} depicts the encrypted incoming messages, the decryption and execution inside the TEE, the recording of the transcript, the re-encryption and sending of the computed output to the parties.

\begin{figure}[H]
\begin{center}
\includegraphics[page=1,width=0.5\textwidth,height = 0.3\textheight]{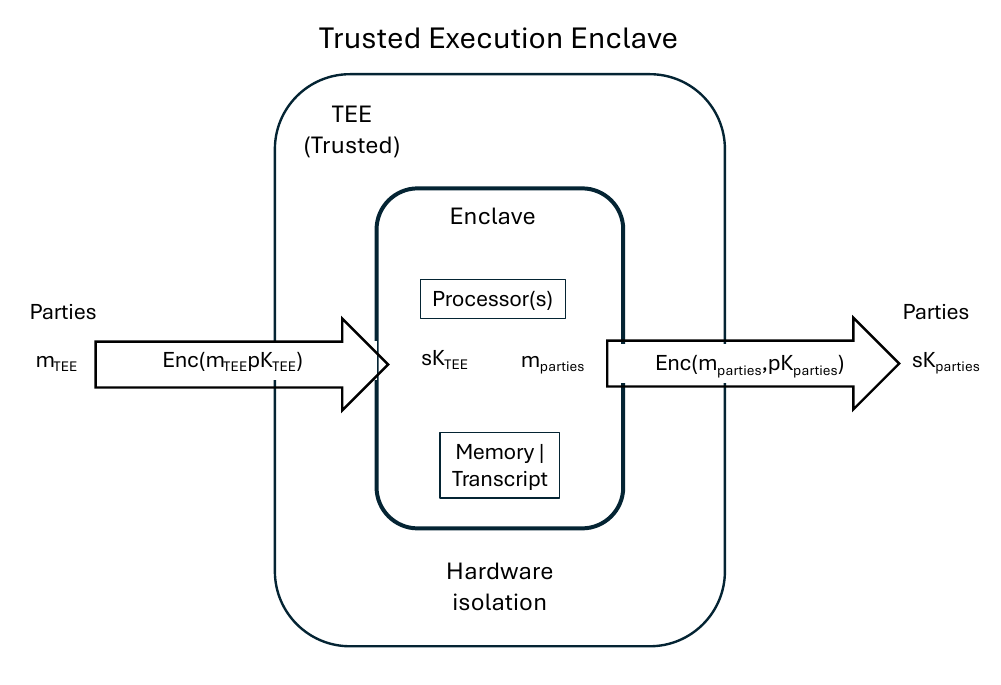}
\end{center}
\caption{Trusted Execution Enclave}
\label{fig:TEE}
\end{figure}

\paragraph{Cryptographic primitives (both implementations).}
Each $BD$ forms a binding, hiding vector commitment to its schedule (e.g., a Merkle/Pedersen root): $C_{rm}$ commits to $S_{r_{rm}}$ and $C_{mm}$ to $D_{r_{mm}}$. \footnote{Catalano and Fiori \cite{catalanoFiore2013vc}, Merkle \cite{merkle1989certified} and Pederson \cite{pedersen1992vss} describe ector commitments.} A hash‑timelock establishes phased deadlines: \textsc{Commit}~$\to$~\textsc{Prove/Verify}~$\to$~\textsc{Deliver}. If a deadline is missed, the protocol aborts without disclosure. The \smc must output only $\{r_{rm,*},r_{mm,*},T^{*}\}$ (encrypted separately to each $BD$’s public key), together with a publicly verifiable proof that $\{r_{rm,*},r_{mm,*},T^{*}\}$ is admissible and profit‑maximizing with respect to the posted commitments and the fixed code implementing the \smc protocol.\footnote{See Todd \cite{bip65} and Friedenbach and Lombrozo \cite{bip112}.}

\textit{Zero‑knowledge statement (high level).} The proof attests that: (i) $T^{*}\in S_{r_{rm}}$ and $T^{*}\in D_{r_{mm}}$ and satisfies equation \ref{eq:sc_objective} (via commitment ‑ membership proofs against $C_{rm},C_{mm}$); (ii) $\kappa(T^{*}) \ge \kappa(T)$ for every admissible candidate $T$ encoded by the bounded schedule grid $\mathbb{T}$; and (iii) the published ciphertexts $E_{rm}=ENC_{pk_{rm}}(T^{*})$ and $E_{mm}=ENC_{pk_{mm}}(T^{*})$ are correct encryptions of the same $T^{*}$. The ground truth is the transcript inside the TEE. No plaintext schedule elements are revealed.\footnote{Thaler \cite{ThalerZKP} is a standard reference for ZKP methods.}

\paragraph{Outcomes}

\begin{itemize}

\item \textit{Integrity.} The only accepted outcome is one for which a valid proof links $\{r_{rm,*},r_{mm,*},T^{*}\}$ to the committed schedules and the fixed objective (equation \ref{eq:sc_objective}); thus the selected trade is the joint‑profit maximizer encoded by the program. 

\item \textit{Privacy.} Beyond the delivery of the actual trade, $\{r_{rm,*},r_{mm,*},T^{*}\}$ (privately, as $E_{rm},E_{mm}$), no information about $S_{r_{rm}}$ or $D_{r_{mm}}$is revealed; the proof is zero‑knowledge. 

\item \textit{Auditability.} Verification of $\kappa$ (and of attestation in the TEE case) lets any $BD$—or third party—confirm correctness without seeing inputs. 

\item \textit{Fairness/liveness.} Hash‑timelocks deliver ordering and bounded execution; missed deadlines trigger cancellation without compromising privacy. 
\end{itemize}

Appendix \ref{app:crypto_implementation} contains the implementation details.


\section{Conclusion}
\label{sec:Conclusion}

We presented a model of intermediated trade in the U.S. Treasuries repo market. We focused on a repo chain composed of two \bd{s} flanked on either side by, respectively, an ultimate repo borrower and an ultimate repo lender. Each element of our model - from the length of the repo chain to the timing of trading to the functional forms of supply and demand schedules to the bargaining power of agents - is derived from the empirical and theoretical literature. We demonstrated that strategic interactions between agents leads to multiple equilibrium. 

We then introduced a \smc protocol that resolves the coordination and hurdle challenge by selecting the \jpm volume of trade as an equilibrium, which is implemented by a dominant strategy of broker-dealer truthfully reporting its client schedule. The \smc also selects a spread-constraint (between borrowing and lending rates) , which bounds the region of feasible trades. This constraint reflects the shadow-price of allocating balance-sheet capacity to the broker-dealers. We demonstrate that the optimal spread-constraint can be implemented by a weakly dominant strategy of broker-dealer truthfully reporting its true hurdle-constraint. 

The \smc is a self-executing digital agreement with the terms of the agreement embedded in the code. Agent trust resides in the correctness of the code  -- which can be audited --- rather than the conduct of an individual or entity.  Broker-dealers send messages containing contingent trading schedules - repo rates associated with each volume of trade -  with their clients and their internal minimum spread of return. the \smc selects the trade that maximizes \bd joint profit, which is an equilibrium. Agent compliance with the rules are enforced by deposits posted by the clients. 


Privacy is maintained by encrypting messages between agents and the \smc and protecting leakage from the computations. We explored two implementations. In one the computation and auditing takes place inside secure hardware. In the other the encrypted output is appended to a public blockchain for auditing. In both cases privacy of the auditing is achieved by using zero knowledge proofs.   

The \smc has practical significance. 

\begin{itemize}

\item It provides a policy-relevant floor on intermediation under leverage constraints. The protocol does not generally maximize repo volume. Instead, it guarantees that whenever there exists at least one equilibrium satisfying both broker-dealers’ hurdle spreads, the implemented outcome is hurdle-feasible and features positive intermediation, thereby reducing the risk that equilibrium-selection failures translate binding leverage regulation into a collapse to no trade.

\item It can be implemented by the myopic strategy of truthful reporting, which minimizes the cognitive burden on agents by eliminating the need to assess knowledge of counterparties and to simulate counterparty responses. 
\end{itemize}

Finally, this paper contributes to the understanding of how cryptography, hardware and \smc technologies can be integrated into market designs to overcome obstacles of trust and coordination and improve economic outcomes.



\appendix
\section{Excess Inventories}
\label{app:excess_inventories}

\begin{proposition*}[No excess inventory in \idt Equilibrium]
\label{corr:no excess inventory in equilibrium}
$D_{r_{mm}} = S_{r_{rm}}$ in equilibrium, i.e. neither $BD$ will carry excess inventories into the interdealer trade.
\end{proposition*}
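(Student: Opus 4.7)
The plan is to argue by contradiction. Suppose $(r_{mm}^\star,r_{rm}^\star)$ is an equilibrium with $D(r_{mm}^\star)\neq S(r_{rm}^\star)$. Without loss of generality assume $D(r_{mm}^\star)>S(r_{rm}^\star)$, so that $Q=\min\{D,S\}=S(r_{rm}^\star)$ and $BD_{mm}$ carries a strictly positive unmatched position in $M$; the mirror case in which $BD_{rm}$ holds the excess is handled by the symmetric argument on $BD_{rm}$'s problem. I will show that $BD_{mm}$ has a strictly profitable unilateral deviation by lowering $r_{mm}$, contradicting equilibrium.

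Holding $r_{rm}^\star$ fixed, $Q$ is locally independent of $r_{mm}$ throughout the excess-inventory regime, so $BD_{mm}$'s payoff can be written as
\[
\Pi_{mm}(r_{mm}) \;=\; \tfrac{1}{2}\bigl(r_{rm}^\star - r_{mm}\bigr)\,S(r_{rm}^\star)\;-\;G_{mm}\!\bigl(D(r_{mm})-S(r_{rm}^\star)\bigr).
\]
Differentiating in $r_{mm}$ yields the sum of two nonpositive terms: the margin derivative $-\tfrac{1}{2}S(r_{rm}^\star)\le 0$ (strict whenever $S(r_{rm}^\star)>0$) and the gap derivative $-G_{mm}'\!\bigl(D(r_{mm})-S(r_{rm}^\star)\bigr)D'(r_{mm})<0$, using $D'>0$ and the strict monotonicity of $G_{mm}$ on $(0,\infty)$. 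Hence $\Pi_{mm}$ is strictly decreasing in $r_{mm}$ throughout the excess-inventory regime, and a small reduction in $r_{mm}$ strictly raises $BD_{mm}$'s payoff.

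By continuity and strict monotonicity of $D$, I can push $r_{mm}$ downward continuously until $D(r_{mm})=S(r_{rm}^\star)$; at that point $gap(BD_{mm})=0$ and $\Pi_{mm}$ strictly exceeds its value at $r_{mm}^\star$. The terminal rate lies in the admissible domain $[0,r_{rm}^\star]$ because the standing assumption $D(0)=0$ rules out excess inventory at the boundary, so the balance point is reached before $r_{mm}$ hits zero. The main point to check carefully is that the deviation path $[r_{mm}^{\text{bal}},r_{mm}^\star]$ lies entirely within the excess-inventory regime---this follows from strict monotonicity of $D$, which gives $D(r_{mm})>S(r_{rm}^\star)$ throughout the open interval, so the derivative expression above is valid uniformly and the endpoint comparison is legitimate. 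Swapping the roles of $BD_{mm}$ and $BD_{rm}$, where raising $r_{rm}$ both expands the margin and, via $S'<0$, shrinks $BD_{rm}$'s gap, settles the complementary case $S(r_{rm}^\star)>D(r_{mm}^\star)$, and jointly these deliver $D(r_{mm})=S(r_{rm})$ in every equilibrium.
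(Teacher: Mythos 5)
Your proposal is correct and follows essentially the same route as the paper's proof in Appendix A: in the excess-inventory regime the quantity $Q$ is locally constant in the deviating dealer's own rate, so lowering $r_{mm}$ (resp.\ raising $r_{rm}$) strictly widens the spread and strictly shrinks the gap penalty, giving a profitable unilateral deviation; the paper phrases this as a constraint on each dealer's best response and intersects the two inequalities, while you phrase it as a WLOG contradiction, but the content is identical. Your additional step of pushing $r_{mm}$ all the way to the balance point is harmless but unnecessary, since the strictly negative local derivative already contradicts optimality.
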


\begin{proof}
Fix $r_{rm}$. If $D(r_{mm})>S(r_{rm})$, then $Q(r_{mm},r_{rm})=S(r_{rm})$ is locally constant in $r_{mm}$, while BD$_{mm}$ has unsold inventory $D(r_{mm})-Q>0$. Decreasing $r_{mm}$ slightly (still keeping $D(r_{mm})>S(r_{rm})$) strictly increases the spread $(r_{rm}-r_{mm})$ and weakly reduces unsold inventory, so BD$_{mm}$'s payoff strictly increases because the inventory penalty is nondecreasing. Hence any best response of BD$_{mm}$ must satisfy $D(r_{mm})\le S(r_{rm})$.

Symmetrically, fixing $r_{mm}$, if $S(r_{rm})>D(r_{mm})$ then increasing $r_{rm}$ slightly raises the spread and weakly reduces BD$_{rm}$'s unsold inventory, so no best response of BD$_{rm}$ can satisfy $S(r_{rm})>D(r_{mm})$. Thus any best response of BD$_{rm}$ satisfies $S(r_{rm})\le D(r_{mm})$.

In a Nash equilibrium $(r_{mm}^*,r_{rm}^*)$, both inequalities hold, hence
$D(r_{mm}^*)=S(r_{rm}^*)$.
\end{proof}

\section{Funding Liquidity}
\label{app:funding_liquidity}

We model a funding commitment as a restriction on a broker-dealer's feasible client quote: the
dealer must choose a client repo rate that induces at least a minimum client-side quantity. This pins down a lower (resp.\ upper) bound on the dealer's quote because client schedules are monotone.

\begin{definition}[Funding commitment]\label{def:funding_commitment}
Fix minimum committed quantities $T^{mm}\ge 0$ and $T^{rm}\ge 0$.
A \textbf{funding commitment} requires the client-side quantities induced by the dealers' quotes to satisfy
\[D(r_{mm}) \ge T^{mm}, \qquad S(r_{rm}) \ge T^{rm}.\]
Equivalently (since $D$ is strictly increasing and $S$ is strictly decreasing),
\[r_{mm} \ge r^{fc}_{mm} := D^{-1}(T^{mm}), \qquad r_{rm} \le r^{fc}_{rm} := S^{-1}(T^{rm}),\]
with the convention that $T^{mm}=0$ gives $r^{fc}_{mm}=0$ and $T^{rm}=0$ gives $r^{fc}_{rm}=r_b$.
\end{definition}

As in the main text, the interdealer trade quantity is
\[Q(r_{mm},r_{rm}) \;:=\; \min\{D(r_{mm}),S(r_{rm})\},\]
and dealers may carry excess inventories when client commitments force $D(r_{mm})\neq S(r_{rm})$. We assume each dealer suffers a (weakly) increasing penalty in its own unsold inventory, as in the Inventory Penalties assumption used to prove Proposition~3.1.

\begin{proposition*******}[Existence of equilibrium with client funding commitments]\label{prop:funding_eqm} There is at least one Nash equilibrium in client repo rates $(r_{mm},r_{rm})$ when one or both broker-dealersis subject to a funding commitment (Definition~\ref{def:funding_commitment}).
\end{proposition*******}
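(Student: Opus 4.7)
The plan is to establish existence via the Debreu--Glicksberg--Fan theorem applied to the compact convex strategy space carved out by the funding commitments. First, I would take the strategy sets to be $\mathcal{R}_{mm} := [r^{fc}_{mm}, r_b]$ and $\mathcal{R}_{rm} := [0, r^{fc}_{rm}]$, which are compact convex intervals by Definition~\ref{def:funding_commitment}. Each dealer's payoff
\[\pi_i(r_{mm},r_{rm}) \;=\; \tfrac12 (r_{rm}-r_{mm})^{+}\,Q(r_{mm},r_{rm}) \;-\; \mathrm{gap}(BD_i)\]
is jointly continuous on $\mathcal{R}_{mm}\times\mathcal{R}_{rm}$ because $D$, $S$, $\min(\cdot,\cdot)$, and $(\cdot)^{+}$ are continuous and $G_{mm},G_{rm}$ are continuous by the Inventory Penalties assumption. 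It then suffices to show that each payoff is quasi-concave in its own action.

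To establish quasi-concavity of $\pi_{mm}(\cdot, r_{rm})$, I would split $\mathcal{R}_{mm}$ at the balance point $\tilde r_{mm}(r_{rm}) := D^{-1}(S(r_{rm}))$ (well-defined by Lemma~\ref{lem:unique-balance}). On the sub-interval $r_{mm} \le \tilde r_{mm}(r_{rm})$, where $Q = D(r_{mm})$ and the gap term vanishes, $\pi_{mm}$ reduces to the strictly concave objective $\tfrac12 (r_{rm}-r_{mm})^{+} D(r_{mm})$ analyzed in Section~\ref{subsec: $BD_{mm}$ problem_certainty}. On the complementary sub-interval $r_{mm} > \tilde r_{mm}(r_{rm})$, the interdealer quantity collapses to $Q = S(r_{rm})$, so raising $r_{mm}$ simultaneously shrinks the spread $(r_{rm}-r_{mm})$ and strictly increases the unsold inventory $D(r_{mm})-S(r_{rm})$; since $G_{mm}$ is strictly increasing on $(0,\infty)$, $\pi_{mm}$ is strictly decreasing there. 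A symmetric argument applied to $\pi_{rm}(r_{mm},\cdot)$ splits $\mathcal{R}_{rm}$ at $\tilde r_{rm}(r_{mm}) := S^{-1}(D(r_{mm}))$, giving strict concavity on the balanced side and strict monotonicity (increasing in this case, since lowering $r_{rm}$ grows inventory) on the excess-inventory side. The Debreu--Glicksberg--Fan theorem then yields a pure-strategy Nash equilibrium of the funding-constrained game.

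The main obstacle is to verify that the two regional payoffs glue together into a genuinely quasi-concave function at the kink $r_{mm} = \tilde r_{mm}(r_{rm})$, where the left and right derivatives need not coincide and a local minimum could in principle arise. I would handle this by computing that the left-derivative (from the concave balanced region) weakly exceeds the right-derivative (from the strictly decreasing excess-inventory region): using $D(\tilde r_{mm})=S(r_{rm})$ and the linear form of the gap, a direct calculation gives
\[\partial^{-}\pi_{mm} - \partial^{+}\pi_{mm} \;=\; D'(\tilde r_{mm})\cdot\tfrac12\,(r_{rm}+\tilde r_{mm}) \;>\; 0,\]
since $D'>0$ and both rates are nonnegative. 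This rules out a local minimum at the kink, secures global quasi-concavity on $\mathcal{R}_{mm}$, and the analogous computation at the $\pi_{rm}$ kink closes the only step where the argument could otherwise fail.
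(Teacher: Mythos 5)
Your proposal is correct and follows essentially the same route as the paper's proof: identical compact convex strategy sets from Definition~\ref{def:funding_commitment}, the same split of the own-rate payoff at the balance point into a concave no-inventory piece and a strictly monotone excess-inventory piece to obtain quasi-concavity, and a fixed-point theorem to close (the paper unpacks Debreu--Glicksberg--Fan into Berge's Maximum Theorem plus Kakutani, which is the same argument). Your extra derivative-jump computation at the kink is a correct but unneeded refinement, since a continuous function that is concave on one subinterval and strictly decreasing on the adjacent one is automatically unimodal, hence quasi-concave.
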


\begin{proof}
The feasible sets $R_{mm}=[r_{mm}^{fc},r_b]$ and $R_{rm}=[0,r_{rm}^{fc}]$ are nonempty, compact, and convex. Payoffs are continuous in $(r_{mm},r_{rm})$ because $D$, $S$, $Q=\min\{D,S\}$, and the penalty functions are continuous.

Fix $r_{rm}$. On the region $\{r_{mm}:D(r_{mm})\le S(r_{rm})\}$ we have $Q=D(r_{mm})$ and zero penalty, so
$\pi_{mm}(r_{mm},r_{rm})=\tfrac12(r_{rm}-r_{mm})D(r_{mm})$ is concave in $r_{mm}$.
On the region $\{r_{mm}:D(r_{mm})> S(r_{rm})\}$, $Q=S(r_{rm})$ is constant, the spread term strictly decreases in $r_{mm}$, and the unsold-inventory penalty weakly increases, so $\pi_{mm}$ is strictly decreasing there. Therefore $\pi_{mm}(\cdot,r_{rm})$ is quasi-concave on $R_{mm}$ and attains a maximum; hence
$BR_{mm}(r_{rm})$ is a nonempty compact interval (thus convex). By symmetry, $BR_{rm}(r_{mm})$ is a nonempty compact interval for each $r_{mm}$.

By Berge's Maximum Theorem, both best-response correspondences are upper hemicontinuous. Hence $BR(r_{mm},r_{rm})=BR_{mm}(r_{rm})\times BR_{rm}(r_{mm})$ maps the compact convex set $R_{mm}\times R_{rm}$ into nonempty compact convex values and is upper hemicontinuous. Kakutani's theorem yields a fixed point, i.e.\ a Nash equilibrium in client repo rates.
\end{proof}

\begin{figure}[H]
\begin{center}
\includegraphics[page=1,width=0.55\textwidth,height = 0.3
\textheight]{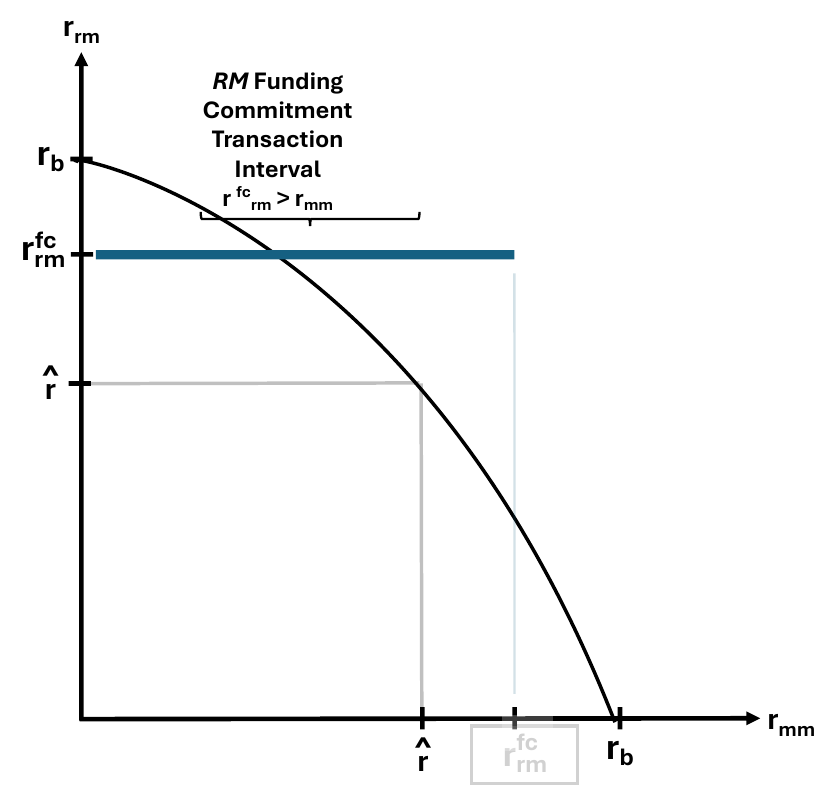}
\caption{Funding Liquidity $BD_{rm}$ lock}
\label{fig:Funding Liquidity_BDrm lock}
\end{center}
\end{figure}

\begin{figure}[H]
\begin{center}
\includegraphics[page=1,width=0.55\textwidth,height = 0.3
\textheight]{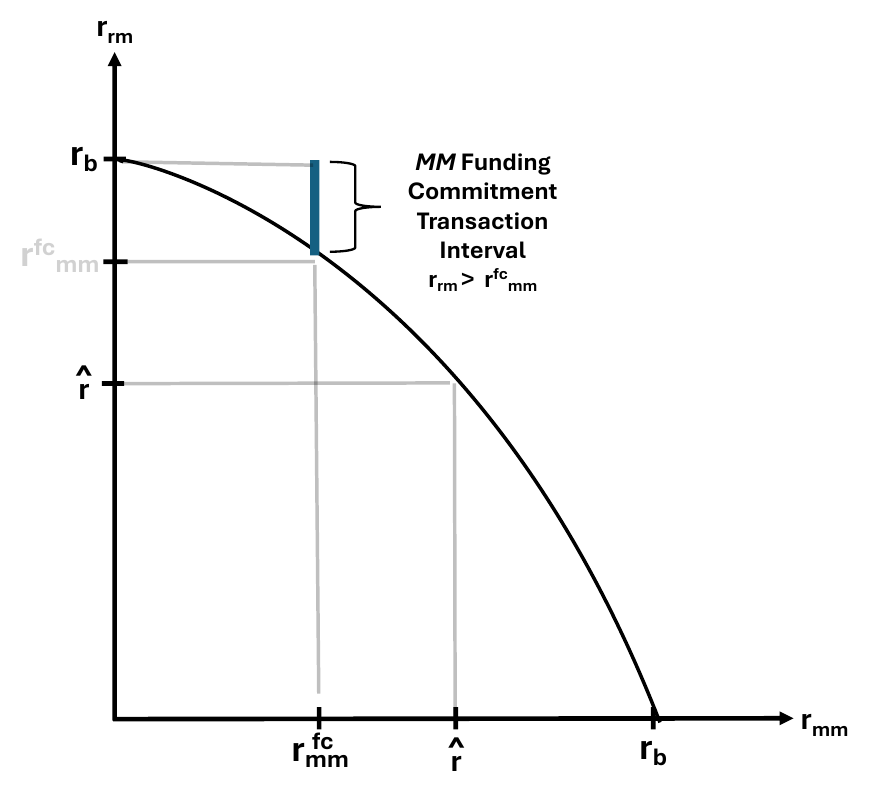}
\caption{Funding Liquidity $BD_{mm}$ lock}
\label{fig:Funding Liquidity_BDmm lock}
\end{center}
\end{figure}

A key takeaway from Proposition \ref{prop:funding_eqm} is that a \bd may carry excess inventories in equilibrium due to a commitment to provide funding liquidity to its client. In the case of a commitment to $RM$, $BD_{rm}$ will carry excess inventories if $r_{mm}$ lies to the left of the intersection of the $RM$ funding commitment interval with the balance transaction surface. In the case of a commitment to $MM$, $BD_{mm}$ will carry excess inventories if $r_{rm}$ lies below the intersection of the $MM$ funding commitment interval with the balance transaction surface.

\section{Proof of Existence of Joint Profit Maximization Equilibrium}
\label{app:Proof_of_JPM_Equilibrium}

\begin{proposition****}[$BD$ Joint Profit Maximization Repo Rates -  Existence]
\label{prop: $BD$ Joint Profit Maximization Repo Rates -  Existence}
There is a  repo rate pair which attains $BD$ joint profit maximization. 
\end{proposition****}

\begin{proof}
Define the feasible trade interval $[0,T_{\max}]$, where $T_{\max}$ is the maximal balanced trade volume
(on the balanced-trade surface), and define the joint intermediation surplus as a function of trade volume:
\[\pi(T) := \bigl(S^{-1}(T)-D^{-1}(T)\bigr)\,T,\qquad T\in[0,T_{\max}].\]
Because $D(\cdot)$ is continuous and strictly increasing and $S(\cdot)$ is continuous and strictly decreasing,
their inverses $D^{-1}$ and $S^{-1}$ exist on $[0,T_{\max}]$ and are continuous there. Hence $\pi(\cdot)$ is continuous on the compact set $[0,T_{\max}]$.

By the Weierstrass Extreme Value Theorem, there exists $T^\ast\in[0,T_{\max}]$ such that
\[\pi(T^\ast) = \max_{T\in[0,T_{\max}]}\pi(T).\]
Let $r_{mm}^\ast := D^{-1}(T^\ast)$ and $r_{rm}^\ast := S^{-1}(T^\ast)$. Then by construction
$D(r_{mm}^\ast)=S(r_{rm}^\ast)=T^\ast$, and the pair $(r_{mm}^\ast,r_{rm}^\ast)$ attains joint broker-dealer profit maximization among balanced allocations. This proves existence.
\end{proof}

\begin{corollary*}[$BD$ Joint Profit Maximization Equilibrium]
\label{corr: $BD$ Joint Profit Maximization Repo Rates-Eqm}
There is a joint profit maximizing equilibrium.
\end{corollary*}

\begin{proof}
Let $T^*\in[0,T_{\max}]$ maximize $\Pi(T):=(S^{-1}(T)-D^{-1}(T))T$ (Proposition~5), and set
$r_{mm}^*:=D^{-1}(T^*)$, $r_{rm}^*:=S^{-1}(T^*)$. Then trade is balanced:
$D(r_{mm}^*)=S(r_{rm}^*)=T^*$.

First consider BD$_{mm}$ holding $r_{rm}=r_{rm}^*$ fixed. Any deviation to $r'_{mm}>r_{mm}^*$ implies $D(r'_{mm})>T^*$ and creates unsold inventory; with nondecreasing inventory penalties this cannot be profitable (same deviation logic as in Proposition~1). For $r_{mm}<r_{mm}^*$ there is no inventory penalty and
BD$_{mm}$'s payoff is proportional to $\phi(r_{mm};r_{rm}^*)=\tfrac12(r_{rm}^*-r_{mm})D(r_{mm})$, whose left-derivative at $r_{mm}^*$ equals
\[\frac12\Big((r_{rm}^*-r_{mm}^*)D'(r_{mm}^*)-T^*\Big).\]
If $T^*$ is interior, the FOC $\Pi'(T^*)=0$ implies
\[r_{rm}^*-r_{mm}^* \;=\; T^*\Big((D^{-1})'(T^*)-(S^{-1})'(T^*)\Big).\]
Using $D'(r_{mm}^*)=1/(D^{-1})'(T^*)$ gives
\[\frac{\partial \phi}{\partial r_{mm}}(r_{mm}^*-) \;=\;
-\frac12\,T^*\,\frac{(S^{-1})'(T^*)}{(D^{-1})'(T^*)} \;>\;0\]
since $(D^{-1})'(T^*)>0$ and $(S^{-1})'(T^*)<0$. Hence lowering $r_{mm}$ reduces BD$_{mm}$'s payoff.

Next consider BD$_{rm}$ holding $r_{mm}=r_{mm}^*$ fixed. Any deviation to $r'_{rm}<r_{rm}^*$ implies $S(r'_{rm})>T^*$ and creates unsold inventory, so it is not profitable. For $r_{rm}>r_{rm}^*$ there is no inventory penalty and payoff is proportional to $\psi(r_{rm};r_{mm}^*)=\tfrac12(r_{rm}-r_{mm}^*)S(r_{rm})$,
with right-derivative at $r_{rm}^*$ equal to
\[\frac12\Big(T^*+(r_{rm}^*-r_{mm}^*)S'(r_{rm}^*)\Big)
= \frac12\,T^*\,\frac{(D^{-1})'(T^*)}{(S^{-1})'(T^*)} \;<\;0,\]
so increasing $r_{rm}$ reduces BD$_{rm}$'s payoff.

Thus $(r_{mm}^*,r_{rm}^*)$ is a constrained outcome and hence a Nash equilibrium by Proposition~2. (If $T^*$ is at an endpoint, the same conclusion follows from one-sided derivatives and the inventory-penalty
argument.)
\end{proof}

\section{Proof of Multiple Equilibrium Under Asymmetric Information}
\label{app:asymmetric_eqm}

This appendix provides a  proof of existence and multiplicity of equilibrium under
asymmetric information. The main idea is to construct volume-targeting Bayesian Nash equilibria: each broker-dealer chooses its client repo rate so that, given its privately observed client state, it transacts a fixed target inventory $T$ with its client. We show that for a nondegenerate set of target volumes $T$, these strategies are best responses for every type (hence no beliefs about the counterparty's type are required).

\subsection{Bayesian environment and equilibrium concept}\label{app:C1}

There are two broker-dealers, BD$_{mm}$ and BD$_{rm}$, and two client-side schedules:

\begin{itemize}
\item Money-market (MM) demand for collateral (equivalently, demand to lend cash): 
$D(\,r_{mm},\theta_{mm}\,)$, where $r_{mm}$ is the client repo rate set by BD$_{mm}$ and
$\theta_{mm}$ is a privately observed MM state.
\item Repo-market (RM) supply of collateral (equivalently, willingness to borrow cash):
$S(\,r_{rm},\theta_{rm}\,)$, where $r_{rm}$ is the client repo rate set by BD$_{rm}$ and
$\theta_{rm}$ is a privately observed RM state.
\end{itemize}

Types satisfy $\theta_{mm}\in[\underline\theta_{mm},\bar\theta_{mm}]$ and
$\theta_{rm}\in[\underline\theta_{rm},\bar\theta_{rm}]$ with a common-knowledge joint distribution. BD$_{mm}$ observes $\theta_{mm}$ only, BD$_{rm}$ observes $\theta_{rm}$ only.

Given chosen rates $(r_{mm},r_{rm})$ and types $(\theta_{mm},\theta_{rm})$, the interdealer traded
quantity is
\[Q \;:=\; \min\{D(r_{mm},\theta_{mm}),\,S(r_{rm},\theta_{rm})\}.\]
We maintain the equal surplus split: each dealer's trading profit is proportional to
$\frac12 (r_{rm}-r_{mm})Q$. We allow (as in the main text) for an inventory penalty that makes it weakly undesirable to choose a client rate that generates unsold inventory:
BD$_{mm}$ is penalized when $D(r_{mm},\theta_{mm})>Q$, and BD$_{rm}$ is penalized when
$S(r_{rm},\theta_{rm})>Q$.

\begin{definition}[Bayesian Nash equilibrium]\label{def:BNE}
A pair of measurable strategies
\[\sigma_{mm}:\theta_{mm}\mapsto r_{mm},\qquad \sigma_{rm}:\theta_{rm}\mapsto r_{rm}\]
is a Bayesian Nash equilibrium if for each type $\theta_{mm}$, $\sigma_{mm}(\theta_{mm})$ maximizes BD$_{mm}$'s expected payoff given $\sigma_{rm}$, and for each type $\theta_{rm}$,
$\sigma_{rm}(\theta_{rm})$ maximizes BD$_{rm}$'s expected payoff given $\sigma_{mm}$.
\end{definition}

\subsection{Regularity and the type-shift structure}

We impose standard monotonicity/concavity in own rate and a convenient (but empirically natural) type shift structure that makes the equilibrium construction clean.

\begin{assumption}[Schedule regularity]\label{ass:regularity}
For each $\theta_{mm}$, $D(\cdot,\theta_{mm})$ is $C^2$, strictly increasing, and strictly concave on its domain, with $D(0,\theta_{mm})=0$.
For each $\theta_{rm}$, $S(\cdot,\theta_{rm})$ is $C^2$, strictly decreasing, and strictly concave on its domain, with $S(r_b,\theta_{rm})=0$. Types shift schedules outward: $\partial_{\theta_{mm}}D(r,\theta_{mm})>0$ and $\partial_{\theta_{rm}}S(r,\theta_{rm})>0$ for all relevant $(r,\theta)$.
\end{assumption}

\begin{assumption}[Multiplicative type shocks]\label{ass:multiplicative}
There exist baseline schedules $\bar D$ and $\bar S$ and positive scalars $\theta_{mm},\theta_{rm}$ such that
\[D(r_{mm},\theta_{mm})=\theta_{mm}\,\bar D(r_{mm}),\qquad
S(r_{rm},\theta_{rm})=\theta_{rm}\,\bar S(r_{rm}),\]
where $\bar D$ is strictly increasing and strictly concave and $\bar S$ is strictly decreasing and strictly concave (and satisfies $\bar D(0)=0$ and $\bar S(r_b)=0$).
\end{assumption}

Assumption~\ref{ass:multiplicative} is a sufficient condition that yields tractable comparative statics. It captures the common case where client ``states'' scale the entire schedule (e.g.\ demand/supply shifters).

\begin{lemma}[Type-contingent inverse rates]\label{lem:inverse-rates}
Fix any target volume $T\in[0,T_{\max}]$ where $T_{\max}:=\underline\theta_{mm}\bar D(\hat r) =\underline\theta_{rm}\bar S(\hat r)$ (the maximal balanced trade at the minimal-type profile). Under Assumption~\ref{ass:multiplicative}, for each $\theta_{mm}$ and $\theta_{rm}$ there exist unique rates
\[r_{mm}(T,\theta_{mm}) := \bar D^{-1}\!\Bigl(\frac{T}{\theta_{mm}}\Bigr),\qquad
r_{rm}(T,\theta_{rm}) := \bar S^{-1}\!\Bigl(\frac{T}{\theta_{rm}}\Bigr),\]
such that $D(r_{mm}(T,\theta_{mm}),\theta_{mm})=T$ and $S(r_{rm}(T,\theta_{rm}),\theta_{rm})=T$. Moreover, $r_{mm}(T,\theta_{mm})$ is strictly decreasing in $\theta_{mm}$ and
$r_{rm}(T,\theta_{rm})$ is strictly increasing in $\theta_{rm}$.
\end{lemma}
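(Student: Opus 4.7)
The plan is to exploit the multiplicative form directly: the equation $D(r_{mm},\theta_{mm})=T$ reduces under Assumption~\ref{ass:multiplicative} to $\theta_{mm}\bar D(r_{mm})=T$, i.e.\ $\bar D(r_{mm})=T/\theta_{mm}$, and the claimed formula $r_{mm}(T,\theta_{mm})=\bar D^{-1}(T/\theta_{mm})$ is then simply the assertion that $\bar D^{-1}$ is well-defined at this argument. The analogous reduction handles $r_{rm}$. The lemma is thus essentially bookkeeping on the multiplicative parametrization.

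First I would verify existence and uniqueness. Under Assumption~\ref{ass:regularity}, carried through Assumption~\ref{ass:multiplicative}, $\bar D$ is continuous and strictly increasing on $[0,\hat r]$ with $\bar D(0)=0$, so it is a homeomorphism onto $[0,\bar D(\hat r)]$; likewise $\bar S$ is a homeomorphism from $[\hat r, r_b]$ onto $[0,\bar S(\hat r)]$. The key feasibility check is that $T/\theta_{mm}$ lies in the range of $\bar D$: by hypothesis $T\le T_{\max}=\underline\theta_{mm}\bar D(\hat r)$ and $\theta_{mm}\ge \underline\theta_{mm}>0$, so $T/\theta_{mm}\le \bar D(\hat r)$; symmetrically $T/\theta_{rm}\le \bar S(\hat r)$. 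Hence $\bar D^{-1}(T/\theta_{mm})$ and $\bar S^{-1}(T/\theta_{rm})$ are well-defined, and uniqueness is immediate from strict monotonicity. Substituting the formulas into $D(\cdot,\theta_{mm})=\theta_{mm}\bar D(\cdot)$ and $S(\cdot,\theta_{rm})=\theta_{rm}\bar S(\cdot)$ yields the defining equations $D(r_{mm}(T,\theta_{mm}),\theta_{mm})=T$ and $S(r_{rm}(T,\theta_{rm}),\theta_{rm})=T$ by direct computation.

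Second I would establish the monotonicity claims as compositions of strictly monotone maps. The map $\theta_{mm}\mapsto T/\theta_{mm}$ is strictly decreasing on $(0,\infty)$, and $\bar D^{-1}$ is strictly increasing (as the inverse of a strictly increasing function), so $r_{mm}(T,\theta_{mm})=\bar D^{-1}(T/\theta_{mm})$ is strictly decreasing in $\theta_{mm}$. Likewise, $\theta_{rm}\mapsto T/\theta_{rm}$ is strictly decreasing, while $\bar S^{-1}$ is strictly decreasing (inverse of a strictly decreasing function), so the composition $r_{rm}(T,\theta_{rm})=\bar S^{-1}(T/\theta_{rm})$ is strictly increasing in $\theta_{rm}$.

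I do not anticipate a substantive obstacle: the content of the lemma is routine given the multiplicative structure. The only care needed is to confirm that $T/\theta$ remains inside the natural range of each baseline schedule so that $\bar D^{-1}$ and $\bar S^{-1}$ apply; this is precisely the role of the restriction $T\in[0,T_{\max}]$ where $T_{\max}$ is computed at the minimal-type profile, which makes the bound uniform across admissible $\theta_{mm},\theta_{rm}$.
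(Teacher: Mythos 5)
Your proof is correct and follows essentially the same route as the paper's: invert the multiplicative form $\theta\,\bar D(r)=T$ and read off monotonicity from the composition of monotone maps. You are in fact slightly more careful than the paper, since you explicitly verify that $T/\theta_{mm}\le\bar D(\hat r)$ (using $T\le T_{\max}$ and $\theta_{mm}\ge\underline\theta_{mm}$) so that $\bar D^{-1}$ is applied within its domain — a step the paper's proof leaves implicit.
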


\begin{proof}
Under Assumption~2, $D(\cdot,\theta_{mm})=\theta_{mm}\bar D(\cdot)$ with $\bar D$ strictly increasing, so $r_{mm}(T,\theta_{mm})=\bar D^{-1}(T/\theta_{mm})$ is well-defined and unique; similarly $r_{rm}(T,\theta_{rm})=\bar S^{-1}(T/\theta_{rm})$. Monotonicity in types follows since $T/\theta$ decreases in $\theta$ and $\bar D^{-1}$ is increasing while $\bar S^{-1}$ is decreasing.
\end{proof}


\subsection{Target-volume strategies and best responses}

Fix a target volume $T\in[0,T_{\max}]$. Define the $T$-targeting strategies:
\begin{equation}\label{eq:T-target-strats}
\sigma^T_{mm}(\theta_{mm}) := r_{mm}(T,\theta_{mm}),\qquad\sigma^T_{rm}(\theta_{rm}) := r_{rm}(T,\theta_{rm}),
\end{equation}
so that each dealer induces its client-side quantity equal to $T$ for every type realization.
We now show that these strategies constitute a Bayesian Nash equilibrium whenever $T$ corresponds to a (constrained) complete-information equilibrium at the minimal-type profile.

\begin{lemma}[Type invariance of unconstrained peaks under multiplicative shocks]\label{lem:peak-invariance}
Under Assumption~\ref{ass:multiplicative}, in the no-inventory region the unconstrained maximizers
(the ``peaks'') of BD$_{mm}$ and BD$_{rm}$ do not depend on their own types. More precisely: for any fixed counterparty rate $r_{rm}$, BD$_{mm}$'s no-inventory objective is proportional to
\[(r_{rm}-r_{mm})\,\bar D(r_{mm}),\]
so its unique maximizer $r_{mm}^{A}(r_{rm})$ is independent of $\theta_{mm}$.
For any fixed counterparty rate $r_{mm}$, BD$_{rm}$'s no-inventory objective is proportional to
\[(r_{rm}-r_{mm})\,\bar S(r_{rm}),\]
so its unique maximizer $r_{rm}^{A}(r_{mm})$ is independent of $\theta_{rm}$.
\end{lemma}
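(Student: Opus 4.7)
The plan is to substitute the multiplicative decomposition of Assumption~\ref{ass:multiplicative} directly into each dealer's no-inventory objective and observe that the own-type parameter factors out as a strictly positive scalar, so it cannot affect the location of the maximizer. There is essentially no work to do beyond writing the objective and invoking strict concavity for uniqueness.

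First I would fix an arbitrary counterparty rate $r_{rm}$ and write $BD_{mm}$'s payoff in the no-inventory region, where by definition $Q = D(r_{mm},\theta_{mm})$ (because $D(r_{mm},\theta_{mm}) \le S(r_{rm},\theta_{rm})$ and the inventory penalty is zero). Using Assumption~\ref{ass:multiplicative} this payoff equals
\[
\tfrac12(r_{rm}-r_{mm})\,\theta_{mm}\,\bar D(r_{mm}).
\]
Since $\theta_{mm}>0$ is a constant with respect to the choice variable $r_{mm}$, the $\argmax$ over $r_{mm}$ coincides with the $\argmax$ of $(r_{rm}-r_{mm})\,\bar D(r_{mm})$, an expression in which $\theta_{mm}$ does not appear. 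Hence $r_{mm}^{A}(r_{rm})$ is a function of $r_{rm}$ alone.

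For the uniqueness claim I would re-use the concavity argument already established for BD$_{mm}$'s problem in Section~\ref{sec:A Model of Intermediated Repo Trade}, applied with $\bar D$ in place of $D$: under Assumption~\ref{ass:multiplicative}, $\bar D$ is strictly increasing and strictly concave, so the reduced objective $(r_{rm}-r_{mm})\bar D(r_{mm})$ is strictly concave in $r_{mm}$ on $[0,r_{rm}]$, vanishes at both endpoints, and therefore admits a unique interior maximizer. The symmetric argument for $BD_{rm}$ is identical: its no-inventory payoff is $\tfrac12(r_{rm}-r_{mm})\theta_{rm}\bar S(r_{rm})$, so factoring out $\theta_{rm}>0$ yields a maximization of $(r_{rm}-r_{mm})\bar S(r_{rm})$, strictly concave in $r_{rm}$ by Assumption~\ref{ass:multiplicative}, with a unique interior maximizer $r_{rm}^{A}(r_{mm})$ that depends only on $r_{mm}$.

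There is no genuine obstacle in this lemma; the whole content is that multiplicative type shocks enter the objective as a positive affine scaling that is invariant under $\argmax$. If I wanted to be careful, the only subtle point to flag is that the no-inventory region itself depends on the types (through the constraint $\theta_{mm}\bar D(r_{mm}) \le \theta_{rm}\bar S(r_{rm})$), but this only restricts where the reduced objective is the relevant one; it does not alter the location of its unconstrained peak, which is what the lemma asserts.
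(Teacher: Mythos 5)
Your proof is correct and follows the same route as the paper's: in the no-inventory region the payoff is $\theta_{mm}$ (resp.\ $\theta_{rm}$) times $(r_{rm}-r_{mm})\bar D(r_{mm})$ (resp.\ $(r_{rm}-r_{mm})\bar S(r_{rm})$), and multiplying by a positive scalar does not move the argmax. Your added remarks on uniqueness via strict concavity and on the type-dependence of the no-inventory region are sensible but not needed beyond what the paper records.
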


\begin{proof}
In the no-inventory region, $Q=\theta_{mm}\bar D(r_{mm})$ for BD$_{mm}$, so its payoff is a positive scalar $\theta_{mm}$ times $(r_{rm}-r_{mm})\bar D(r_{mm})$; multiplying by $\theta_{mm}>0$ does not change the argmax. The BD$_{rm}$ case is identical with $Q=\theta_{rm}\bar S(r_{rm})$.
\end{proof}


\begin{lemma}[Dominant-response condition for $T$-targeting]\label{lem:dominant-response}
Fix $T\in(0,T_{\max}]$ and suppose BD$_{rm}$ plays $\sigma^T_{rm}$.
For any type $\theta_{mm}$, BD$_{mm}$'s best response is $\sigma^T_{mm}(\theta_{mm})$ provided that
\begin{equation}\label{eq:BR-mm-condition}
r_{mm}(T,\theta_{mm}) \;\le\; \min_{\theta_{rm}\in[\underline\theta_{rm},\bar\theta_{rm}]}\,
r_{mm}^{A}\!\bigl(r_{rm}(T,\theta_{rm})\bigr).
\end{equation}
Symmetrically, if BD$_{mm}$ plays $\sigma^T_{mm}$, then for any type $\theta_{rm}$, BD$_{rm}$'s best response is
$\sigma^T_{rm}(\theta_{rm})$ provided that

\begin{equation}\label{eq:BR-rm-condition}
r_{rm}(T,\theta_{rm}) \;\ge\; \max_{\theta_{mm}\in[\underline\theta_{mm},\bar\theta_{mm}]}\,
r_{rm}^{A}\!\bigl(r_{mm}(T,\theta_{mm})\bigr).
\end{equation}
\end{lemma}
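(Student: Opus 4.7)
The plan is to show that, under hypothesis \eqref{eq:BR-mm-condition}, the action $\sigma^T_{mm}(\theta_{mm}) = r_{mm}(T,\theta_{mm})$ is actually an \emph{ex-post} best response for BD$_{mm}$ against $\sigma^T_{rm}$---i.e., it weakly dominates every alternative $r'_{mm}$ for every realization of $\theta_{rm}$---which a fortiori yields a Bayesian best response. Fix $\theta_{mm}$ and any realized $\theta_{rm}$, and write $r^\star := r_{mm}(T,\theta_{mm})$ and $\tilde r_{rm} := r_{rm}(T,\theta_{rm})$, noting that by Lemma~\ref{lem:inverse-rates} both $D(r^\star,\theta_{mm}) = T$ and $S(\tilde r_{rm},\theta_{rm}) = T$, so at $(r^\star,\tilde r_{rm})$ trade is ex-post balanced at $T$.

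I would then split deviations by BD$_{mm}$ into two cases. \textbf{Upward deviations} $r'_{mm} > r^\star$: since $D(\cdot,\theta_{mm})$ is strictly increasing, $D(r'_{mm},\theta_{mm}) > T = S(\tilde r_{rm},\theta_{rm})$, so $Q = T$ is unchanged while the spread $(\tilde r_{rm}-r'_{mm})$ strictly shrinks and unsold inventory $D(r'_{mm},\theta_{mm})-T$ strictly increases; by the inventory-penalty structure invoked in Proposition~1 this strictly lowers BD$_{mm}$'s ex-post payoff. \textbf{Downward deviations} $r'_{mm} < r^\star$: now $D(r'_{mm},\theta_{mm}) < T = S(\tilde r_{rm},\theta_{rm})$, so $Q = D(r'_{mm},\theta_{mm})$, there is no unsold inventory for BD$_{mm}$, and its payoff reduces to the no-inventory objective $\tfrac12(\tilde r_{rm}-r'_{mm})D(r'_{mm},\theta_{mm}) = \tfrac12 \theta_{mm}(\tilde r_{rm}-r'_{mm})\bar D(r'_{mm})$, which by Lemma~\ref{lem:peak-invariance} is strictly concave in $r'_{mm}$ with unique maximizer $r_{mm}^{A}(\tilde r_{rm})$ independent of $\theta_{mm}$.

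Condition \eqref{eq:BR-mm-condition} states exactly that $r^\star \le r_{mm}^{A}(\tilde r_{rm})$ for every $\theta_{rm}$ in the type support, so $r^\star$ lies weakly on the increasing (left) side of the peak of this ex-post no-inventory objective; therefore reducing $r_{mm}$ strictly decreases the payoff and no downward deviation is profitable either. Hence $r^\star$ is ex-post optimal against $\tilde r_{rm}$ for every $\theta_{rm}$, and integrating over the conditional distribution of $\theta_{rm}$ given $\theta_{mm}$ preserves the inequality, giving the Bayesian best-response claim for BD$_{mm}$. The second assertion follows by the symmetric argument, using Proposition~3 (strict monotonicity of $r_{rm}^{A}$) and the type-invariance of the BD$_{rm}$ peak from Lemma~\ref{lem:peak-invariance}, with \eqref{eq:BR-rm-condition} ensuring that $r_{rm}(T,\theta_{rm})$ lies weakly on the decreasing (right) side of $r_{rm}^{A}(r_{mm}(T,\theta_{mm}))$ for every $\theta_{mm}$.

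The only delicate point is the bookkeeping in the upward-deviation case: one must verify that the combined effect of the shrinking spread and the (weakly) rising inventory penalty is strict, which requires the penalty to be strictly increasing on $(0,\infty)$---precisely the assumption made in Section~2. Everything else is a direct application of earlier results (Lemmas~\ref{lem:inverse-rates} and~\ref{lem:peak-invariance}, Propositions~1 and~3), so I do not expect a substantive obstacle beyond organizing the two deviation cases cleanly and noting that ex-post dominance for every $\theta_{rm}$ implies ex-ante (Bayesian) best response.
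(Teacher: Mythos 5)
Your proposal is correct and follows essentially the same route as the paper's proof: rule out upward deviations because they leave $Q$ capped at $T$ while shrinking the spread and creating costly unsold inventory, then observe that on the no-inventory region the objective is single-peaked at $r_{mm}^{A}(r_{rm}(T,\theta_{rm}))$ and condition \eqref{eq:BR-mm-condition} places $r_{mm}(T,\theta_{mm})$ weakly to the left of every such peak, so the boundary point is optimal ex post for every $\theta_{rm}$ (hence in expectation). One minor remark: in the upward-deviation case strictness already follows from the spread term $(\tilde r_{rm}-r'_{mm})T$ strictly decreasing, so you do not actually need the penalty to be strictly increasing there.
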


\begin{proof}
We prove the BD$_{mm}$ statement; the BD$_{rm}$ statement is symmetric.

Fix $T$ and type $\theta_{mm}$, and suppose BD$_{rm}$ plays $\sigma^T_{rm}$, so $S(r_{rm}(T,\theta_{rm}),\theta_{rm})=T$ for every $\theta_{rm}$. If BD$_{mm}$ chooses $r_{mm}$ with $D(r_{mm},\theta_{mm})>T$, then for every $\theta_{rm}$ the traded quantity is capped at $Q=T$ and BD$_{mm}$ carries unsold inventory, which is (weakly) costly. Hence any best response satisfies
$D(r_{mm},\theta_{mm})\le T$, i.e.\ $r_{mm}\le r_{mm}(T,\theta_{mm})$.

On the feasible set $\{r_{mm}:r_{mm}\le r_{mm}(T,\theta_{mm})\}$ there is no inventory penalty and, for each $\theta_{rm}$, the no-inventory objective is single-peaked with peak at $r_{mm}^A(r_{rm}(T,\theta_{rm}))$. Condition~(16) implies
$r_{mm}(T,\theta_{mm})$ lies weakly to the left of every such peak, so the objective is weakly increasing up to the boundary. Thus the best response is attained at $r_{mm}(T,\theta_{mm})$, i.e.\ $\sigma^T_{mm}(\theta_{mm})$.
\end{proof}





\subsection{Existence and multiplicity of Bayesian equilibria}
\label{app:existence_mult_assymetric}

We now connect the above best-response conditions to the complete-information multiplicity established in Section \ref{sec:Multiple Equilibirum} of the main text (and Theorem~4.2 under perfect knowledge).

\begin{theorem****}[Existence of asymmetric-information equilibrium strategies ]
Maintain Assumptions~\ref{ass:regularity} and \ref{ass:multiplicative}. Suppose that at the minimal-type profile $(\underline\theta_{mm},\underline\theta_{rm})$ the complete-information game admits a constrained equilibrium with balanced volume $T^\ast\in(0,T_{\max}]$. Then the $T^\ast$-targeting strategies $(\sigma^{T^\ast}_{mm},\sigma^{T^\ast}_{rm})$ defined in \eqref{eq:T-target-strats} form a Bayesian Nash equilibrium.
\end{theorem****}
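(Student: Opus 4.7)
The plan is to verify the two best-response conditions in Lemma~\ref{lem:dominant-response} for $T=T^*$, since once those hold, the $T^*$-targeting strategies $\sigma^{T^*}_{mm}$ and $\sigma^{T^*}_{rm}$ are mutual best responses type-by-type and hence a Bayesian Nash equilibrium in the sense of Definition~\ref{def:BNE}. Concretely, I must show that for every $(\theta_{mm},\theta_{rm})$ the inequalities
\[r_{mm}(T^*,\theta_{mm}) \le r_{mm}^{A}\bigl(r_{rm}(T^*,\theta_{rm})\bigr)\quad\text{and}\quad r_{rm}(T^*,\theta_{rm}) \ge r_{rm}^{A}\bigl(r_{mm}(T^*,\theta_{mm})\bigr)\]
hold, which reduce to conditions \eqref{eq:BR-mm-condition}--\eqref{eq:BR-rm-condition} after taking the appropriate extrema over types.

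First I would collapse the own-type dependence of the peaks. By Lemma~\ref{lem:peak-invariance}, the peak maps $r_{mm}^{A}(\cdot)$ and $r_{rm}^{A}(\cdot)$ depend only on the counterparty's client rate, so the displayed inequalities depend on the type profile only through the two inverse-rate functions $\theta_{mm}\mapsto r_{mm}(T^*,\theta_{mm})$ and $\theta_{rm}\mapsto r_{rm}(T^*,\theta_{rm})$. Next I would invoke Lemma~\ref{lem:inverse-rates}: the first function is strictly decreasing and the second strictly increasing in its own type, so each is driven in the direction that makes its own inequality hardest precisely at the minimal-type profile. Combined with Proposition~\ref{prop: Notional complementarity} (both peak maps are increasing in the counterparty's rate), the worst case of each inequality across all type realizations is obtained by evaluating both sides at $(\underline\theta_{mm},\underline\theta_{rm})$.

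Finally, the hypothesis that the complete-information game at $(\underline\theta_{mm},\underline\theta_{rm})$ admits a constrained equilibrium with balanced volume $T^*$ says precisely, by Proposition~\ref{prop:NE_characterization} and Definition~\ref{def:constrained_equilibrium}, that the two inequalities hold at that profile. Propagating them to every other type profile via the monotonicity chain above yields \eqref{eq:BR-mm-condition}--\eqref{eq:BR-rm-condition}, at which point Lemma~\ref{lem:dominant-response} delivers the BNE conclusion.

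The main obstacle is bookkeeping rather than mathematics: one must keep straight that, as $\theta_{mm}$ falls toward $\underline\theta_{mm}$, the dealer's own target rate $r_{mm}(T^*,\theta_{mm})$ rises (tightening the BD$_{mm}$ inequality from the left) while simultaneously the BD$_{rm}$ peak $r_{rm}^{A}(r_{mm}(T^*,\theta_{mm}))$ rises (tightening the BD$_{rm}$ inequality from the right), and symmetrically in $\theta_{rm}$. The multiplicative-shock assumption is doing two independent jobs here---killing own-type dependence of the peaks and giving clean comparative statics for the inverse rates---and relaxing Assumption~\ref{ass:multiplicative} would require rebuilding both Lemmas~\ref{lem:peak-invariance} and \ref{lem:inverse-rates} before the argument could go through.
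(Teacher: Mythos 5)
Your proposal is correct and follows essentially the same route as the paper's own proof: reduce to the best-response conditions of Lemma~\ref{lem:dominant-response}, use the monotonicity of the inverse rates (Lemma~\ref{lem:inverse-rates}) and of the peak maps (Proposition on notional complementarity, with type-invariance from Lemma~\ref{lem:peak-invariance}) to show the worst case of each inequality occurs at the minimal-type profile, and then invoke the hypothesized constrained equilibrium there. Your explicit citation of Lemma~\ref{lem:peak-invariance} only makes visible a step the paper uses implicitly.
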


\begin{proof}
Let $T^*\in(0,T_{\max}]$ be the balanced volume of a constrained complete-information equilibrium at the minimal-type profile
$(\underline\theta_{mm},\underline\theta_{rm})$, with associated balanced rates
$r_{mm}^*=r_{mm}(T^*,\underline\theta_{mm})$ and $r_{rm}^*=r_{rm}(T^*,\underline\theta_{rm})$. By constrained optimality, $r_{mm}^*\le r_{mm}^A(r_{rm}^*)$ and $r_{rm}^*\ge r_{rm}^A(r_{mm}^*)$.

By Lemma 2, $r_{mm}(T^*,\theta_{mm})$ is decreasing in $\theta_{mm}$ and $r_{rm}(T^*,\theta_{rm})$ is increasing in $\theta_{rm}$. By Proposition~3 (monotonicity of peaks), $r_{mm}^A(\cdot)$ is increasing and $r_{rm}^A(\cdot)$ is increasing, so the worst-case peak comparisons occur at the minimal types. Therefore the conditions of Lemma 4 hold for target $T^*$, implying that for every type, the best response to $\sigma^{T^*}$ is to also target $T^*$. Hence $(\sigma^{T^*}_{mm},\sigma^{T^*}_{rm})$ is a BNE.
\end{proof}

\begin{theorem*****}[Multiplicity of asymmetric-information equilibrium strategies ]\label{thm:C-multiple}
Maintain Assumptions~\ref{ass:regularity} and \ref{ass:multiplicative}. Suppose that at the minimal-type profile $(\underline\theta_{mm},\underline\theta_{rm})$ the complete-information game admits at least two distinct constrained equilibria with balanced volumes $T_1\neq T_2$ in $(0,T_{\max}]$. Then the asymmetric information game admits at least two distinct Bayesian Nash equilibria, namely the $T_1$-targeting and $T_2$-targeting equilibria.

In particular, if the minimal-type complete-information game admits a continuum of constrained equilibrium volumes (as in Theorem 4.2), then the asymmetric information game admits a continuum of Bayesian Nash equilibria indexed by those volumes.
\end{theorem*****}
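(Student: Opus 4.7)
The plan is to leverage Theorem 4 (existence of asymmetric-information equilibrium strategies) as a pointwise construction and then argue distinctness. First I would note that Theorem 4 takes as input any constrained complete-information equilibrium volume $T^{\ast}\in(0,T_{\max}]$ at the minimal-type profile $(\underline{\theta}_{mm},\underline{\theta}_{rm})$ and delivers that the $T^{\ast}$-targeting strategies $(\sigma^{T^{\ast}}_{mm},\sigma^{T^{\ast}}_{rm})$ defined in equation \eqref{eq:T-target-strats} constitute a Bayesian Nash equilibrium. The hypothesis of Theorem 5 supplies two such volumes $T_{1}\ne T_{2}$, so applying Theorem 4 separately to each produces two BNEs $(\sigma^{T_{1}}_{mm},\sigma^{T_{1}}_{rm})$ and $(\sigma^{T_{2}}_{mm},\sigma^{T_{2}}_{rm})$.

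Next I would verify that these two equilibria are genuinely distinct as strategy profiles. Under Assumption 2 (multiplicative type shocks), Lemma 2 gives the closed-form expressions $\sigma^{T}_{mm}(\theta_{mm})=\bar{D}^{-1}(T/\theta_{mm})$ and $\sigma^{T}_{rm}(\theta_{rm})=\bar{S}^{-1}(T/\theta_{rm})$. Since $\bar{D}^{-1}$ is strictly increasing (as the inverse of a strictly increasing function) and $T_{1}/\theta_{mm}\ne T_{2}/\theta_{mm}$ for every $\theta_{mm}>0$ whenever $T_{1}\ne T_{2}$, we obtain $\sigma^{T_{1}}_{mm}(\theta_{mm})\ne\sigma^{T_{2}}_{mm}(\theta_{mm})$ at every type in the support. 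The analogous conclusion holds for the RM dealer via strict monotonicity of $\bar{S}^{-1}$. Hence the two BNEs disagree as measurable maps type-by-type, not merely almost everywhere, and are distinct.

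For the continuum claim, I would apply Theorem 2 (multiplicity under perfect knowledge) at the minimal-type profile, treating the schedules as $\underline{\theta}_{mm}\bar{D}(\cdot)$ and $\underline{\theta}_{rm}\bar{S}(\cdot)$ — which satisfy the monotonicity and concavity hypotheses of Theorem 2 by Assumption 1. Theorem 2 then yields some $\varepsilon>0$ such that every $r_{mm}\in(0,\varepsilon]$ supports a constrained balanced equilibrium of the complete-information game at the minimal-type profile, with associated balanced volume $T(r_{mm})=\underline{\theta}_{mm}\bar{D}(r_{mm})$. Because $\bar{D}$ is strictly increasing and continuous, the map $r_{mm}\mapsto T(r_{mm})$ traces out a nondegenerate interval of volumes in $(0,\underline{\theta}_{mm}\bar{D}(\varepsilon)]$. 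Applying Theorem 4 pointwise along this interval and invoking the injectivity argument of the previous paragraph yields a pairwise-distinct continuum of BNEs indexed by $T$.

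The main (and only nontrivial) obstacle is purely a bookkeeping check: confirming that each volume produced by Theorem 2 at the minimal-type profile indeed satisfies the hypothesis of Theorem 4, namely that it corresponds to a \emph{constrained} complete-information equilibrium rather than merely a balanced rate pair. This is immediate from the construction inside the proof of Theorem 2, which establishes the strict inequalities $r_{mm}<r^{A}_{mm}(B(r_{mm}))$ and $B(r_{mm})>r^{A}_{rm}(r_{mm})$ characterizing a constrained equilibrium in the sense of Definition 2. Once this check is in hand, the multiplicity result follows essentially as a direct product of Theorem 4 and the multiplicity of Theorem 2.
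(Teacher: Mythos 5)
Your proposal is correct and follows essentially the same route as the paper: apply the asymmetric-information existence theorem pointwise to each constrained minimal-type equilibrium volume $T_k$, then deduce distinctness of the resulting strategy profiles from the strict monotonicity of $\bar D^{-1}$ (and $\bar S^{-1}$) in $T$, with the continuum case handled by indexing over the interval of constrained volumes supplied by the complete-information multiplicity theorem. Your extra bookkeeping step --- verifying that the volumes produced by the perfect-knowledge multiplicity construction are genuinely \emph{constrained} equilibria via the strict inequalities $r_{mm}<r^{A}_{mm}(B(r_{mm}))$ and $B(r_{mm})>r^{A}_{rm}(r_{mm})$ --- is a detail the paper leaves implicit, but it does not change the argument.
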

\begin{proof}
Apply Theorem~3 to each constrained minimal-type equilibrium volume $T_k$ ($k=1,2$). This yields that $(\sigma^{T_k}_{mm},\sigma^{T_k}_{rm})$ is a Bayesian Nash equilibrium for each $k$.

If $T_1\ne T_2$, then for any fixed type $\theta_{mm}$,
$r_{mm}(T,\theta_{mm})=\bar D^{-1}(T/\theta_{mm})$ is strictly increasing in $T$ (since $\bar D^{-1}$ is increasing), so $\sigma^{T_1}_{mm}(\theta_{mm})\ne\sigma^{T_2}_{mm}(\theta_{mm})$; similarly for BD$_{rm}$. Hence the equilibria are distinct. The continuum statement follows identically when the set of constrained minimal-type equilibrium volumes is an interval.
\end{proof}





\section{Truthful Reporting of Client Schedules}
\label{app:truthful_reporting}

In this appendix we provide a rigorous proof that it is a dominant strategy for  $BD$'s to truthfully report their client schedules in the smart contract protocol of Section  \ref{subsec:smart_contract_protocol}. 

\begin{proposition*****}[Truthful reporting of client schedules]\label{prop:truthful_schedules}
Fix the protocol step in which broker-dealers simultaneously report client schedules to the smart contract
(Step~2 of the protocol). Let the feasible trade volumes lie on a fixed grid $\mathcal{T}=\{T_1,\dots,T_I\}$.

For each $T\in\mathcal{T}$, let $r^{c}_{mm}(T)$ denote the $MM$ client's reservation schedule (the minimum repo rate at which $MM$ is willing to lend at volume $T$), and let $r^{c}_{rm}(T)$ denote the $RM$ client's reservation schedule (the maximum repo rate at which $RM$ is willing to borrow at volume $T$).

Assume:
\begin{enumerate}
\item \textbf{Client authorization / commitment.} A reported schedule must be accompanied by a client signature (or other binding commitment, such as deposit) that obligates the client--dealer pair to execute the client leg
at the reported rate for the volume selected by the smart contract.
\item \textbf{No side agreements (no collusion).} Client--dealer pairs execute at the reported rates; in particular, they do not use off-protocol side payments or renegotiation to implement terms different from
those reported to the smart contract.

\item \textbf{Client rationality constraint.} MM will not authorize a schedule that offers it a rate below its reservation schedule, i.e.\ any authorized report $\hat r_{mm}(\cdot)$ satisfies
$\hat r_{mm}(T)\ge r^{c}_{mm}(T)$ for all $T\in\mathcal{T}$.
Similarly, RM will not authorize a schedule that charges it a rate above its reservation schedule, i.e.\
any authorized report $\hat r_{rm}(\cdot)$ satisfies
$\hat r_{rm}(T)\le r^{c}_{rm}(T)$ for all $T\in\mathcal{T}$.
\item \textbf{Smart-contract selection rule.} Given reported schedules $\hat r_{mm}(\cdot)$ and
$\hat r_{rm}(\cdot)$ and any additional exogenous admissibility constraints (e.g.\ minimum-spread hurdles),
the smart contract selects a trade volume $T^\ast\in\mathcal{T}$ that maximizes joint intermediation profit
\[\Pi(T)\;:=\;\bigl(\hat r_{rm}(T)-\hat r_{mm}(T)\bigr)\,T\]
over the admissible set (and aborts with payoff $0$ if no admissible $T$ exists).
\item \textbf{Surplus split.} Each dealer's payoff from the selected trade is a fixed share of $\Pi(T^\ast)$
(e.g.\ $\frac12\Pi(T^\ast)$ under equal split).
\end{enumerate}
Then it is a dominant strategy for BD$_{mm}$ to report $\hat r_{mm}(T)=r^{c}_{mm}(T)$ for all
$T\in\mathcal{T}$, and a dominant strategy for BD$_{rm}$ to report $\hat r_{rm}(T)=r^{c}_{rm}(T)$ for all
$T\in\mathcal{T}$.
\end{proposition*****}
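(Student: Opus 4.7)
The plan is to establish weak dominance by a pointwise comparison of the smart contract's selection objective, exploiting the fact that truthful reporting lies at the boundary of what the client will authorize under Assumption~3. Fix an arbitrary report $\hat r_{rm}(\cdot)$ by $BD_{rm}$; I will show that among all reports $\hat r_{mm}(\cdot)$ that $BD_{mm}$ can feasibly submit, the truthful report maximizes $BD_{mm}$'s payoff. The symmetric claim for $BD_{rm}$ will follow by relabeling.

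First I would use the client rationality constraint to bound the feasible reports: because $MM$ will sign only schedules with $\hat r_{mm}(T)\ge r^{c}_{mm}(T)$ at every $T$, the truthful schedule is the pointwise lower envelope of the admissible reports. This immediately delivers the pointwise inequality
\[
\bigl(\hat r_{rm}(T)-r^{c}_{mm}(T)\bigr)T\;\ge\;\bigl(\hat r_{rm}(T)-\hat r_{mm}(T)\bigr)T
\qquad\text{for every } T\in\mathcal{T}.
\]
Second, I would observe that any exogenous admissibility constraint used by the smart contract in Assumption~4, in particular the minimum-spread hurdle, is weakly monotone in the spread, so lowering $\hat r_{mm}$ pointwise weakly enlarges the admissible set of grid volumes. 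Third, combining the pointwise domination of the objective with the containment of feasible sets yields a weak inequality for the maximized values, and Assumption~5 translates this into a weak inequality for payoffs, giving dominance. The $BD_{rm}$ case is identical after reversing the direction of the client rationality bound, since $\hat r_{rm}(T)\le r^{c}_{rm}(T)$ makes truthful reporting the pointwise upper envelope of the spread.

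The main obstacle will be handling the fact that a deviation from truth simultaneously alters the selection objective, the feasible set, and the selected volume $T^{\ast}$; one cannot directly compare payoffs at a common $T$. The argument goes through only because both effects move in the same direction under a deviation away from truth, so there is no trade-off between maximizing the spread at a fixed $T$ and enlarging the admissible set. A secondary subtlety I would flag is that Assumption~2 (no side agreements) is doing the substantive work in preventing a dealer from effectively purchasing a tighter report from its client through an off-protocol transfer; without it the pointwise envelope would shift and the dominance argument would fail.
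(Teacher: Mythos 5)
Your proposal is correct and follows essentially the same route as the paper's proof in Appendix E: the pointwise domination of the joint-surplus objective from the client-rationality bound, combined with the inclusion of admissible sets under a spread-monotone filter, yields the weak inequality of maximized values and hence dominance. The subtlety you flag about the deviation simultaneously shifting the objective and the feasible set is precisely the point the paper's set-inclusion step handles, and your resolution is the same.
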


\begin{proof}
We prove the claim for BD$_{mm}$; the argument for BD$_{rm}$ is symmetric.

\medskip
\noindent\textbf{BD$_{mm}$.}
Fix an arbitrary report $\hat r_{rm}(\cdot)$ by BD$_{rm}$ and fix any exogenous admissibility constraints
(including the spread constraint, if any). Consider any authorized report $\hat r_{mm}(\cdot)$ by
BD$_{mm}$. By the client rationality constraint, for every $T\in\mathcal{T}$ we have
$\hat r_{mm}(T)\ge r^{c}_{mm}(T)$.

For any $T\in\mathcal T$, $ \hat r_{mm}(T)\ge r^c_{mm}(T)$ implies
\[(\hat r_{rm}(T)-\hat r_{mm}(T))T \;\le\; (\hat r_{rm}(T)-r^c_{mm}(T))T.\]
Moreover, any admissibility constraint that is weakly harder to satisfy when $\hat r_{mm}(T)$ increases (e.g.\ a minimum-spread filter) yields the set inclusion
$A(\hat r_{mm},\hat r_{rm})\subseteq A(r^c_{mm},\hat r_{rm})$. Therefore the maximum attainable joint surplus under $\hat r_{mm}$ is weakly below that under $r^c_{mm}$, so BD$_{mm}$ cannot
profitably deviate from truthful reporting.

\medskip
\noindent\textbf{BD$_{rm}$ (symmetric).}
Fix $\hat r_{mm}(\cdot)$. Any authorized report $\hat r_{rm}(\cdot)$ must satisfy $\hat r_{rm}(T)\le r^{c}_{rm}(T)$ for all $T$.
Lowering $\hat r_{rm}(T)$ weakly reduces $\bigl(\hat r_{rm}(T)-\hat r_{mm}(T)\bigr)T$ pointwise and can only
shrink the set of $T$ satisfying any minimum-spread constraint. Hence the same set-inclusion and pointwise-comparison argument implies BD$_{rm}$ weakly maximizes its payoff by reporting
$\hat r_{rm}(T)=r^{c}_{rm}(T)$ for all $T$, i.e.\ truthful reporting is a dominant strategy for BD$_{rm}$.
\end{proof}


\section{Details for Smart Contract Implementation}
\label{app:crypto_implementation} 

This appendix discusses the cryptography required to implement the smart contract functions on a stand alone server and a blockchain. 

\paragraph{Implementation A: \smc inside a TEE (hardware‑backed privacy, ZK auditability).}
The comparison/optimization code that parses $S_{r_{rm}},D_{r_{mm}}$, computes equation \ref{eq:sc_objective}, selects $\{r_{rm,*},r_{mm,*},T^{*}\}$,
and runs the proof generator is loaded into a TEE. Remote attestation binds a code hash (and the verifier/proving keys) to the enclave. After verifying attestation, each $BD$ transmits its full schedule over the attested secure channel (or sends schedule elements sufficient for membership checking). Inside the enclave, the \smc computes $T^{*}=\arg\max_{T} \pi(T)$ over the admissible set (e.g., intersection/feasible pairs induced by the two schedules and hurdle‑rate filter), then outputs: (a) $E_{rm},E_{mm}$, the per \bd encryptions of $T^{*}$, and (b) a zk‑SNARK/NIZK $\pi$ proving membership and optimality of $\{r_{rm,*},r_{mm,*},T^{*}\}$relative to $C_{rm},C_{mm}$ and the attested program. Each $BD$ verifies the attestation and $\pi$; if either check fails or a deadline expires, the protocol aborts. Privacy rests on enclave isolation; auditability rests on the public verification of $\pi$ (not on trust in the operator).\footnote{ Costan and Devadas \cite{costan2016sgx}, Bauman et.al. \cite{baumann2014haven} and  Schuster et.al \cite{schuster2015vc3} are descriptions of TEE implementations.}

\paragraph{Implementation B: \smc as a public‑chain contract (trustless verification, on‑chain audit).}
The contract records $C_{rm},C_{mm}$ and enforces the phase timeouts. Off‑chain, a prover computes $\{r_{rm,*},r_{mm,*},T^{*}\}$from the private schedules and generates a zk‑SNARK/NIZK $\pi$ for the statement above (with schedule‑size bounds fixed ex ante). On‑chain, the contract verifies $\pi$ against $(C_{rm},C_{mm})$ and, if valid before the deadline, publishes only $E_{rm}= Enc_{pk_{rm}}(T^{*})$ and $E_{mm}=ENC_{pk_{mm}}(T^{*})$. The plaintext $\{r_{rm,*},r_{mm,*},T^{*}\}$never appears on‑chain; each $BD$ decrypts locally. If commitments or proof are not posted by their respective deadlines, the contract cancels without leakage. Anyone can re‑verify $\pi$, yielding public auditability.

The off-chain phase runs either as a two-party computation between $BD_{rm}$ and $BD_{mm}$ on their own machines or via a small set of independent multiparty computation (``MPC'') servers. In all cases, each $BD$’s schedule is secret-shared (or garbled) so that no single processor ever observes the other’s plaintext input. The prover jointly computes $\{r_{rm,*}, r_{mm,*},T^{*}\}$ and produces a zk-proof $\pi$ that (i) the private inputs used open the on-chain commitments $(C_{rm},C_{mm})$, (ii) $\{r_{rm,*}, r_{mm,*},T^{*}\}$ is admissible and maximizes the objective (equation \ref{eq:sc_objective}) over the bounded grid, and (iii) the posted ciphertexts $CT_{rm},CT_{mm}$ are valid encryptions of the same trades. Thus privacy derives from MPC (simulation-based), integrity from input-binding to commitments, and auditability from on-chain verification of $\pi$; only the encrypted $\{r_{rm,*}, r_{mm,*},T^{*}\}$ is delivered to each $BD$, and missed deadlines trigger cancellation.

\paragraph{Remarks.}
Circuit size and proving cost scale with schedule bounds and the complexity of $\i(\cdot)$; using vector commitments (Merkle/polynomial) and bounded grids $\mathbb{T}$ renders the “max” check implementable via batched pairwise comparisons inside the circuit. The approach is agnostic to the concrete proof system (zk‑SNARK or NIZK) and to platform: any TEE with remote attestation or any chain with a verifier suffices. In both implementations, $BD_{rm}$ and $BD_{mm}$ learn only the same $\{r_{rm,*},r_{mm,*},T^{*}\}$, sent privately by the \smc, and can independently verify that it is exactly the trade the mechanism should select under equation \ref{eq:JPMconstrained}.

\end{document}